\newenvironment{breakablealgorithm}
  {
   \begin{center}
     \refstepcounter{algorithm}
     \hrule height.8pt depth0pt \kern2pt
     \renewcommand{\caption}[2][\relax]{
       {\raggedright\textbf{\fname@algorithm~\thealgorithm} ##2\par}%
       \ifx\relax##1\relax 
         \addcontentsline{loa}{algorithm}{\protect\numberline{\thealgorithm}##2}%
       \else 
         \addcontentsline{loa}{algorithm}{\protect\numberline{\thealgorithm}##1}%
       \fi
       \kern2pt\hrule\kern2pt
     }
  }{
     \kern2pt\hrule\relax
   \end{center}
  }
\DeclareMathOperator*{\argmin}{arg\,min}
\newcommand{\AlignFootnote}[1]{%
  \ifmeasuring@
  \else
    \iffirstchoice@
      \footnote{#1}%
    \fi
  \fi}
\newtheorem{theorem}{Theorem}
\newtheorem{lemma}{Lemma}
\newtheorem{corollary}{Corollary}
\newtheorem{proposition}{Proposition}
\newcommand{\vct}[1]{#1}
\newcommand\ddfrac[2]{\frac{\displaystyle #1}{\displaystyle #2}}
\newcommand\Prb[1]{\mathbb{P}\left[#1\right]}
\newcommand\E[1]{\mathbb{E}\left[#1\right]}
\newcommand\V[1]{\mathbb{V}\left[#1\right]}
\newcommand\paren[1]{\left( #1 \right)}
\newcommand\partiald[1]{\frac{\partial}{\partial #1}}
\newcommand\partialD[2]{\dfrac{\partial #1}{\partial #2}}
\newcommand\Sum[2]{\displaystyle\sum_{#1=1}^{#2}}
\newcommand\Sumz[2]{\displaystyle\sum_{#1=0}^{#2}}
\newcommand\quotes[1]{``#1''}
\newcommand{\indep}{\perp \!\!\! \perp}
\newcommand\Cov[1]{\text{Cov}\left(#1\right)}
\newcommand\Corr[1]{\text{Corr}\left(#1\right)}
\newcommand\ltnorm[1]{\lVert #1 \rVert_2}
\newcommand\lnorm[2]{\lVert #1 \rVert_{#2}}
\newcommand\Ind[1]{\mathbbm{1}_{#1}}
\newcommand\ttttvec[4]{\begin{bmatrix} #1 \\ #2 \\ #3 \\ #4 \end{bmatrix}}
\newcommand\tttvec[3]{\begin{bmatrix} #1 \\ #2 \\ #3 \end{bmatrix}}
\newcommand\tttvechT[3]{\begin{bmatrix} #1 & #2 & #3 \end{bmatrix}^T}
\newcommand\ttvec[2]{\begin{bmatrix} #1 \\ #2  \end{bmatrix}}
\newcommand\N[2]{\mathcal{N}\left(#1,\s #2\right)}
\newcommand\set[1]{\left\{#1\right\}}
\global\long\def\sigmahat{\hat{\sigma}}
\global\long\def\rhohat{\hat{\rho}}
\global\long\def\thetahat{\hat{\theta}}
\global\long\def\pihat{\hat{\pi}}
\global\long\def\alphahat{\hat{\alpha}}
\global\long\def\oni{\vct{\mathbf{1}}_{n_i}}
\global\long\def\on{\vct{\mathbf{1}}_{n}}
\global\long\def\Rho{\mathrm{P}}
\global\long\def\d{^{(d)}}
\global\long\def\dpr{^{(d')}}
\global\long\def\s{\text{ }}
\global\long\def\SumiN{\displaystyle\sum_{i=1}^N}
\global\long\def\ProdiN{\displaystyle\prod_{i=1}^N}
\global\long\def\Sumd{\displaystyle\sum_{d\in\mathcal{D}}}
\global\long\def\d{^{(d)}}
\global\long\def\ddd{,\ldots,}%
\global\long\def\poh{^{\frac{1}{2}}}%
\global\long\def\bX{\mathbf{X}}%
\global\long\def\bS{\mathbf{S}}%
\global\long\def\bs{\mathbf{s}}%
\global\long\def\bx{\mathbf{x}}%
\global\long\def\vbY{\vct{\mathbf{Y}}}
\global\long\def\vbmu{\vct{\boldsymbol\mu}}
\global\long\def\bY{\mathbf{Y}}%
\global\long\def\bh{\mathbf{h}}%
\global\long\def\muhat{\hat{\mu}}
\global\long\def\d{{^{(d)}}}
\global\long\def\inv{^{-1}}
\global\long\def\s{\text{ }}
\newcommand{\YY}[2][NULL]{\ifthenelse{\equal{#1}{NULL}}{\vct{Y}_{#2}}{\paren{\vct{Y}_{#2}}_{#1}}}
\newcommand{\ee}[2][NULL]{\ifthenelse{\equal{#1}{NULL}}{\vct{\varepsilon}_{#2}}{\paren{\vct{\varepsilon}_{#2}}_{#1}}}
\newcommand{\eex}[2][NULL]{\ifthenelse{\equal{#1}{NULL}}{\vct{\varepsilon}_{#2}^{(r)}}{\paren{\vct{\varepsilon}_{#2}^{(r)}}_{#1}}}
\newcommand{\px}[2][NULL]{\ifthenelse{\equal{#1}{NULL}}{P_{#2}^{[r]}}{P_{#2}^{[#1]}}}
\newcommand{\deltax}[2][NULL]{\ifthenelse{\equal{#1}{NULL}}{\delta_{#2}^{(r)}}{\delta_{#2}^{(#1)}}}
\newcommand{\epsilonx}[2][NULL]{\ifthenelse{\equal{#1}{NULL}}{\epsilon_{#2}^{<r>}}{\epsilon_{#2}^{<#1>}}}
\newcommand{\cx}[1]{c^{<r>}_{#1}}
\newcommand{\sx}[1]{s^{<r>}_{#1}}
\newcommand{\ssx}[1]{s^{2<r>}_{#1}}
\global\long\def\sr{^{<r>}}
\global\long\def\srt{^{2<r>}}
\global\long\def\sR{^{<1>}}
\global\long\def\sRt{^{2<1>}}
\global\long\def\sNR{^{<0>}}
\global\long\def\sNRt{^{2<0>}}
\global\long\def\di{^{(d_i)}}
\global\long\def\sss{n}
\global\long\def\on{\vct{\mathbf{1}}_{\sss}}
\global\long\def\Spi{S_\pi}%
\newcommand{\U}[1]{U_{#1}}
\newcommand{\Ui}[1]{U_{i \mid #1}}
\global\long\def\opsn{o_p\paren{\sqrt{N}^{-1}}}
\global\long\def\Opsn{O_p\paren{\sqrt{N}^{-1}}}
\title{Multilevel Primary Aim Analyses of Clustered SMARTs:\\
With Applications in Health Policy}
\author[1,2]{Gabriel Durham}
\author[1]{Anil Battalahalli}
\author[3,4]{Amy Kilbourne}
\author[5]{Andrew Quanbeck}
\author[2,6]{Wenchu Pan}
\author[2]{Tim Lycurgus}
\author[1,2]{Daniel Almirall}
\affil[1]{Department of Statistics, University of Michigan}
\affil[2]{Survey Research Center, University of Michigan Institute for Social Research}
\affil[3]{Department of Learning Health Sciences, University of Michigan Medical School}
\affil[4]{Office of Research and Development, U.S. Department of Veterans Affairs}
\affil[5]{Department of Family Medicine and Community Health, University of Wisconsin}
\affil[6]{Department of Biostatistics, University of Michigan}
\date{\vspace{-1cm}}
\newcounter{suppsection}
\begin{document}

\maketitle

\begin{abstract}
Many health policies or programs can be conceptualized as adaptive interventions. An adaptive intervention is a sequence of decision rules that guide the provision of actions (intervention options) at critical decision points based on the evolving need of recipients, including their response to prior actions. 
In many health policy settings, adaptive interventions target a population of clusters (e.g., schools), with the ultimate intent of impacting outcomes at the level of individuals within the clusters (e.g., mental health care providers in the schools). 
Health policy researchers can use clustered, sequential, multiple assignment, randomized trials (SMARTs) to answer important scientific questions concerning clustered adaptive interventions. A common primary aim is to compare the mean of a nested, end-of-study outcome between two clustered adaptive interventions. 
However, existing methods are not suitable when the primary outcome in a clustered SMART is nested and longitudinal (e.g., repeated outcome measures nested within mental healthcare providers, and mental healthcare providers nested within schools). This manuscript proposes a three-level marginal mean modeling and estimation approach for comparing adaptive interventions in a clustered SMART. The proposed method enables policy analysts to answer a wider-array of scientific questions in the marginal comparison of clustered adaptive interventions. Further, relative to using an existing two-level method with a nested, but non-longitudinal, end-of-study outcome, the proposed method benefits from improved statistical efficiency. With the this approach, we examine longitudinal comparisons of adaptive interventions for improving school-based mental healthcare and contrast its performance with existing approaches for studying static (i.e., singly measured) end-of-study outcomes. Methods were motivated by the Adaptive School-Based Implementation of CBT (ASIC) study, a clustered SMART designed to construct an adaptive health policy to improve the adoption of evidence-based CBT by mental healthcare professionals in high schools across Michigan.
\end{abstract}

\newpage
\tableofcontents

\newpage

\section{Introduction}\label{sec:intro}
Adaptive interventions, also known as dynamic treatment regimes, are protocols used to guide decision-making at critical decision points during intervention \citep{Laber2014}. This includes guidance on whether and when to modify (e.g., augment, intensify, or switch) the provision of intervention options, as well as what information should inform such decisions.

In health policy settings, intervention often targets a cluster of individuals. A cluster is defined as an intact group of individuals, often formed through naturally occurring organizational or administrative affiliations. For example, in an attempt to improve the behavior of clinicians (e.g., nurse-practitioners, doctors, or mental health providers), a health policy intervention may target hospitals. Doing so can address widespread barriers to patient wellbeing, as well as promote supportive organizational environments to foster development of medical professionals. We call such interventions (i.e., those that act on clusters of individuals but are designed to improve individual outcomes) \textit{clustered} interventions. This manuscript concerns clustered adaptive interventions ({cAIs}); i.e., adaptive interventions for which the sequence of decision rules guiding intervention delivery is based on the baseline conditions and evolving needs of each pre-determined cluster of individuals \citep{NeCamp2017}. Like clustered interventions at large, a defining feature of cAIs is cluster-level action with the intent of impacting outcomes at the individual-level.

Such interventions have natural applications to public policy, as they leverage existing social structures (e.g., schools, hospitals, communities). Furthermore, cAIs are particularly useful in implementation science, which focuses on improving the adoption and fidelity of evidence-based interventions in real-world settings \citep{Bauer2020}. By leveraging pre-existing administrative clusters, such as schools and hospitals, cAIs can help address systemic barriers to effective implementation and support sustainable practice change \citep{Kilbourne2014, Kilbourne2018, Quanbeck2020}.

Sequential, multiple assignment, randomized trials ({SMARTs}) form a class of experimental designs which act as valuable data collection tools for optimizing the construction of adaptive interventions. Through sequential randomization, SMARTs offer intervention designers an opportunity to analyze a multitude of questions concerning adaptive intervention construction \citep{NahumShani2012}.
Clustered SMARTs ({cSMARTs}) are a class of SMARTs which utilize cluster-level randomization and treatment, but for which the primary outcome is measured at the individual level. Subsequently, researchers can use cSMARTs to address important scientific questions preventing the construction of high quality clustered adaptive interventions. Scientists have employed cSMARTs in a wide variety of health application areas, including school-based healthcare \citep{Kilbourne2018}, mental health \citep{Kilbourne2014}, substance abuse \citep{Quanbeck2020, Fernandez2020}, and infectious disease prevention \citep{Zhou2020}.

A common primary aim in a SMART is the comparison of two (or more) adaptive interventions on the marginal mean of an end-of-study outcome \citep{NahumShani2012}. The foundation of the statistical approach for this aim is rooted in the work of \citet{Orellana2010-gv, Orellana2010-rc, Robins2008-ua}. \citet{Lu2015} and \citet{Li2016} developed analytic methods addressing this aim in the case of a continuous longitudinal outcome, with \citet{Seewald2019} presenting a sample size formula to power a SMART with such a primary aim. Additionally, \citet{Luers2019, Dziak2019} further explore more general methods for longitudinal outcome analyses in these settings.

SMART design and analyses generally remain active areas of statistical research \citep{Artman2024, Wank2024}. While these methodological advances have materially enhanced the design and analysis of individually randomized SMARTs, the literature on clustered SMARTs is still emerging. \citet{NeCamp2017} extended the standard approach to enable comparison of clustered adaptive interventions via the marginal mean of an end-of-study outcome, and developed a corresponding sample size formula. Additionally, \citet{Ghosh2015} proposed a similar sample size for binary end-of-study outcome comparison and \citet{Xu2019} extended these approaches for complex clustering structures. More recently, \citet{Pan_U} developed finite sample adjustments unique to the analysis of clustered SMARTs. Beyond these contributions, however, research on clustered SMARTs remains sparse.

The primary methodological contribution of this manuscript is an approach for comparing the marginal mean of a longitudinal continuous outcome between two clustered adaptive interventions embedded in a cSMART. The nested structure of repeated observations within individuals within clusters induces multiple \quotes{levels} to consider in the analysis. In the analysis of cluster-randomized RCTs, such three-level analytic methods are commonplace \citep{Teerenstra2010}.

The proposed method combines methods for comparing adaptive interventions on a longitudinal outcome \citet{Lu2015, Li2016, Seewald2019} with methods for comparing clustered adaptive interventions \citet{NeCamp2017, Pan_U}. The method offers two important benefits: First, most importantly, it enables the marginal mean comparison of cAIs on a longitudinal outcome, opening the door to a wider array of causal estimands concerning the dynamical effects of cAIs. Second, we provide empirical evidence that incorporating repeated measurements can help improve statistical efficiency even when the primary outcome of interest is a static end-of-study measure. Thus, analysts interested in the mean comparison of cAIs on an end-of-study outcome have greater statistical precision under the new approach. In many policy settings, the cost of collecting an additional research outcome for all clusters participating in the trial can outweigh the cost of recruiting an additional cluster, thus underscoring the importance of this advantage \citep{Raudenbush1997, Rutterford2015}. 

Methods are illustrated using the Adaptive School-Based Implementation of CBT (ASIC) study, a clustered SMART that aims to improve the adoption of cognitive behavioral therapy ({CBT}), an evidence-based mental health treatment, in Michigan high schools. ASIC collected weekly measures of its primary outcome (quantity of CBT delivery) across 10 months \citep{Kilbourne2018}. We note that the proposed approach applies more broadly than two-stage, prototypical, clustered SMART designs such as ASIC's. We discuss extension to more general settings in Appendix \ref{sec:gen_smart}.

\section{Motivating Example: ASIC}\label{sec:mot_ex}\label{sec:mot_ex:asic}
As discussed in Section \ref{sec:intro}, the \textit{Adaptive School-based Implementation of CBT} ({ASIC}) study, a clustered SMART designed to study school-based mental healthcare, motivates our methods \citep{Kilbourne2018}. 

In the United States, youths are, in general, more likely to receive mental health services from schools than from any other child-serving mental healthcare sector \citep{Duong2020}. Depression and anxiety disorders are the most common mental health disorders among American youths. While evidence-based practices ({EBPs}) such as cognitive behavioral therapy ({CBT}) can improve outcomes among these individuals, less than $20\%$ of young patients have access to EBPs. Furthermore, even when EBPs are offered, treatment fidelity can be weak \citep{Kilbourne2018}.

Recently, researchers at the University of Michigan conducted the ASIC study to inform the design of a two-stage clustered adaptive intervention aimed at addressing systemic barriers to CBT delivery in high schools. 
The scientists sought to create an adaptive intervention that combines three existing implementation strategy components for promoting CBT-uptake among school professionals ({SPs}, i.e., school employees tasked with delivering mental health services to students). The existing strategies were: \textit{(i) Replicating Effective Programs ({REP})}, \textit{(ii) Coaching}, and \textit{(iii) Facilitation}. Based on research conducted prior to the ASIC study, \textit{REP} and \textit{Coaching} were designed for use in Stages 1 and 2 of implementation, whereas \textit{Facilitation} was designed only for use in Stage 2 of implementation. The \textit{REP} component includes a CBT uptake monitoring protocol that guides how implementation support professionals assess progress implementation. This monitoring protocol is used, for example, to determine whether a school is a \quotes{slower-responder} (also referred to as a \quotes{non-responder}) school at the end of Stage 1, i.e., eligible for \textit{Facilitation} in Stage 2.


The ASIC study included 169 SPs across 94 Michigan high schools. All participating schools (\textit{i}) had not previously participated in any school-based CBT implementation initiatives, (\textit{ii}) were within two hour driving distance of a mental health professional trained to serve as a coach for the study, (\textit{iii}) had at least one eligible SP that agreed to participate in study assessments, and (\textit{iv}) had sufficient resources to allow for delivery of individual and/or group mental health support on school grounds \citep{Smith2022}.

\begin{figure}[ht]
    \centering
    \includegraphics[width=12cm]{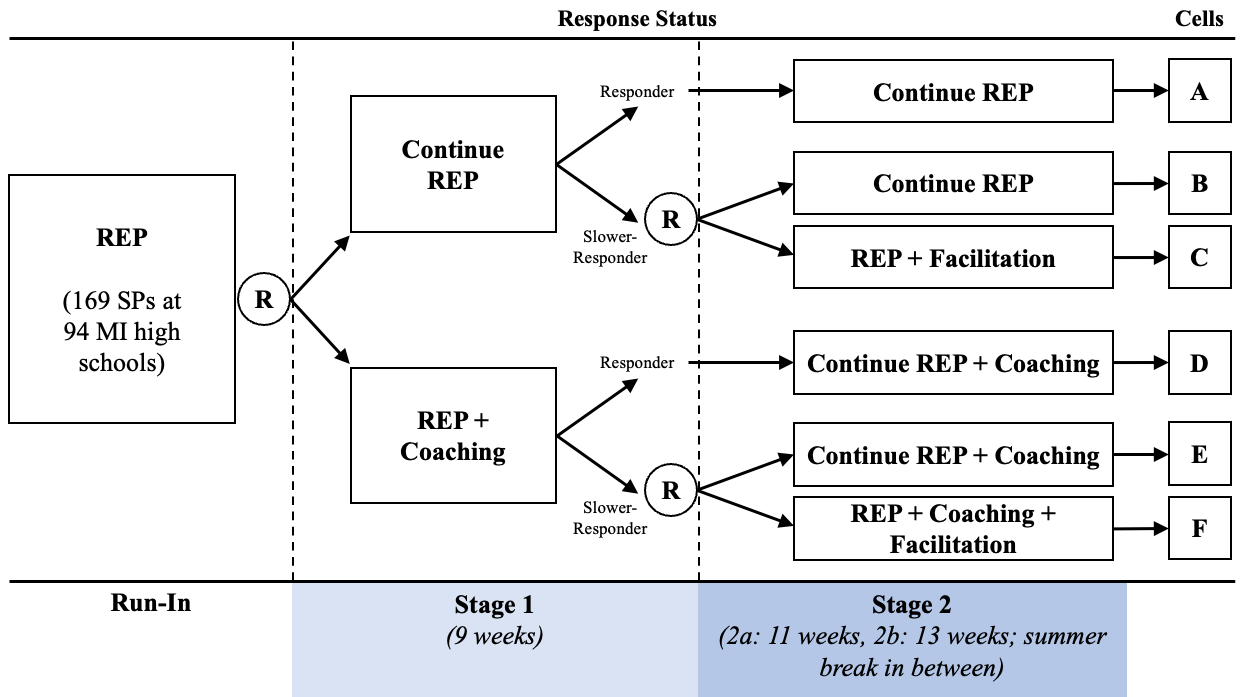}
    \caption{ASIC Structure}
    \label{fig:ASIC_str}
\end{figure}

Figure \ref{fig:ASIC_str} shows ASIC's randomization structure. During a three month run-in stage, all 94 schools were offered \textit{REP}. After this phase, schools were randomized to either continue \textit{REP}, or to augment \textit{REP} with \textit{Coaching}. Nine weeks after this initial randomization, schools deemed \quotes{slower-responders} were re-randomized to either augment with their first stage intervention with \textit{Facilitation} or continue with their initial treatment. Response status was a function of SP-reported barriers and frequency of CBT delivery, aggregated to the level of the school; see \cite{Smith2022} for a precise definition.


The primary research outcome is weekly measurements of CBT delivery by each SP \citep{Kilbourne2018}. The nesting of repeated measures outcomes (weekly CBT delivery) within each SP, and the nesting of multiple SPs within each school induces the three-level clustering structure (outcomes nested within SPs nested within schools) that is central to the method developed in this manuscript.

As discussed in greater detail in Section \ref{sec:smart:embcai}, most clustered SMARTs contain a set of \textit{embedded} cAIs. By design, ASIC includes four such embedded cAIs (see Table \ref{tab:cAI_table_ASIC}); and ASIC's primary aim was to study the difference in the marginal expectation of total CBT delivery at the end of implementation under the most versus least intensive of the four cAIs \citep{Kilbourne2018}. As demonstrated in the simulation experiments of Section \ref{sec:sim_study:power}, and illustrated in Section \ref{sec:data_an}, the proposed repeated measures (longitudinal) data analytic method enhances precision and reliability in addressing such primary aims. Additionally, in the illustrative data analyses, we show how the method can be used to answer new scientific questions using the weekly outcome measurements (e.g., to compare the four embedded cAIs in terms of changes in SP-level CBT delivery trajectories over time) otherwise masked in a static end-of-study analysis.

\section{Clustered SMARTs with Repeated Measures}

SMARTs are a class of multi-stage, factorial randomized trial designs, which leverage sequential randomization to inform the construction of optimal adaptive interventions \citep{SMART_IES}. SMART designs can vary widely; the characterizing feature of a SMART is that at least some units are randomized more than once \citep{Seewald2021}.

As discussed in Section \ref{sec:intro}, clustered SMARTs ({cSMARTs}) can inform the optimal construction of clustered adaptive interventions \citep{NeCamp2017}. 
In a cSMART, clusters of individuals are sequentially randomized, with outcomes primarily measured with respect to individuals. For example, ASIC trial designers randomized entire schools with the intent to study outcomes at the SP-level \citep{Kilbourne2018}. While cSMARTs typically randomize groups of humans, this need not be the case in general. \citet{Xu2019} provides an example of a cSMART studying dental procedures in which each humans subject represents a cluster, with their teeth representing the individuals.

\subsection{SMART Randomization Structures}

Figure \ref{fig:SMART_strs} displays four common SMART \quotes{design types.} In each of the presented design types, all clusters are randomized to one of two first stage intervention options; however, the design types differ in their re-randomization structure. SMART designs I, II, and III incorporate an embedded binary tailoring variable, \quotes{response.} SMART design IV (often called an \quotes{unrestricted $2\times 2$ SMART,}) differs from the other three in this respect, as all clusters that received a given first-stage intervention are re-randomized to one of two second-stage interventions. This re-randomization is restricted to non-responders in SMART design II, with SMART design III further restricting re-randomization to non-responders of a single first-stage treatment arm. SMART design II is possibly the most common SMART design, and is often referred to as a \quotes{prototypical} SMART. 

\begin{figure}[ht]
    \centering
    \includegraphics[width=12cm]{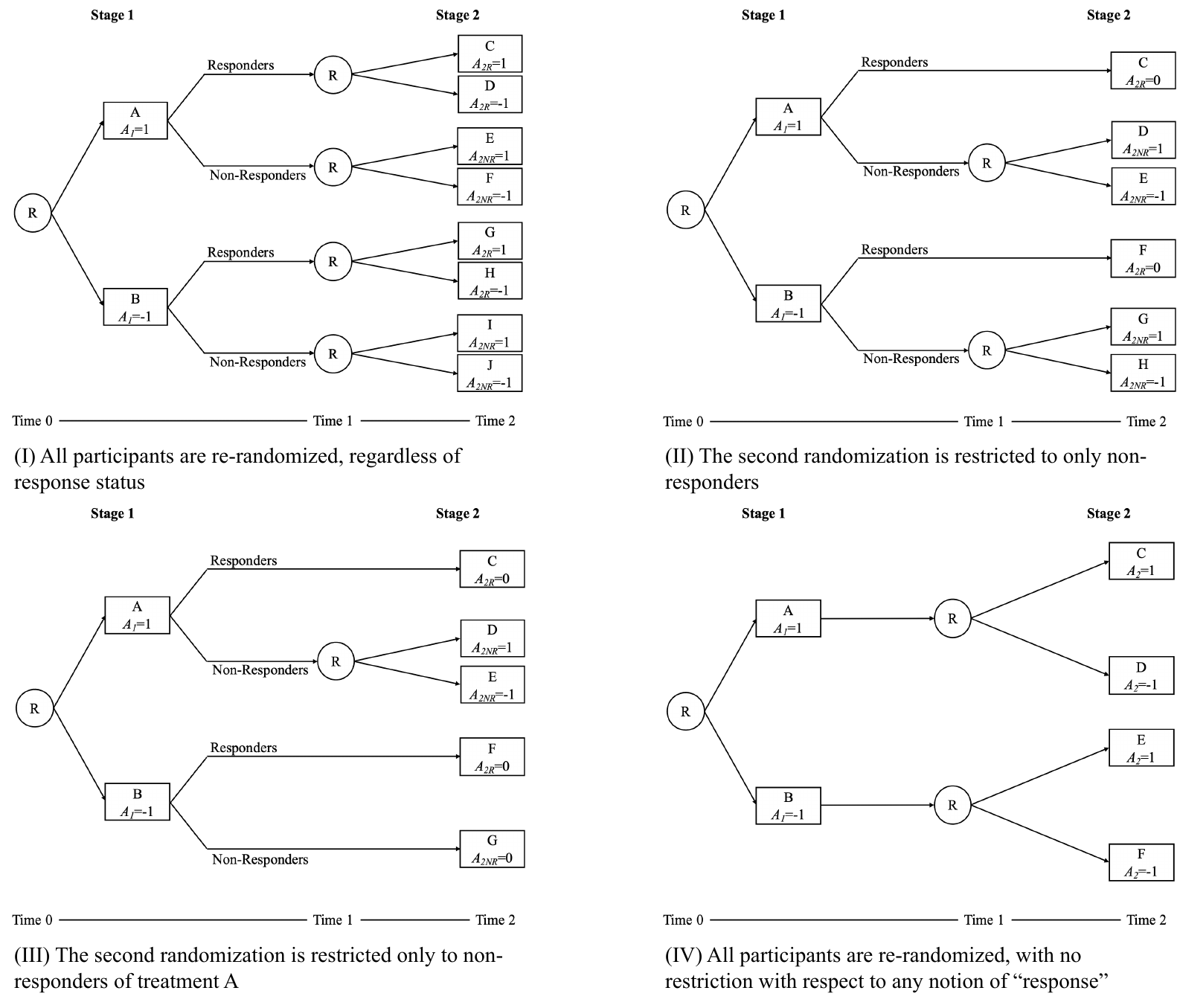}
    \caption{Illustrative SMART Structure Examples}
    \label{fig:SMART_strs}
\end{figure}

SMART design I, while not yet employed in the clustered setting, is popular in individually randomized SMARTs (e.g., \citet{ExTENd}).
\citet{Kilbourne2014} details the first documented clustered SMART, employing SMART design III above. \citet{Quanbeck2020} employed SMART design IV to study implementation strategies to promote concordant opioid prescription. Furthermore, while all four design types above use two stages of binary randomization, this need not be the case for all SMARTs --- see \citet{Xu2019} for a clustered SMART with four-arm re-randomization for non-responding clusters.

As shown in Figure \ref{fig:ASIC_str}, ASIC was a prototypical SMART. As such, we will use this design to illustrate and motivate notation and methods throughout the remainder of the main body of this manuscript. Appendix \ref{sec:gen_smart} discusses generalizations to other SMART structures.

\subsection{Embedded Adaptive Interventions}\label{sec:smart:embcai}

A clustered adaptive intervention is a sequence of decision rules guiding the provision of intervention based on the baseline and changing status of recipient clusters. cSMART designs contain a notion of \textit{embedded} cAIs; i.e., protocolized decision rules embedded in their design. 
Prototypical SMARTs contain four embedded adaptive interventions, indexed by choice of first-stage intervention, second-stage treatment for responders, and second-stage treatment for non-responders. I.e., we can identify a given embedded adaptive intervention for prototypical SMARTs as the triple $(a_1, a_{2R}, a_{2NR})$, where $a_1$ represents choice of first-stage intervention, and $a_{2R}/a_{2NR}$ represent choice of second-stage intervention for responders/non-responders. Given prototypical SMARTs do not re-randomize responders, the choice of first-stage treatment and second-stage treatment for non-responders induces the four embedded adaptive interventions. For brevity, embedded adaptive interventions in prototypical SMARTs are often denoted as a double $(a_1, a_{2NR})$. As such, we can denote the ASIC embedded cAIs as \textit{(i)} (\textit{REP+Coaching}, \textit{Facilitation}), \textit{(ii)} \textit{(REP+Coaching}, \textit{No Facilitation}), \textit{(iii)} (\textit{REP}, \textit{Facilitation}), \textit{(iv)} (\textit{REP}, \textit{No Facilitation}). ASIC's primary aim comparison involved the total CBT delivery under (\textit{REP+Coaching}, \textit{Facilitation}) and (\textit{REP}, \textit{No Facilitation}). Table \ref{tab:cAI_table_ASIC} below illustrates these four embedded cAIs.

By convention, we let $\mathcal{D}$ denote the set of embedded cAIs in a clustered SMART \citep{Seewald2019}. Using standard $\pm1$ notation for treatment indicators, this corresponds to $\mathcal{D}:=\set{(1,1),(1,-1),(-1,1),(-1,-1)}$ for prototypical cSMARTs.

\begin{table}[ht]
\centering
\begin{tabular}{c c c c c}
\toprule
cAI$(a_1,a_{2NR})$ & \makecell{First-Stage\\Intervention} & \makecell{School\\Response} &  \makecell{Second-Stage\\Intervention} & \makecell{Cells in \\ Fig. \ref{fig:ASIC_str}} \\ 
\midrule
\multirow{2}{*}{cAI$\paren{1,1}$} & \multirow{2}{*}{\makecell{\textit{REP + Coaching}\\($A_1=1$)}} & \makecell{Responder\\($R=1$)} & Continue \textit{REP + Coaching} & D \\ \cline{3-5}
                                      & 
                                      & \makecell{Non-responder\\($R=0$)} & \makecell{Add \textit{Facilitation}\\($A_{2NR}=1$)} & F \\
\midrule
\multirow{2}{*}{cAI$\paren{1,-1}$} & \multirow{2}{*}{\makecell{\textit{REP + Coaching}\\($A_1=1$)}} & \makecell{Responder\\($R=1$)} & Continue \textit{REP + Coaching} & D \\ \cline{3-5}
                                     &                                           & \makecell{Non-responder\\($R=0$)} & \makecell{Continue \textit{REP + Coaching}\\($A_{2NR}=-1$)} & E \\
\midrule
\multirow{2}{*}{cAI$\paren{-1,1}$} & \multirow{2}{*}{\makecell{\textit{REP} Only\\($A_1=-1$)}} & \makecell{Responder\\($R=1$)} & Continue \textit{REP} & A \\ \cline{3-5}
                                     &                                      & \makecell{Non-responder\\($R=0$)} & \makecell{Add \textit{Facilitation}\\($A_{2NR}=1$)} & C \\
\midrule
\multirow{2}{*}{cAI$\paren{-1,-1}$} & \multirow{2}{*}{\makecell{\textit{REP} Only\\($A_1=-1$)}} & \makecell{Responder\\($R=1$)} & Continue \textit{REP} & A \\ \cline{3-5}
                                    &                                     & \makecell{Non-responder\\($R=0$)} & \makecell{Continue \textit{REP}\\($A_{2NR}=-1$)} & B \\
\bottomrule
\end{tabular}
\caption{Embedded Clustered Adaptive Interventions in ASIC}
\label{tab:cAI_table_ASIC}
\end{table}

\subsection{Notation}

\subsubsection{Observed Data}
Consider data from a prototypical cSMART with $N$ participant clusters ($i=1,\ldots,N$), each with $n_i$ individuals, where data was collected at pre-determined times $t_0\ddd t_T$. Further consider a repeatedly-collected outcome construct, with $\vct{Y}_{i,j}:=\tttvechT{Y_{i,j,t_0},}{\dots,}{Y_{i,j,t_T}}$ denoting the vector of outcomes collected at times $t_0\ddd t_T$ for Individual $j$ in Cluster $i$. I.e., $Y_{i,j,t}$ represents the measured value of $Y$ for Individual $j$ in Cluster $i$ measured at time $t$. 

Use the stacked vector 
$\vct{\bY}_i:= 
\begin{bmatrix}
    \vct{Y}_{i,1,t_0}^T, & \dots, & \vct{Y}^T_{i,n_i, t_T}
\end{bmatrix}
=
\begin{bmatrix}
Y_{i,1,t_0}, & \dots, & Y_{i,1,t_T}, & Y_{i,2,t_0}, & \dots & \dots, & Y_{i,n_i,t_T}
\end{bmatrix}^T$
to denote the full vector of observed outcomes for Cluster $i$. Furthermore, we use $\vct{Y}$ and $\vct{\bY}$ to denote the vector of repeated measures for a generic individual and for all individuals in a generic cluster, respectively.

For Cluster $i$, let $A_{1,i}\in\set{-1,1}$ be a random variable denoting first-stage treatment assignment, let $\bX_i$ denote collected baseline information, and use $R_i\in\set{0,1}$ to denote response status. Furthermore, let $t^*\in[t_s, t_{s+1})$ (for some $s=0\ddd {T-1}$) denote the second decision point, with $A_{2NR,i}\in\set{-1,1}$ the randomly assigned, second-stage intervention for Cluster $i$, and use the convention $A_{2NR,i}:= NA$ if Cluster $i$ was not re-randomized by design. E.g., in prototypical cSMARTs the second-stage intervention for responding clusters is pre-determined, and common convention sets $A_{2R,i}:=0$ for all $i$. Lastly, let $A_{2,i}$ denote the observed second-stage intervention for Cluster $i$; i.e., $A_{2,i}=\begin{cases} 
    A_{2R,i} & \text{ if $R_i=1$}\\ 
    A_{2NR,i} & \text{ if $R_i=0$}\end{cases}$.

Given these conventions, we consider observed data collected over the course of the study on Cluster $i$ to have the form
$$O_i = \set{\bX_i, \vct{\bY}_{i,t_0}, \vct{\bY}_{i,t_1}\ddd \vct{\bY}_{i,t_{s}}, R_i, \vct{\bY}_{i,t_{s+1}}\ddd \vct{\bY}_{i,t_T}}.$$

\subsubsection{Potential Outcomes}
We employ potential outcomes notation to define our primary aims and discuss causal effects \citep{Rubin2005, Robins2000}.
Let $R_i^{(a_1)}$ denote response status for Cluster $i$ had the cluster been assigned first-stage treatment $a_1$. More generally, for a given embedded cAI $d$ and outcome construct $Y$, let $Y^{(a_1, a_{2NR})}_{i,j,k}$ denote the outcome at time $t_k$ for Individual $j$ in Cluster $i$ had the cluster been assigned cAI $d=(a_1, a_{2NR})$. Appendix \ref{sec:id_ass} discusses several canonical assumptions we place on the potential outcomes in order to identify causal effects.

\section{Primary Aims in a Clustered SMART with Repeated Measures}\label{sec:ce}\label{sec:ce:ce}
For a given outcome construct, $Y$, we consider $\E{\vct{Y}\d}$, 
the marginal mean of $\vct{Y}$ had the entire population of clusters been assigned embedded cAI $d$. As discussed previously, a common primary aim in SMART analyses involves the comparison of functions of marginal means under different embedded AIs \citep{Oetting2010, NahumShani2012}. We present three common such comparisons below.

\subsection{Comparison of Second-Stage Slope}
Researchers interested in the trajectory of outcomes may also turn to the average slope of the outcome from the second decision point to end-of-study. Doing so focuses the analysis on the mean trajectory of the outcome during the second-stage of the adaptive intervention \citep{NahumShani2020}. Such an aim can be expressed as:
\begin{equation}\label{eq:ce_s2}
    \Delta^{S2}_{\paren{d,d'}} = \frac{1}{t_T-t^*}\paren{\E{Y_{t_T}\d} - \E{Y_{t^*}\d}} - \frac{1}{t_T-t^*}\paren{\E{Y_{t_T}\dpr} - \E{Y_{t^*}\dpr}}.
\end{equation}

\subsection{Comparison of Average Area Under the Curve}\label{sec:ce:cs}
Area under the curve (AUC) serves as a robust summary measure in longitudinal studies, capturing the evolution of an outcome over time. In SMARTs where the temporal profile of the outcome is of particular import, researchers may turn to the average AUC as an estimand of interest. This can be expressed as \citep{Sun2003}:
\begin{equation}\label{eq:ce_auc}
    \Delta^{AUC}_{\paren{d,d'}} = \frac{1}{t_T-t_0}\int_{t_0}^{t_T}\E{Y_{t}\d}dt - \frac{1}{t_T-t_0}\int_{t_0}^{t_T}\E{Y_{t}\dpr}dt .
\end{equation}

\subsection{Comparison of End-of-Study Outcome}
A classic primary aim in cSMART analyses is the comparison of embedded cAIs with respect to the marginal expectation of an end-of-study outcome. While analyzing this aim does not require collecting repeated measurements, researchers often focus on this aim even when repeated outcome measurements are available, as was the case in ASIC \citep{Kilbourne2018}. As discussed previously, the primary aim for ASIC was to determine whether embedded cAIs (\textit{REP}, \textit{No Facilitation}) and (\textit{REP+Coaching}, \textit{Facilitation}) saw a difference in the change of the primary outcome (total CBT delivery) measured at end-of-study. This aim induces the estimand below.
\begin{equation}\label{eq:ce_eos}
    \Delta^{ES}_{\paren{d,d'}} = \E{Y_{t_T}\d} - \E{Y_{t_T}\dpr}.
\end{equation}
As discussed above, ASIC's primary aim estimand was
$\E{Y_{T}^{\paren{REP+Coaching,\s Facilitation}}}-\E{Y_{T}^{\paren{REP,\s No\s Facilitation}}}$, where $Y_t$ denotes the total number of CBT sessions delivered by an SP up to time $t$.

Of course, an analyst may wish to analyze aims other than the three listed above. Broadly, the methods in this manuscript consider estimands of the form $\Delta_{\paren{d,d'}} = f\paren{\E{\vct{Y}\d}} - f\paren{\E{\vct{Y}\dpr}}$, with choice of $f:\mathbb{R}^{T+1}\to \mathbb{R}$ corresponding to choice of estimand. Lastly, we note the use of \quotes{marginal} to signify \quotes{marginal over response status} as well as to highlight the expectation being taken over the joint distribution of $Y_{t_0}\ddd Y_{t_T}$. An analyst may wish to condition on cluster- or individual-level baseline covariates, $X$, to control for finite-sample imbalances resulting from the randomization. In this case, we consider $\Delta_{\paren{d,d'}} := f\paren{\E{\vct{Y}\d\mid X}} - f\paren{\E{\vct{Y}\dpr\mid X}}$.

\section{Modeling}\label{sec:mm}

This section describes marginal mean models 
for $\E{Y_t\d\mid X}$, denoted by $\mu_t\paren{d,X;\theta}$, where $\theta$ is a finite set of unknown parameters to be estimated using the SMART data. We show how the marginal mean models connect with the causal estimands listed above. Note that we focus solely on models that are linear in $\theta$.

As noted previously \citep{Lu2015, NeCamp2017}, the randomization structure of the SMART may induce natural constraints on the form of $\mu_t$. In the following, we provide an example marginal mean model for a prototypical SMART, such as ASIC:
\begin{equation} \label{eq:mmm_ex}
\begin{split}
    \mu_t(a_1, a_{2NR}; \theta) & = 
     \gamma_0 + \eta X_{ij} + \Ind{t\leq t^*}\paren{\gamma_1t + \gamma_2a_1t} \\
     & \hspace{.5cm} + \Ind{t > t^*}\big(\gamma_1t^* + \gamma_2a_1t^* + \gamma_3\paren{t-t^*} + \gamma_4\paren{t-t^*}a_1 \\
     & \hspace{.5cm} + \gamma_5\paren{t-t^*}a_{2NR} + \gamma_6\paren{t-t^*}a_1a_{2NR}\big).
\end{split}
\end{equation}

This example marginal mean model is piecewise linear in time with a knot at the second decision point (i.e., at time $t^*$). All parameters in model \ref{eq:mmm_ex} have scientific interpretation. $\gamma_0$ and $\eta$ represent baseline mean and effect of baseline covariates, respectively. $\gamma_1$ and $\gamma_2$ encode first-stage treatment slopes for the two first-stage intervention options $a_1=\pm 1$. Lastly, $\gamma_3, \gamma_4, \gamma_5, \gamma_6$ induce the second-stage treatment slopes for the four embedded cAIs in $\mathcal{D}$.

As noted in previous works, the sequential nature of treatment delivery in SMART settings may induce natural constraints on the form of $\vct{\mu}$. E.g., model \ref{eq:mmm_ex} ensures $\mu_t(a_1, 1)=\mu_t(a_1, -1)$ for any $t\leq t^*$, reflecting the assumption that future treatments should not impact past potential outcomes (discussed further in Appendix \ref{sec:id_ass}). In general, such constraints vary depending on the exact randomization structure of the SMART at hand \citep{Lu2015, NeCamp2017}, and we discuss modeling concerns for alternate SMART design types in Appendix \ref{sec:gen_smart}.

An analyst could employ a different marginal mean model. E.g., incorporating treatment covariate interactions, non-linear temporal trends, and additional knots at times other than at $t^*$ are all potentially prudent modeling decisions to be informed by the subject matter at hand \citep{Li2016}.

As with previous notation, we let $\vbmu(d,\bX)$ denote a stacked vector of marginal means for a generic cluster.

\subsection{Connection with Target Estimands}\label{sec:mm:lmm:parms}

Section \ref{sec:ce:ce} discussed a number of candidate estimands for analysis. The parameterization of the marginal mean model enables the study of these estimands through classic inferential methods. For example, to study the causal difference in embedded AIs $d=\paren{a_1, a_{2NR}}$ and $d'=\paren{a_1', a_{2NR}'}$ with respect to end of study outcome, one would wish to estimate $\mu_{t_T}\paren{d,X}-\mu_{t_T}\paren{d',X}$. Using marginal mean model \ref{eq:mmm_ex} as an illustrative example, we can write
\begin{align*}
    \E{Y_{t_T}\d\mid X}-\E{Y_{t_T}\dpr \mid X} & = \mu_{t_T}\paren{d,X}-\mu_{t_T}\paren{d',X} \\
    & = \paren{a_1-a_1'}t^*\gamma_2 + 
    \paren{a_1-a_1'}\paren{t_T-t^*}\gamma_4 \\
    & \hspace{.5cm} + 
    \paren{a_{2NR}-a'_{2NR}}\paren{t_T-t^*}\gamma_5 + 
    \paren{a_1a_{2NR}-a_1'a'_{2NR}}\paren{t_T-t^*}\gamma_6.
\end{align*}

Similarly, to compare the average area under the curve between two embedded AIs, we would calculate
$$\frac{1}{t_T-t_0}\int_{t_0}^{t_T} \mu_t\paren{d,X} dt - \frac{1}{t_T-t_0}\int_{t_0}^{t_T}\mu_t\paren{d',X}dt.$$
Using marginal mean model \ref{eq:mmm_ex} and $d=(1,1), d'=(-1,-1)$ as an example, we observe this corresponds to testing the null 
$$H_0: \frac{1}{t_T-t_0}\paren{\paren{{t^*}^2-t_0^2 + 2t^*\paren{t_T-t^*}}\gamma_2 + \paren{{t_T}^2-{t^*}^2 - 2t^*\paren{t_T-t^*}}\paren{\gamma_4 + \gamma_5}}= 0.$$ 
We discuss parameter estimation and hypothesis testing in Section \ref{sec:est} below.

\subsection{Working Variance Modeling}\label{sec:mm_work_var}

This section describes working variance modeling considerations. In standard randomized trials with repeated measures \citep{Wang2014}, standard cluster-randomized trials \citep{Offorha2023}, and standard three-level randomized trials \citep{Teerenstra2010} it is common for trialists to pose working models for the variance of the residual between the outcome and the marginal mean model. Sections \ref{sec:data_an} and \ref{sec:sim_study:wvm} illustrate two benefits of such an approach: $(i)$ from a scientific perspective, there is often tertiary or exploratory interest in understanding the structure of $\V{Y\mid X}$; $(ii)$ from a statistical perspective, employing proper working models of $\V{Y\mid X}$ can enhance precision of estimators of the causal mean parameters of interest.

In Section \ref{sec:est} below, we describe an estimator for $\theta$ that allows analysts to pose such working models. As discussed in Section \ref{sec:est:asy_dist}, this estimator is consistent regardless of choice of working variance model. Additionally, Section \ref{sec:sim_study:wvm} empirically shows that the estimator has negligible bias in moderate and large samples, regardless of working variance model. 

Let $V^d\paren{\bX_i;\alpha}$ denote a working model for the variance-covariance matrix $\V{\vbY_i - \vbmu(d,\bX_i;\theta)}$, which is indexed by the unknown parameters $\alpha=\begin{bmatrix}
    \sigma\\ \rho
\end{bmatrix}$. In general, $V^d\paren{\bX_i;\alpha}$ takes the form
$$V^d\paren{\bX_i;\alpha} = \bS^d(\bX_i;\sigma)\poh \,\,  \boldsymbol\Rho^d(\bX_i;\rho) \,\, \bS^d(\bX_i;\sigma)\poh.$$

Here, the $n_i(T+1)\times n_i(T+1)$ matrices $\bS^d(\sigma)$ and $\boldsymbol\Rho^d(\rho)$ correspond to variance and correlation matrices (respectively). Choice of working variance model corresponds to choice of structure for $\bS^d(\sigma)$ and $\boldsymbol\Rho^d(\rho)$. As a simple example, choosing an homoscedastic-independent working variance model would correspond to setting $\bS^d(\sigma) = \sigma I_{n_i(T+1)\times n_i(T+1)}$ and $\boldsymbol\Rho^d(\rho) = I_{n_i(T+1)\times n_i(T+1)}.$

Such a homoscedastic-independent working model forgoes modeling correlation between repeated observations in an individual and between individuals in a cluster. Furthermore, adopting such a model involves pooling variance estimates across time and embedded cAI. For an illustrative example better capturing the types of working variance decisions to be made, consider a working correlation model that is exchangeable between person and within person and heterogeneous across embedded cAI. Moreover, assume heterogeneous variance with across time and embedded cAI. I.e., we model $\V{Y_{i,j,t}\d \mid \bX_i} = \paren{\sigma^d_t}^2$ for any $i,j$; $corr\paren{Y\d_{i,j,t}, Y\d_{i,j,t'}\mid \bX_i}=\rho_w^d$ for any $i,j$ and $t\neq t'$; and $corr\paren{Y\d_{i,j,t}, Y\d_{i,j',t'}\mid \bX_i}=\rho_w^d$ for any $i,j$ and $j\neq j'$. Appendix \ref{sec:var_dis} contains a further discussion on variance modeling and $\bS^d$, $\boldsymbol\Rho^d$ structure.

As with the mean, modeling the marginal variance should reflect the subject matter at hand. Table \ref{tab:wv:mar_var} discusses several models for the marginal variance of a single outcome. The results in Section \ref{sec:sim_study:wvm} show this choice is can materially affect estimator performance. As discussed in Section \ref{sec:sim_study:wvm}, we propose choosing more flexible variance models, particularly with respect to time.

\begin{table}[ht]
    \centering
    \renewcommand{\arraystretch}{1.5}
    \begin{tabular}{ccc}
    \hline
    \multicolumn{2}{c}{Marginal Variance Structure} & \multirow{2}{*}[-0.5ex]{\centering$\V{Y_{i,j,t}\d\mid X}$} \\
    \cline{1-2}
    Time & \makecell{Embedded cAI}&  \\
    \hline
    Heteroscedastic  & Heterogeneous &
    $(\sigma_t^{d})^2$\\
    Heteroscedastic & Homogeneous &
    $\sigma_t^2$ \\
    Homoscedastic  &  Heterogeneous &
    $(\sigma^{d})^2$\\
    Homoscedastic & Homogeneous &
    $\sigma^2$ \\
    \hline
    \end{tabular}
    \caption{Marginal Variance Models}
    \label{tab:wv:mar_var}
\end{table}

Additionally, we present examples for working correlation models in Table \ref{tab:wv:corr}. The models presented are heterogeneous with respect to embedded cAI; however, one could choose cAI-homogeneous models for the correlation (analogously to the cAI-homogeneous variance models in Table \ref{tab:wv:mar_var}). Appendix \ref{sec:var_est} containing details on variance estimation techniques for such variance/correlation models. 

\begin{table}[ht]
    \centering
    \begin{minipage}[b]{0.48\textwidth}
        \centering
        \begin{tabular}{cc}
        \hline
        \makecell{Within-Person \\Correlation \\Structure} & \makecell{$corr\paren{Y_{i,j,t_l}\d, Y_{i,j,t_m}\d}$\\$(t_l\neq t_m)$}\\
        \hline
        AR(I)  &  $(\rho_w^d)^{|l-m|}$\\
        Exchangeable & $\rho_w^d$\\
        Unstructured & $\rho^d_{w, t_l, t_m}$\\
        Independent & $0$\\
        \hline
        \end{tabular}
        \captionsetup{belowskip=-7pt}
        \subcaption{Within-Person Correlation Models}
        \caption*{\raggedright\scriptsize\textit{Within-person correlation is 1 if $t_l=t_m$.}}
        \label{tab:wv:wp_corr}
        \hspace{.1cm}
    \end{minipage}
    \hfill 
    \begin{minipage}[b]{0.48\textwidth}
        \centering
        \begin{tabular}{cc}
        \hline
        \makecell{Between-Person \\Correlation \\Structure} & 
        \makecell{$corr\paren{Y\d_{i,j,t_l}, Y\d_{i,k,t_m}}$ \\ $\paren{j\neq k}$}\\
        \hline
        Exchangeable & $\rho_b^d$  \\
        Unstructured & $\rho^d_{b, t_l, t_m}$ \\
        Independent & 
        0 \\
        \hline
        \end{tabular}
        \captionsetup{belowskip=-7pt}
        \subcaption{Between-Person Correlation Models}
        \caption*{\scriptsize\textit{ }}
        \label{tab:wv:bp_corr}
        \hspace{.1cm}
    \end{minipage}
    \caption{Marginal Correlation Models}
    \label{tab:wv:corr}
\end{table}

\section{Estimation}\label{sec:est}

Given models for the marginal mean and variance of $\vct{Y}\d$, we seek to obtain parameter estimates for inference. Similar to \citet{Lu2015, NeCamp2017, Seewald2019}, we employ the following weighted estimating equation:
    
\begin{align}
    0  & = \SumiN\Sumd I_i(d) W_i D(d,\bX_i;\theta)^TV^d(\bX_i;\alpha)\inv(\vbY_i - \vbmu(d,\bX_i;\theta))\label{eq:est_eq}\\
    & =: \SumiN U\paren{A_{1,i}, R_i, A_{2,i}, \vbY_i, \bX_i; \alpha, \theta},\notag
\end{align}
where $I_i(d) = I(d, A_{1,i}, R_i, A_{2,i})$ is an indicator function denoting whether Cluster $i$'s treatment/response history is consistent with cAI $d$ and $D(d,\bX_i;\theta)=\partialD{\mu}{\theta}(d,\bX_i;\theta)$.  $D(d,\bX_i;\theta)$ is similar to the \quotes{design matrix} in standard regression. As before, we use $V^d(\bX_i;\alpha)$ to denote a working covariance matrix for the residuals $\paren{\vbY_i - \vbmu(d,\bX_i;\theta)}$.
Lastly, the values 
$$W_i = W(a_{1,i}, r_i, a_{2,i})
=\dfrac{1}{\Prb{A_{1}=a_{1,i}}\Prb{A_{2}=a_{2,i} \mid A_{1,i}=a_{1,i},R_i=r_i}}$$
represent weights to account for the fact that, in prototypical SMART designs, responding clusters are consistent with multiple $d\in\mathcal{D}$. For example, clusters that received $A_1=1$ and had positive response to first stage treatment are consistent with both $d_1=(1,1)$ and $d_2=(1,-1)$; i.e., their treatment/response history could have arisen from $d_1$ or $d_2$.

Without adjustment, such clusters would appear multiple times in the estimating equation and, subsequently, exert undue influence in the model fitting. In a cSMART, the weights $W_i$ are known by design. For example, for prototypical cSMARTs with balanced .5/.5 randomization probabilities at first and second stage (like ASIC), responding clusters have weight 2 and non-responding clusters have weight 4. 

On the other hand, the analyst may wish to estimate these probabilities to adjust for any observed finite-sample covariate imbalances arising from randomization, akin to how one may use IPW techniques in two-armed randomized trial analyses. We discuss weight estimation in Appendix \ref{sec:wgt_est}.

We call the solution to Equation \ref{eq:est_eq} $\hat{\theta}$. As discussed above, inference regarding functions of $\hat{\theta}$ can correspond to inference for comparisons between embedded adaptive interventions with respect to a wide variety of outcomes.

\subsection{Estimation Algorithm}


Similar to \citet{NeCamp2017, Seewald2019}, and inspired by classic approaches in fitting generalized estimating equations (GEEs) \citep{Huang2021}, we employ an iterative estimation procedure that alternates between $\hat\theta$ and $\hat\alpha$, as shown in Algorithm \ref{alg:est} below. By separating the two estimation mechanisms, we can obtain $\theta$ estimates by solving linear equations and use the induced residuals for $\alpha$ estimation.

\allowdisplaybreaks
\begin{breakablealgorithm}
\caption{\texttt{Root Estimation Algorithm}}\label{alg:est}
\begin{algorithmic}[1]
\State \textbf{Initialize} Obtain $\hat\theta_0$ by solving Equation \ref{eq:est_eq} with $V^d(\bX_i;\alpha) = I_{n_i}$ for all $i=1\ddd N$ and $d\in\mathcal{D}$.
\State\label{algstep:start}\textbf{For} each cAI $d\in\mathcal{D}$ and $i=1\ddd N$, obtain the residuals $\vct{\hat\varepsilon}^{d,0}_i := \vbY_i - \vbmu(d,\bX_i;\hat\theta_0)$.
\State\label{algstep:varest}\textbf{Estimate} $\alpha$ via using the residuals obtained above via the formulae presented in Appendix \ref{sec:var_est}. Call the resulting estimate $\hat\alpha_0$.
\State\label{algstep:end}\textbf{Solve} Equation \ref{eq:est_eq} for $\theta$ using $V^d(\bX_i;\alpha)=V^d(\bX_i;\hat\alpha_0)$ for all $i=1\ddd N$ and $d\in\mathcal{D}$. Call the resulting estimate $\hat\theta_1$.
\State\textbf{Repeat} Steps \ref{algstep:start}-\ref{algstep:end}, using residuals $\vct{\hat\varepsilon}^{d,k}_i:=\vbY_i - \vbmu(d,\bX_i; \hat\theta_{k-1})$ to obtain $\hat\alpha_k$ estimates which can then be used to obtain $\hat\theta_k$ estimates. Continue until a desired convergence criterion is reached (e.g., $\lnorm{\hat\theta_{k}-\hat\theta_{k-1}}{\infty}<\epsilon).$ Denote the resulting estimates $\alphahat$ and $\thetahat$.
\end{algorithmic}
\end{breakablealgorithm}

We note that the exact form of Step \ref{algstep:varest} above depends on the working variance structure chosen. Appendix \ref{sec:var_est} discusses covariance component estimation for a broad class of covariance structures.

For the example working variance structure discussed in Section \ref{sec:mm_work_var} (i.e., heterogeneous with respect to time and cAI and exchangeable both within-person and between-person), we present the following estimators:
$$
    \paren{\sigmahat^{d}_t}^2  = 
    \ddfrac{\Sum{i}{N}W_iI_i(d)\Sum{j}{n_i}\paren{\varepsilon^{d,s}_{i,j,t}}^2}{\Sum{i}{N}W_iI_i(d)n_i},\s
    \rhohat_{w}^d = \ddfrac{\Sum{i}{N}W_iI_i(d)
\displaystyle\sum_{k=0}^{T}\displaystyle\sum_{\substack{k'=0\\k'\neq k}}^T
\dfrac{\varepsilon^{d,s}_{i,j,t_k}\varepsilon^{d,s}_{i,j,t_{k'}}}{\sigmahat^d_{t_k}\sigmahat^{d}_{t_{k'}}}}{\Sum{i}{N}W_iI_i(d)n_i(T+1)T},$$
$$
\rhohat_{b}^d  = \ddfrac{\Sum{i}{N}W_iI_i(d)
\Sum{j}{n_i}\displaystyle\sum_{\substack{j'=1\\j'\neq j}}^{n_i}
\displaystyle\sum_{k=0}^{T}\displaystyle\sum_{k'=0}^T
\dfrac{\varepsilon^{d,s}_{i,j,t_k}\varepsilon^{d,s}_{i,j',t_{k'}}}{\sigmahat^d_{t_k}\sigmahat^{d}_{t_{k'}}}}{\Sum{i}{N}W_iI_i(d)n_i(n_i-1)}.
$$

\subsection{Asymptotic Distribution}\label{sec:est:asy_dist}

Discussed in more detail in Appendix \ref{sec:asy_thry}, we show that, under certain regularity conditions, 
$$\sqrt{N}\paren{\thetahat-\theta_0}\xrightarrow[N\to\infty]{\mathcal{L}}\N{0}{J\inv Q J\inv},$$
where 
\begin{align*}
    J & = \E{\Sumd I(d) WD(d,X;\theta)^TV^d(\bX_i; \alpha_+)\inv D(d,X;\theta)},\\
    Q & = \E{UU^T}.
\end{align*}
We note that, as discussed in the aforementioned appendix, this convergence does not require proper specification of the working variance structure provided the marginal mean model is correctly specified, resembling results from classic GEE theory \citep{LIANG1986}. Appendix \ref{sec:wgt_est} contains the corresponding result for the asymptotic distribution of $\thetahat$ using estimated weights.


As in \citet{Lu2015, NeCamp2017, Seewald2019}, we recommend use of plug-in estimators for $J$ and $Q$ to obtain variance estimates of $\thetahat$. I.e., we take $\hat{\Sigma}_{\thetahat} = \frac{1}{N}\hat{J}\inv \hat{Q} \hat{J}\inv$ where:
\begin{align*}
    \hat{J} & = \frac{1}{N}\SumiN\Sumd I_i(d) W_i D(d,\bX_i;\thetahat)^T V^d(\bX_i; \alphahat)\inv D(d,\bX_i;\thetahat),\\
    \hat{Q} & = \frac{1}{N}\SumiN U_i U_i^T.
\end{align*}

\subsection{Hypothesis Testing for cAI Comparisons}

We can conduct inference on any linear combination of $\theta$ by using the univariate Wald statistic $Z=\sqrt{N}\dfrac{c^T\thetahat}{\sqrt{c^T\hat{\Sigma}_{\thetahat}c}}$ to test the null hypothesis $H_0: c^T\theta=0$. We note that proper choice of $c$ can correspond to inference on a wide variety of primary aim comparisons, as discussed in Section \ref{sec:mm:lmm:parms}. Lastly, we recommend use of the finite-sample adjustments developed for clustered SMART analyses discussed in \citet{Pan_U}. In particular, we employ the \textit{Enforcing Nonnegative Correlation}, \textit{Student's t}, and \textit{Bias Correction} adjustments \citep{Pan_U}.

\section{Data Analysis}\label{sec:data_an}
This section presents an analysis of the ASIC trial, examining weekly CBT delivery using the methods described above. Section \ref{sec:data_an:rev} revisits the trial’s original primary aim, reanalyzing the difference between (\textit{REP+Coaching}, \textit{Facilitation}) and (\textit{REP}, \textit{No Facilitation}) with respect to expected SP-level aggregate CBT delivery, using weekly measurements rather than a single end-of-study measure. Section \ref{sec:data_an:long} explores additional scientific questions that arise from the longitudinal trajectories made analyzable by the availability of repeated outcome measurements.

As in the original primary aims analysis, \citet{Smith2022}, we use multiple imputation with chained equations to address missing data \citep{Azur2011}. As in \citet{Smith2022}, all analyses shown in this section employ Rubin's rules for summarizing analyses across the multiply imputed data sets \citep{Rubin1987}.

\subsection{Original Primary Aim Analysis: Revisited}\label{sec:data_an:rev}

In this section, we revisit the primary aim of the ASIC trial, employing the methods presented in this manuscript. As in \citet{Smith2022}, we condition on the six pre-registered baseline school-level covariates listed in \citet{Kilbourne2018} -  $(i)$ whether the school has more or less than 500 students, $(ii)$ whether a majority of students at the school are on free/reduced lunch programs, $(iii)$ urbanicity of school (i.e., rural/urban), $(iv)$ aggregate school professionals education (prior to randomization), $(v)$ aggregate school professionals tenure (prior to randomization), $(vi)$ whether school professionals at the school delivered CBT prior to randomization. 

While \citet{Smith2022} used the approach outlined in \citet{NeCamp2017} to model expected end-of-study outcome for each embedded cAI, we use the repeated CBT delivery measurements to model full mean trajectories. For these analyses, we employ the marginal mean model discussed in Section \ref{sec:mm}, and a working variance model that is heterogeneous with respect to time and cAI, and has exchangeable between-individual and AR(I) within-person correlation structures. Table \ref{tab:ASIC_parms} below shows the results of this analysis, with $\gamma_./\eta_.$ variables corresponding to the causal/nuisance parameters discussed in Section \ref{sec:mm}. Table \ref{tab:ASIC_corr_parms} in Appendix \ref{sec:add_figs} shows the correlation estimates for each of the four embedded cAIs.


\begin{table}[H]
\centering
\begin{tabular}{ccccc}
  \hline
Parameter & Estimate & SE & T Score & $95\%$ CI \\ \hline 
$\gamma_0$ & 0.502 & 0.19 & 2.60 & (0.12, 0.89) \\ 
  $\gamma_1$ & 0.142 & 0.05 & 3.01 & (0.05, 0.24) \\ 
  $\gamma_2$ & -0.041 & 0.05 & -0.84 & (-0.14, 0.06) \\ 
  $\gamma_3$ & 0.046 & 0.02 & 2.73 & (0.01, 0.08) \\ 
  $\gamma_4$ & 0.005 & 0.02 & 0.28 & (-0.03, 0.04) \\ 
  $\gamma_5$ & 0.020 & 0.01 & 1.51 & (-0.01, 0.05) \\ 
  $\gamma_6$ & 0.000 & 0.01 & 0.04 & (-0.03, 0.03) \\ 
  $\eta_{Lunch}$ & 0.055 & 0.32 & 0.17 & (-0.58, 0.69) \\ 
  $\eta_{Delivered}$ & 0.331 & 0.30 & 1.10 & (-0.27, 0.93) \\ 
  $\eta_{Urbanicity}$ & 0.094 & 0.36 & 0.26 & (-0.62, 0.81) \\ 
  $\eta_{Size}$ & -0.048 & 0.41 & -0.12 & (-0.86, 0.77) \\ 
  $\eta_{SP Edu}$ & 0.492 & 0.58 & 0.85 & (-0.65, 1.64) \\ 
  $\eta_{SP Ten}$ & 0.001 & 0.03 & 0.03 & (-0.05, 0.05) \\ 
   \hline
\end{tabular}
\caption{Marginal Mean Parameter Estimates - ASIC}\label{tab:ASIC_parms}
\end{table}

The estimates in Table \ref{tab:ASIC_parms} induce mean trajectory estimates for weekly trends in CBT delivery by embedded cAI, shown in Figure \ref{fig:ASIC_traj}.

\begin{figure}[ht]
    \centering
    \includegraphics[width=12cm]{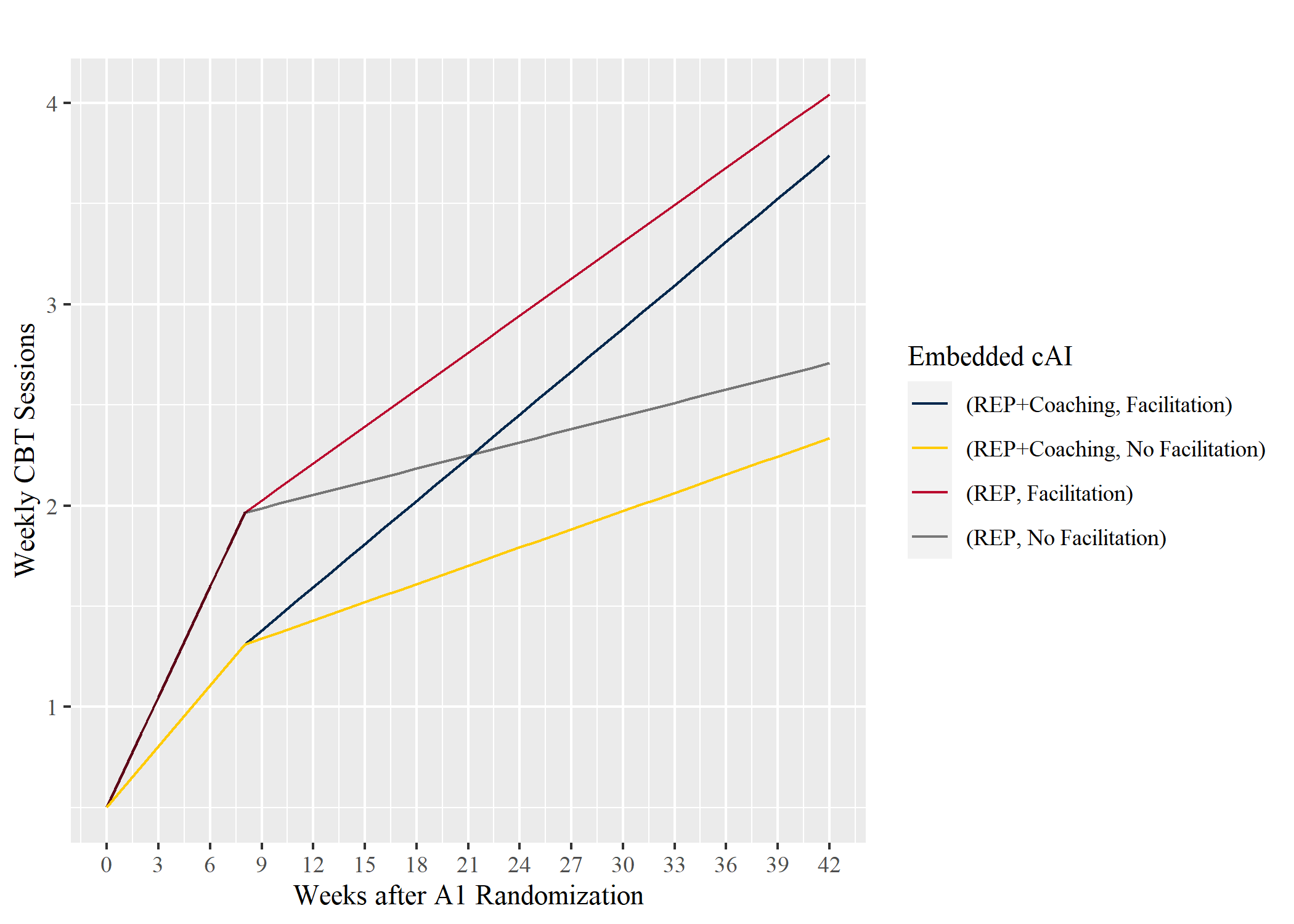}
    \caption{Average Weekly CBT Sessions by Embedded cAI - ASIC}
    \label{fig:ASIC_traj}
\end{figure}

Figure \ref{fig:ASIC_ci_comp} shows the point estimates and $95\%$ confidence intervals for the six pairwise contrasts in embedded cAIs with respect to the ASIC primary outcome. As shown in the figure, none of the embedded cAIs were significantly different from each other in this respect. In particular, we see a near-zero effect in the primary aim comparison. Subsequently, we would fail to reject the null hypothesis that $\E{\sum Y_t^{(1,1)}}=\E{\sum Y_t^{(-1,-1)}}$. This does not suggest a null effect for either embedded cAI; rather, it merely indicates a lack of evidence for a difference in the two. Such information can be useful to decision-makers - if the more intensive (\textit{REP+Coaching}, \textit{Facilitation}) cAI does not outperform the less intensive (\textit{REP}, \textit{No Facilitation}), then policymakers may wish to employ the more easily-scalable option.

Furthermore, as the primary aim in question is a static end-of-study comparison, we can compare our approach with the existing approach to analyze such aims presented in \citet{NeCamp2017}. As shown in Figure \ref{fig:ASIC_ci_comp}, both approaches give similar point estimates for the difference in expected total CBT delivery in each of the six pairwise comparisons between embedded cAIs. However, the estimates using the longitudinal approach had $95\%$ confidence intervals that were $\approx 26\%$ narrower, on average, compared with those obtained via the static approach. Section \ref{sec:sim_study:power} further discusses this phenomenon in the case when one models three time points.

\begin{figure}[ht]
    \centering
    \includegraphics[width=12cm]{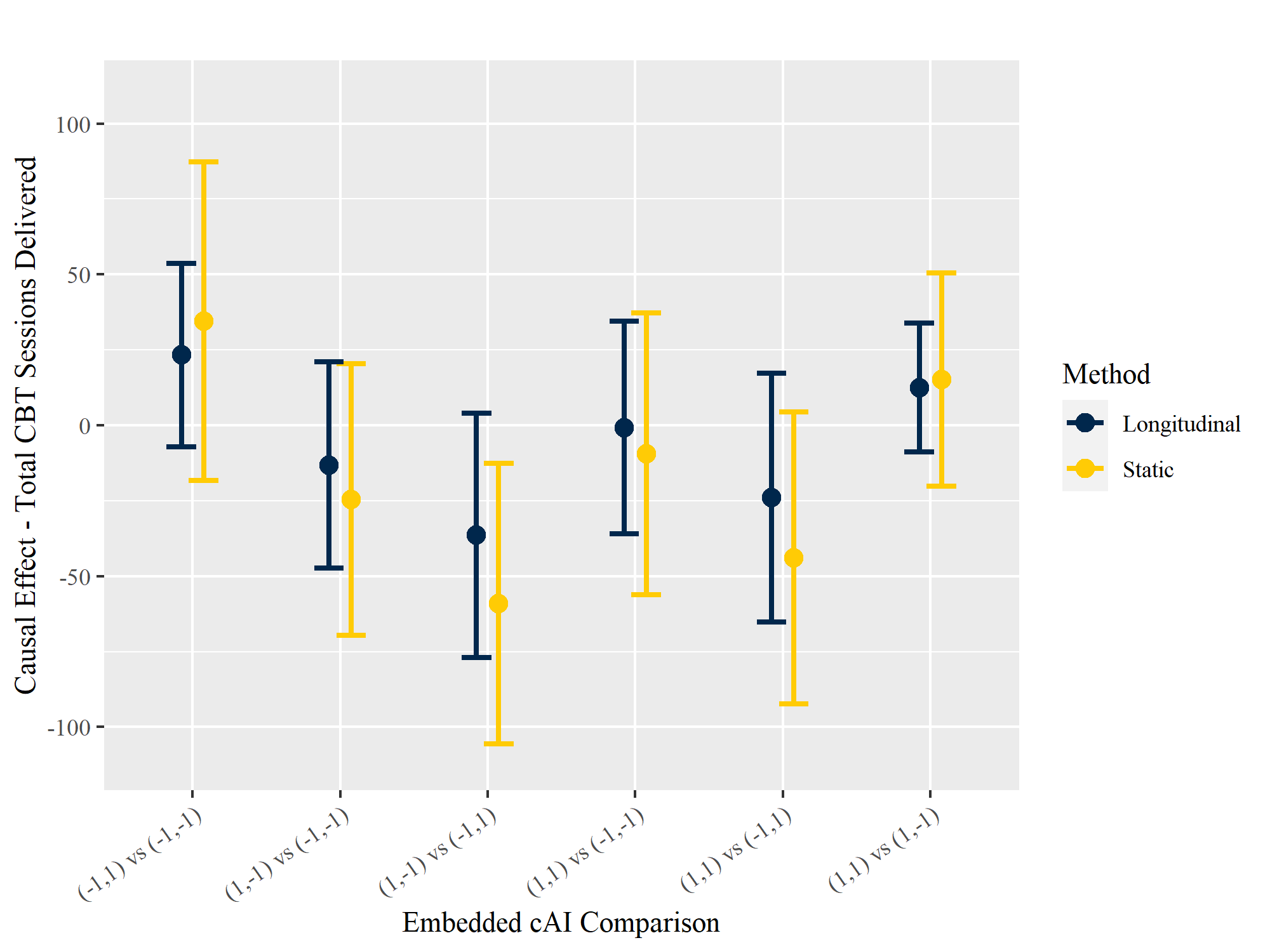}
    \caption{Longitudinal and Static Estimates of Aggregate CBT Delivery Differences by Embedded cAI - ASIC}
    \label{fig:ASIC_ci_comp}
\end{figure}

\subsection{Longitudinal Follow-Up Analyses}\label{sec:data_an:long}
\subsubsection{Second-Stage Slope}
Examining Figure \ref{fig:ASIC_traj} presents several insights into the trajectory of CBT delivery that would be masked in an end-of-study analysis. As shown in the figure, after the second decision point, the marginal mean trajectories for CBT delivery for (\textit{REP}, \textit{Facilitation}) and (\textit{REP+Coaching}, \textit{Facilitation}) both outpace their \quotes{\textit{No Facilitation}} counterparts. This pattern suggests that adding \textit{Facilitation} for non-responding schools may cause schools to more quickly shed barriers for CBT delivery. 

While the methods presented in this manuscript were introduced in the context of primary aim analyses, they are equally applicable to secondary and exploratory aim analyses. In particular, collecting and analyzing repeated measurements of CBT delivery allows us to conduct inference to investigate the impact of \textit{Faciliation} by comparing second-stage treatment slopes for CBT delivery. Table \ref{tab:ASIC_slope} below shows the results of this comparison for the six pairwise comparisons of the four embedded cAIs in ASIC.

\begin{table}[ht]
\centering
\begin{tabular}{ccccc}
  \hline
Pairwise cAI Comparison & Estimate & SE & $95\%$ Confidence Interval & p-Value \\ 
  \hline
(1,1) vs (1,-1) & 0.041 & 0.04 & (-0.03, 0.12) & 0.276 \\ 
  (1,1) vs (-1,1) & 0.010 & 0.04 & (-0.08, 0.10) & 0.811 \\ 
  (1,1) vs (-1,-1) & 0.050 & 0.04 & (-0.03, 0.13) & 0.242 \\ 
  (1,-1) vs (-1,1) & -0.031 & 0.04 & (-0.11, 0.05) & 0.454 \\ 
  (1,-1) vs (-1,-1) & 0.008 & 0.04 & (-0.07, 0.09) & 0.829 \\ 
  (-1,1) vs (-1,-1) & 0.039 & 0.03 & (-0.03, 0.11) & 0.249 \\ 
   \hline
\end{tabular}
\caption{Comparison of Second Stage Slope for Weekly CBT Delivery - ASIC}\label{tab:ASIC_slope}
\end{table}

\subsubsection{Nonlinear Temporal Trend}
Given the time between second-stage treatment decision and end-of-study, a trial designer may prefer a marginal mean model that evolves at a rate slower than $t$ \citep{Borghi2005}. For example,
\begin{equation} \label{eq:mmm_sqrtt}
\begin{split}
    \Tilde{\mu}_t(a_1, a_{2NR}; \theta) & = 
     \gamma_0 + \eta X_{ij} + \Ind{t\leq t^*}\paren{\gamma_1\sqrt{t} + \gamma_2a_1\sqrt{t}} + \Ind{t > t^*}\Bigg(\gamma_1\sqrt{t^*} + \gamma_2a_1\sqrt{t^*} \\
     & \hspace{.5cm}  + 
     \paren{\gamma_3 + \gamma_4a_1 + \gamma_5a_{2NR} + \gamma_6a_1a_{2NR}}\paren{\sqrt{t}-\sqrt{t^*}}\Bigg).
\end{split}
\end{equation}

Figure \ref{fig:ASIC_sqrt_t} in Appendix \ref{sec:add_figs} shows the marginal mean trajectories under model $\Tilde{\mu}$. Furthermore, we revisit the analysis of second-stage slope under this alternate model in Table \ref{tab:ASIC_slope_sqrtt} in Appendix \ref{sec:add_figs}. As the figure and table suggest, this approach yielded results that were largely consistent with the original, providing little evidence of meaningful difference.

\section{Simulation Study}\label{sec:sim_study}
This section presents three simulation studies to better understand the operating characteristics of the proposed estimator across a variety of settings. Data-generative models were chosen to mimic ASIC, a prototypical clustered SMART, but with three time points: baseline ($t_0=0$), end of first stage of intervention ($t^*=1$), and end of second stage of intervention ($t_T=2$). Without loss of generality, the output from all simulation studies report metrics for the comparison of embedded cAIs $(1,1)$ and $(-1,-1)$ with respect to end-of-study marginal mean estimates. The data-generative models were designed to model real-world data generation from a clustered SMART and allow us to manipulate the $(i)$ true effect size for this comparison, $(ii)$ sample size, $(iii)$ between-person covariance structure, and $(iv)$ within-person covariance structure. Unless otherwise specified, all analyses in the simulation experiments employ the finite-sample adjustments discussed in \cite{Pan_U}. Additional details are provided in Appendix \ref{sec:sim_des}.



\subsection{Estimator Validity}\label{sec:sim_study:validity}

The purpose of the first simulation experiment was to verify the consistency of the estimator and examine empirical performance in small samples. We hypothesized, as supported by the theoretical results stated in Section \ref{sec:est:asy_dist} that in large samples, the estimator would coalesce around the true value. Furthermore, we hypothesized that the estimator would be more volatile in small samples, but generally centered around the true mean.

Table \ref{tab:base_sim} below displays the performance of a correctly-specified estimator on simulated data inspired by weekly CBT session trends in the ASIC data. As shown below, relative bias ($\frac{\paren{\muhat_2((1,1))-\muhat_2((-1,-1))}-\paren{\mu_2((1,1))-\mu_2((-1,-1))}}{\mu_2((-1,-1))}$) is negligible across sample sizes and the estimator tightens around the true difference as $N$ grows. Furthermore, we achieve near-nominal coverage when applying the finite-sample adjustments discussed above. 

\begin{table}[H]
\centering
\begin{tabular}{cccccc}
  \hline
Number of Clusters & Relative Bias & SD & RMSE & Coverage \\ 
  \hline
N=20 & -0.040 & 2.141 & 2.144 & 0.903 \\ 
  N=30 & -0.028 & 1.768 & 1.770 & 0.908 \\ 
  N=40 & -0.009 & 1.532 & 1.532 & 0.921 \\ 
  N=50 & -0.017 & 1.366 & 1.366 & 0.927 \\ 
  N=75 & -0.014 & 1.105 & 1.106 & 0.938 \\ 
  N=100 & -0.004 & 0.966 & 0.966 & 0.942 \\ 
  N=500 &  0.000 & 0.429 & 0.429 & 0.952 \\ 
   \hline
\end{tabular}
\caption{Estimator Performance by Sample Size}\label{tab:base_sim}
\end{table}

\subsection{Efficiency Comparison with Static Analysis}\label{sec:sim_study:power}

The purpose of the second simulation experiment was to examine the effect of incorporating repeated measurements on power when estimating differences in mean end-of-study outcomes. With this inquiry in mind, we compared the method presented in this report with the static approach outlined in \citet{NeCamp2017} (which models the outcome only at time $t=2$). We hypothesized that, when within-unit correlation was high, modeling the outcome trajectory would present a more powerful approach than solely modeling the outcome at end-of-study. We further hypothesized that the two approaches would perform similarly in outcomes with lower within-person correlation.

NeCamp's earlier work on analyses of cSMARTs introduced a sample size formula for comparing embedded cAIs with respect to a single, end-of-study outcome \citep{NeCamp2017}. We intended for this simulation framework to align with the perspective of the primary aim design. Therefore, using this formula, we estimated the minimum sample size required to achieve $80\%$ power for detecting true effect sizes of $0.2$, $0.5$, $0.8$, and $1.0$ using NeCamp's static approach.

To generate data, we considered two separate approaches, intending to examine high- and low-correlation settings inspired by ASIC data. Both approaches involve an AR(1) correlation structure, homogeneous over embedded cAI ($\rho_{high}\approx 0.58,\s\rho_{low}\approx 0.27$). In Tables \ref{tab:power_comp_lc} and \ref{tab:power_comp}, we do not employ any finite-sample adjustments (as NeCamp's sample size formula does not incorporate such adjustments).

\begin{table}[H]
\centering
\begin{subtable}{\textwidth}
\centering
\begin{tabular}{ccccccc}
\hline
\multirow{2}{*}{Effect Size $(N)$}  & \multicolumn{2}{c}{RMSE} & \multicolumn{2}{c}{Coverage} & \multicolumn{2}{c}{Power} \\\cline{2-7}
     & Static & Long. & Static & Long. & Static & Long.\\
  \hline
$\delta=0.2$ ($n=633$) & 0.33 & 0.33 & 0.95 & 0.95 & 0.81 & 0.82 \\ 
  $\delta=0.5$ ($n=102$) & 0.83 & 0.82 & 0.94 & 0.94 & 0.82 & 0.83 \\ 
  $\delta=0.8$ ($n=40$) & 1.32 & 1.32 & 0.93 & 0.92 & 0.83 & 0.83 \\ 
  $\delta=1.0$ ($n=26$) & 1.65 & 1.65 & 0.91 & 0.91 & 0.84 & 0.85 \\ 
 \hline
\end{tabular}
\caption{Low Correlation}\label{tab:power_comp_lc}
\end{subtable}
\begin{subtable}{\textwidth}
\centering\begin{tabular}{ccccccc}
\hline
\multirow{2}{*}{Effect Size $(N)$}  & \multicolumn{2}{c}{RMSE} & \multicolumn{2}{c}{Coverage} & \multicolumn{2}{c}{Power} \\\cline{2-7}
     & Static & Long. & Static & Long. & Static & Long.\\
  \hline
$\delta=0.2$ ($n=661$) &  7.03 &  6.18 & 0.95 & 0.95 & 0.81 & 0.90 \\ 
  $\delta=0.5$ ($n=106$) & 17.58 & 15.53 & 0.94 & 0.94 & 0.82 & 0.90 \\ 
  $\delta=0.8$ ($n=42$) & 28.15 & 24.77 & 0.93 & 0.93 & 0.83 & 0.91 \\ 
  $\delta=1.0$ ($n=27$) & 35.04 & 30.74 & 0.91 & 0.92 & 0.85 & 0.92 \\  
 \hline
\end{tabular}\caption{High Correlation}\label{tab:power_comp}
\end{subtable}
\caption{Static vs Longitudinal Power Comparison}
\end{table}

Table \ref{tab:power_comp} shows material power gains for the longitudinal estimator over the static estimator. In this \quotes{high correlation} setting, outcomes at times $t=0,1$ to provide more information about outcomes at time $t=2$ than in the less-correlated setting. Similarly, the longitudinal estimator sees material improvement over its static counterpart in terms of RMSE in this environment as well.

While a longitudinal approach outperforms a static approach in a highly-correlated data-generative setting, the two approaches perform similarly in a low-correlation setting. Therefore, Tables \ref{tab:power_comp_lc} and \ref{tab:power_comp} suggest that, when repeated measures of an outcome are available, an analyst will not sacrifice statistical performance by modeling the outcome longitudinally.

We end this discussion by noting two important caveats to the results in Table \ref{tab:power_comp}. First, the efficiency gains highlighted in this table are not due to modeling assumptions on the temporal trajectory of the outcome $Y$. In this simulation study, we employed marginal model \ref{eq:mmm_ex}, which is fully saturated when modeling only $Y_{t_0}$, $Y_{t^*}$, and $Y_{t_T}$. This suggests that analysts can incorporate repeated measurements in end-of-study outcome comparisons to potentially achieve material precision gains without employing stringent modeling assumptions. It is likely the case that including more time points in an analysis can cause further precision gains, although it may expose the analyst to marginal mean model misspecification (as incorporating additional outcomes measurement would highlight Equation \ref{eq:mmm_ex}'s assumption of linear trends between times $t_0$ and $t^*$, as well as $t^*$ and $t_T$). Second, this pattern may cause a trial designer to reject NeCamp's sample size calculator in the hopes that incorporating repeated measurements can increase power. While these analyses suggest that one would not \textit{lose} power by incorporating repeated measurements, one is not guaranteed to see material power gain. Given the static/longitudinal power alignment in Table \ref{tab:power_comp_lc}, we recommend the use of NeCamp's sample size formula to power a study even when repeated measurements are to be available, following a conservative approach to ensure adequate power under varying conditions.

\subsection{Working Variance Modeling}\label{sec:sim_study:wvm}

Modeling longitudinal outcomes in clustered SMARTs requires careful specification of the working variance, as variance modeling plays a central role in both estimation and inference. A primary contribution of this paper is the development of methods that account for these variance structures, making it essential to assess how different working variance choices influence key performance metrics. The purpose of the third simulation experiment was to examine the impacts of working variance modeling choices on estimator performance. We hypothesized that correctly specifying the marginal variance structure would show material efficiency gains, but little improvement in coverage (given the asymptotic normality of the estimator under working variance misspecification, as discussed in Section \ref{sec:est:asy_dist}).

Figure \ref{fig:wvm_miss} shows the comparative performance of estimators with various working variance models. As discussed in Appendix \ref{sec:sim_des:parm_sel} of Supplement 2, this comparison is based on a data-generative model with complex correlation and  heteroscedastic variance.

Figure \ref{fig:wvm_miss:cov} shows that choice of working variance structure does not heavily impact coverage rate. This is in line with the results in Section \ref{sec:est:asy_dist}, as the asymptotic normality of the estimator holds under working variance misspecification. While these results suggest choice of working variance estimate will not affect the \textit{validity} of parameter estimates/confidence intervals, Figure \ref{fig:wvm_miss:se} suggests these choices can impact estimator efficiency. For a given working variance model ($wv$), this plot shows $\overline{\paren{SE_{wv}/SE_{iid}}}$ (i.e., the average ratio of the standard error obtained under $wv$ and that obtained under a homoscedastic-independent working variance model) across sample sizes. As $N$ grows, the correctly specified working variance model, represented by the dashed purple line, outperforms its counterparts. For large $N$, it achieves a $\approx 10\%$ efficiency gain over the IID approach. Additionally, heteroscedastic working variance models tend to outperform those with temporally homogeneous variance.



\begin{figure}[ht]
    \centering
    \begin{subfigure}{0.98\textwidth}
        \centering
        \includegraphics[width=\linewidth, keepaspectratio]{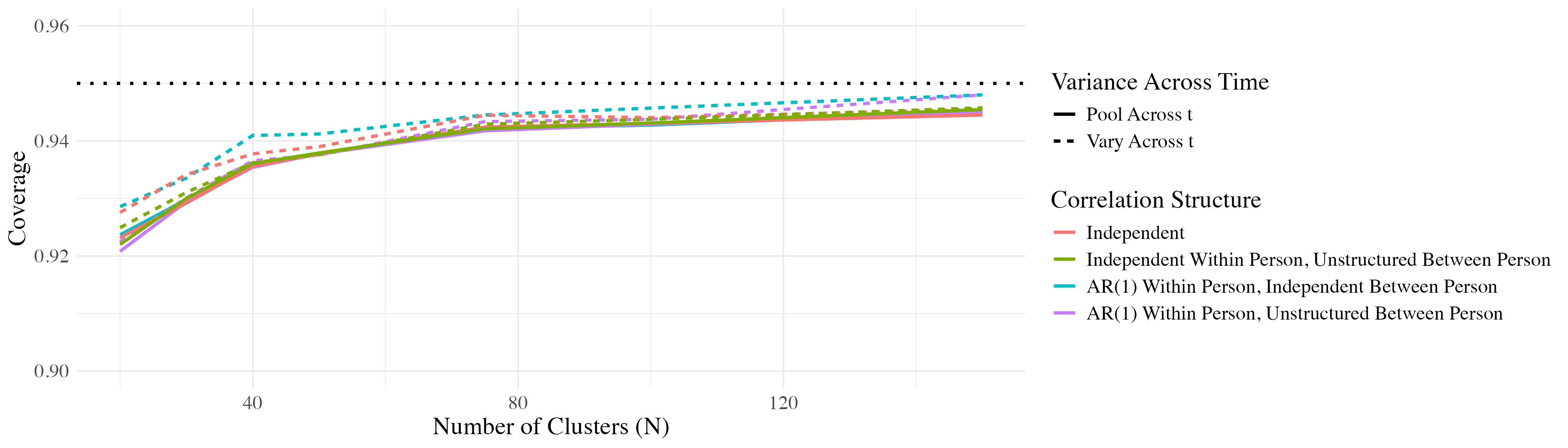}
        \caption{Coverage Rates}
        \label{fig:wvm_miss:cov}
    \end{subfigure}
    \begin{subfigure}{0.98\textwidth}
        \centering
        \includegraphics[width=\linewidth]{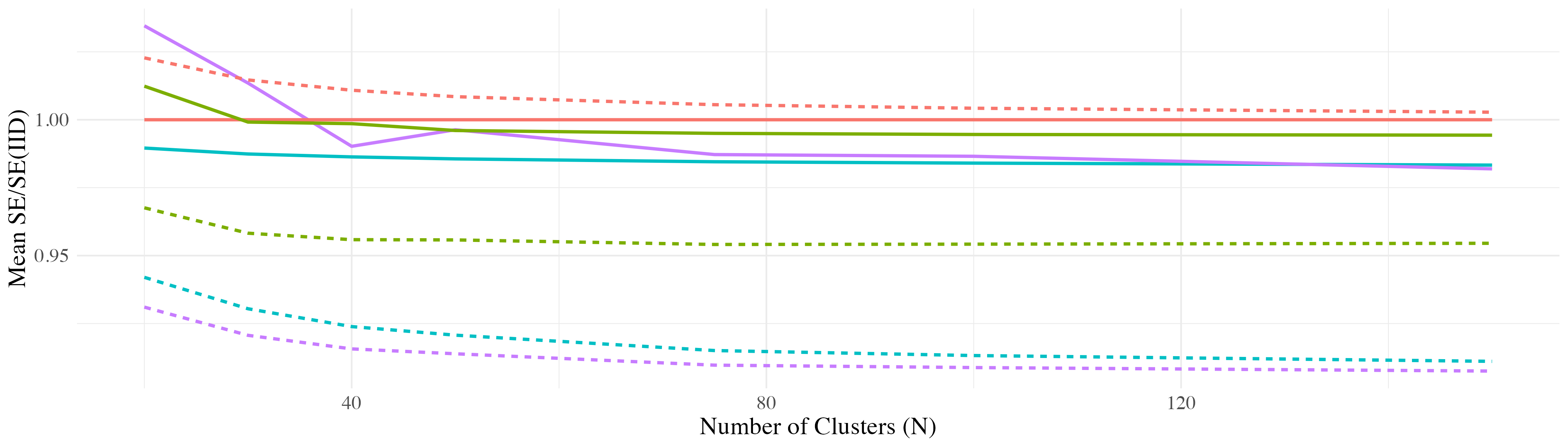}
        \caption{Relative Efficiency (Homoscedastic-Independent Baseline)}
        \label{fig:wvm_miss:se}
    \end{subfigure}
    \caption{Performance of Different Working Variance Models Across Sample Sizes}
    \label{fig:wvm_miss}
\end{figure}

As a follow-up to this analysis, we investigated the efficiency trade-offs associated with employing complex working variance models when the true data-generating variance structure is simple (i.e., homoscedastic and independent). Figure \ref{fig:wvm_miss_iid} in Appendix \ref{sec:add_figs} presents the corresponding results in this scenario. The figure demonstrates that in such settings, the choice of working variance model has minimal impact on efficiency. These findings suggest that the risks associated with under-specifying the working variance structure may outweigh the potential inefficiencies introduced by over-specification.

\section{Discussion}

The sections above detail a novel approach for analyzing longitudinal data arising from clustered SMARTs. This method equips domain scientists with a tool to explore temporal patterns in outcomes in their cSMART analyses. In the context of designing adaptive interventions, understanding the trajectory of improvement in a target outcome is often crucial. While end-of-study analyses in cSMARTs can provide valuable insights, they may overlook important temporal dynamics that could inform the optimization of adaptation strategies in a cAI \citep{NahumShani2020}. For example, as observed in ASIC, mapping the trend in expected weekly CBT delivery by embedded cAI suggested that protocols which incorporated \textit{Facilitation} for struggling schools led to faster rates of CBT delivery growth compared with protocols which did not. As discussed in Section \ref{sec:data_an}, we can use the proposed method to conduct formal inference with respect to such comparisons. Additionally, Section \ref{sec:data_an} and the power analysis in Section \ref{sec:sim_study:power} demonstrated that incorporating repeated outcome measurements can yield more precise estimates of traditional end-of-study objectives compared to existing methods that rely solely on the final outcome.

Following the introduction of methods for analyzing repeated measures in individually-randomized SMARTs, researchers have been able to address a wider range of substantive questions regarding temporal treatment effects. These more detailed SMART analyses hold significant promise for the design of adaptive interventions, offering researchers deeper insight to inform adaptive intervention development \citep{NahumShani2020}. Similarly, exploring such questions in clustered settings can give domain scientists the tools to construct more effective clustered adaptive interventions.

\nocite{Keener2010}
\bibliographystyle{plainnat} 


\newpage

\thispagestyle{empty}  
\null
\vfill

\begin{center}
    \LARGE \textbf{Supplementary Material}
\end{center}

\vfill
\null

\newpage

\setcounter{suppsection}{0} 
\setcounter{table}{0}
\setcounter{figure}{0}
\renewcommand{\thetable}{A\arabic{table}}
\renewcommand{\thefigure}{A\arabic{figure}}

\renewcommand{\thesection}{\texorpdfstring{A\arabic{suppsection}}{A\arabic{suppsection}}}
\renewcommand{\thesubsection}{A\arabic{suppsection}.\arabic{subsection}}
\renewcommand{\thesubsubsection}{A\arabic{suppsection}.\arabic{subsection}.\arabic{subsubsection}}

\addtocontents{toc}{\protect\setlength{\cftsecnumwidth}{2.5em}}  
\addtocontents{toc}{\protect\setlength{\cftsubsecnumwidth}{3.5em}}    
\addtocontents{toc}{\protect\setlength{\cftsubsubsecnumwidth}{4.5em}} 

\refstepcounter{suppsection} 
\phantomsection
\section{Additional Figures and Tables}\label{sec:add_figs}

\begin{table}[ht]
\centering
\begin{tabular}{ccc}
  \hline
Embedded cAI & $\rho_w$ (AR(1) Structure) & $\rho_b$ (Exchangeable Structure)\\ 
  \hline
(\textit{REP+Coaching}, \textit{Facilitation}) & 0.09 & 0.14 \\ 
  (\textit{REP+Coaching}, \textit{No Facilitation}) & 0.44 & 0.05 \\ 
  (\textit{REP}, \textit{Facilitation}) & 0.30 & 0.03 \\ 
  (\textit{REP}, \textit{No Facilitation}) & 0.33 & 0.23 \\ 
   \hline
\end{tabular}
\phantomsection
\caption{Marginal Correlation Estimates - ASIC}\label{tab:ASIC_corr_parms}
\end{table}

\begin{table}[ht]
\centering
\begin{tabular}{ccccc}
  \hline
Comparison & Estimate & SE & CI & p-Value \\ 
  \hline
(1,1) vs (1,-1) & 0.04 & 0.04 & (-0.04, 0.11) & 0.317 \\ 
  (1,1) vs (-1,1) & 0.01 & 0.04 & (-0.07, 0.09) & 0.817 \\ 
  (1,1) vs (-1,-1) & 0.05 & 0.04 & (-0.04, 0.13) & 0.262 \\ 
  (1,-1) vs (-1,1) & -0.03 & 0.04 & (-0.11, 0.05) & 0.490 \\ 
  (1,-1) vs (-1,-1) & 0.01 & 0.04 & (-0.07, 0.09) & 0.834 \\ 
  (-1,1) vs (-1,-1) & 0.04 & 0.03 & (-0.03, 0.10) & 0.283 \\ 
   \hline
\end{tabular}
\phantomsection
\caption{ASIC: Second-Stage Slope Comparison for Alternate Marginal Model}
\label{tab:ASIC_slope_sqrtt}
\end{table}

\begin{figure}[H]
    \centering
    \includegraphics[width=10cm]{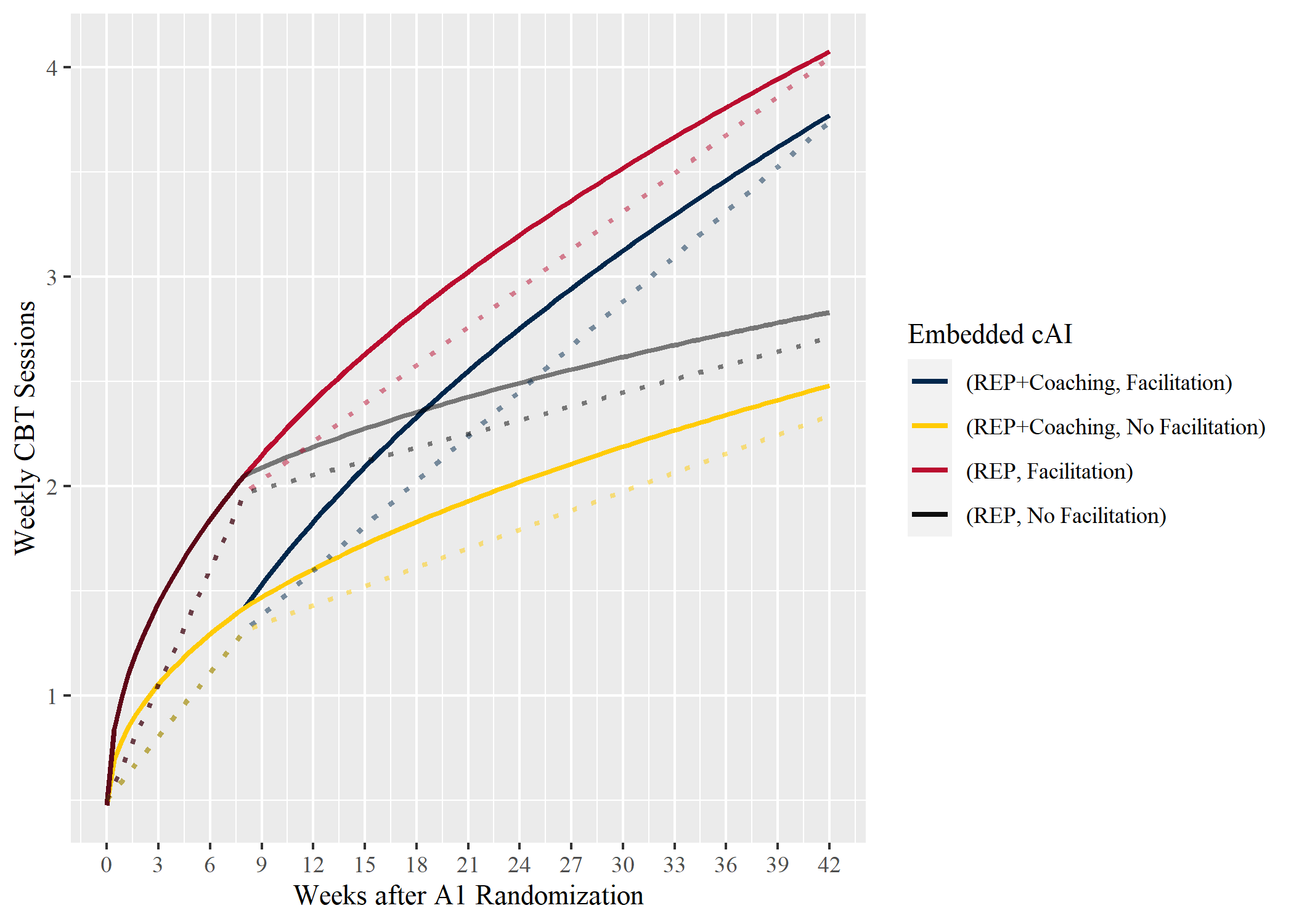}
    \captionsetup{belowskip=-10pt}
    \phantomsection
    \caption{ASIC Trajectories under Alternate Marginal Mean Model}
    \caption*{\raggedright\scriptsize\textit{Dotted lines represent linear trajectories under $\mu$.}}
    \label{fig:ASIC_sqrt_t}
\end{figure}

\begin{figure}[H]
    \centering
    \begin{subfigure}{0.98\textwidth}
        \centering
        \includegraphics[width=\linewidth, keepaspectratio]{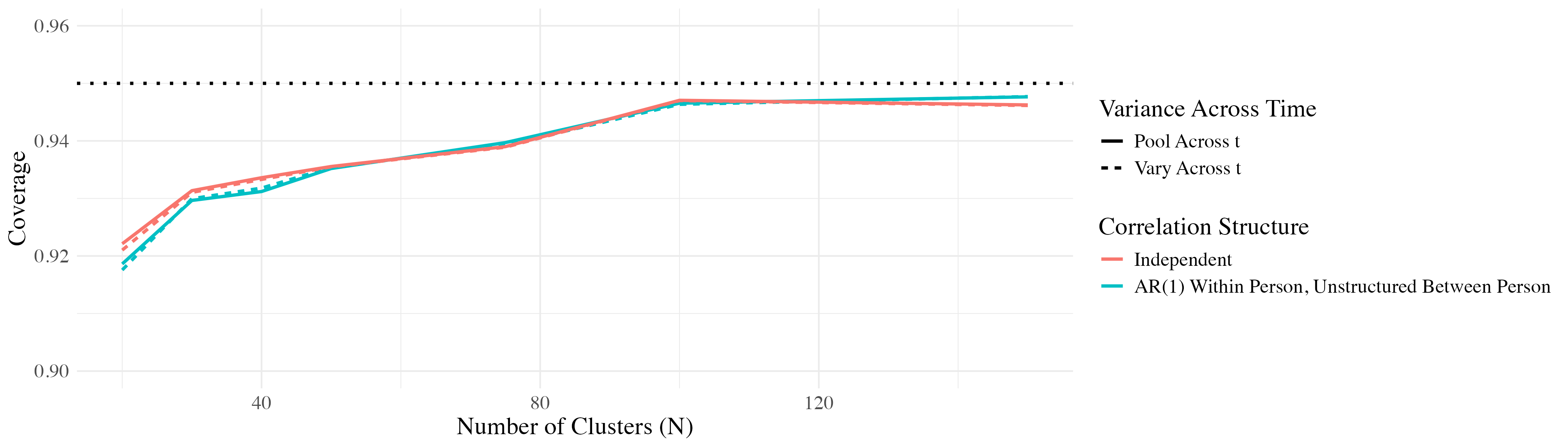}
        \phantomsection
        \caption{Coverage Rates}
        \label{fig:wvm_miss_iid:cov}
    \end{subfigure}
    \begin{subfigure}{0.98\textwidth}
        \centering
        \includegraphics[width=\linewidth]{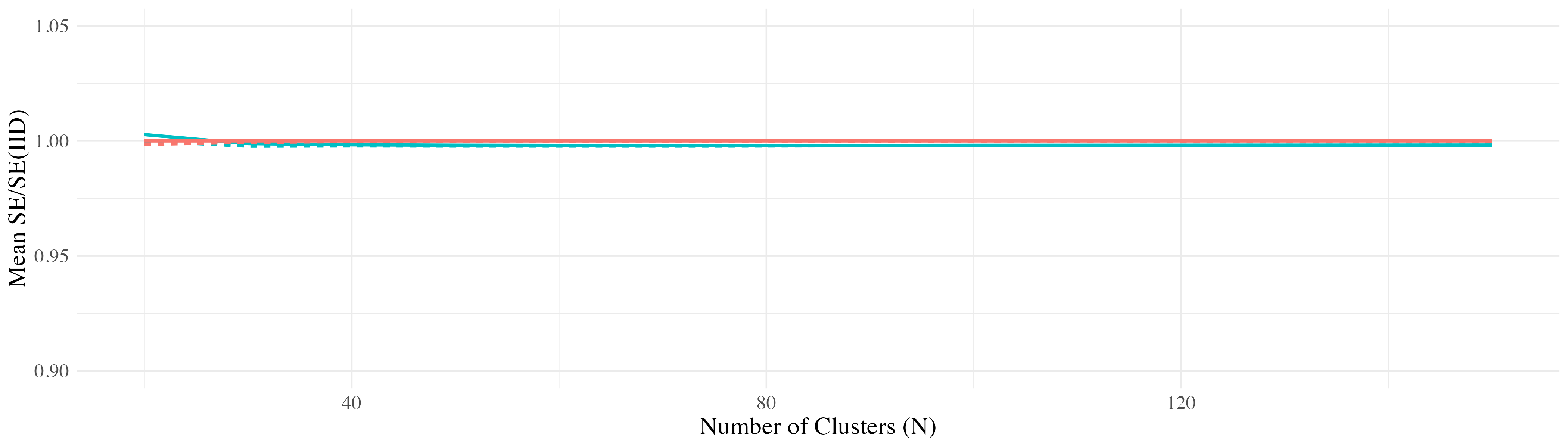}
        \phantomsection
        \caption{Relative Efficiency (IID Baseline)}
        \label{fig:wvm_miss_in:se}
    \end{subfigure}
    \phantomsection
    \caption{Performance of Different Working Variance Models Across Sample Sizes - Homoscedastic-Independent Data Generative Model}
    \label{fig:wvm_miss_iid}
\end{figure}

\refstepcounter{suppsection}
\phantomsection
\section{Further Discussion of Working Variance}\label{sec:var_dis}
As discussed in Section \ref{sec:mm_work_var} of the main body, we let $V^d\paren{\bX;\alpha}$ denote our working variance model; parameterized by $\alpha=\begin{bmatrix}
    \sigma\\ \rho
\end{bmatrix}$, where $V_i^d\paren{\bX_i;\alpha}$ takes the general form
$$V^d\paren{\bX_i;\alpha} = \bS_i^d(\sigma)\poh \boldsymbol\Rho_i^d(\rho) \bS_i^d(\sigma)\poh.$$

Here, the $n_i(T+1)\times n_i(T+1)$ matrices $\bS_i^d(\sigma)$ and $\boldsymbol\Rho_i^d(\rho)$ correspond to variance and correlation matrices (respectively), and have the broad structures described below.

$$\bS_i^d(\sigma) = 
\begin{bmatrix}
    S^d(\sigma) & & 0\\
     & \ddots & \\
    0 & & S^d(\sigma)\\
\end{bmatrix}\text{, } 
\boldsymbol\Rho_i^d(\rho) = 
\begin{bmatrix}
    W^d(\rho) & B^d(\rho) & \dots & B^d(\rho)\\
    B^d(\rho) & W^d(\rho)  & & \\
    \vdots &  & \ddots & \\
    B^d(\rho) & & & W^d(\rho)\\
\end{bmatrix},$$

where $S^d(\sigma)$ is a $(T+1)\times (T+1)$ diagonal matrix with entries corresponding to $\V{Y_{t_0}\d \mid \bX} \ddd \V{Y_{t_T}\d\mid \bX}$ and $W^d(\rho)$ and $B^d(\rho)$ are within and between person $(T+1)\times (T+1)$ correlation matrices representing $corr\paren{\vct{Y}\d_{i,j,.}, \vct{Y}\d_{i,j,.}\mid \bX}$ and $corr\paren{\vct{Y}\d_{i,j,.}, \vct{Y}\d_{i,j',.}\mid \bX}$ (respectively).

The exact forms of $S^d(\sigma)$, $W^d(\rho)$, $B^d(\rho)$ depend on the variance structure the analyst chooses to model. Using our illustrative model from Section \ref{sec:mm_work_var} of the main body, we can consider a working variance model that is exchangeable within-person and between-person. Furthermore, we will assume heterogeneous variance across time and adaptive intervention. This choice of variance structure induces the following forms of $S^d(\sigma)$, $W^d(\rho)$, and $B^d(\rho)$
$$
S^d(\sigma) = \begin{bmatrix}
    \paren{\sigma^{d}_{t_0}}^2 & & 0\\
    & \ddots & \\
    0 & & \paren{\sigma^{d}_{t_T}}^2
\end{bmatrix},
\hspace{.25 cm} W^d(\rho) = \begin{bmatrix}
    1 & & \rho^{d}_w\\
    & \ddots & \\
    \rho^{d}_w & & 1
\end{bmatrix},\text{ and}
\hspace{.25 cm} B^d(\rho) = \begin{bmatrix}
    \rho^{d}_b & & \rho^{d}_b\\
    & \ddots & \\
    \rho^{d}_b & & \rho^{d}_b
\end{bmatrix},
$$
where $\sigma = \begin{bmatrix} \sigma^{d_1}_{t_0} & \sigma^{d_1}_{t_1} & \dots & \sigma^{d_1}_{t_T} & \sigma^{d_2}_{t_0} & \dots & \dots & \sigma^{d_D}_{t_T}
\end{bmatrix}^T, $ and 
$\rho = \begin{bmatrix}
    \rho^{d_1}_w & \rho^{d_1}_b & \rho^{d_2}_w & \rho^{d_2}_b & \dots & \rho^{d_D}_w & \rho^{d_D}_b
\end{bmatrix}^T$.

\refstepcounter{suppsection}
\phantomsection
\section{Variance Estimation}\label{sec:var_est}
Section \ref{sec:est} of the main body discusses how to construct variance estimates under an example working variance for each step in the model fitting procedure presented in the main body of the report. This appendix discusses how to adapt this algorithm for alternative working variance structures.

Recall the general form of the working variance, discussed in detail in Appendix \ref{sec:var_dis}:
$$V^d\paren{\bX_i;\alpha} = \bS_i^d(\sigma)\poh \boldsymbol\Rho_i^d(\rho) \bS_i^d(\sigma)\poh.$$

This appendix concerns the estimation of the $\alpha$ parameters which define $V^d\paren{\bX_i;\alpha}$. In the case of repeated measures in clustered SMARTs, this involves three steps: Estimation of the marginal variance components, estimation of the within-person correlation components, and estimation of the between-person correlation components.

Recall from Section \ref{sec:est} of the main body that variance estimation at each step ($s$) in the iterative model fitting algorithm relies on the model residuals from the previous step: 
$\hat{\vct{\varepsilon}}^{d,s}_i=\vbY_i - \vbmu(d,\bX_i; \thetahat_{s-1})$.

\subsection{Marginal Variance Components}
We first seek an estimate of $\V{Y_{i,j,t}\d\mid \bX}$. We consider two separate modeling decisions:
\begin{enumerate}
    \item $S^d(\sigma) = 
    \begin{bmatrix}
        \paren{\sigma^{d}_{t_0}}^2 & & 0\\
        & \ddots & \\
        0 & & \paren{\sigma^{d}_{t_T}}^2
    \end{bmatrix}$, and
    \item $S^d(\sigma) = 
    \begin{bmatrix}
        \paren{\sigma_{t_0}}^2 & & 0\\
        & \ddots & \\
        0 & & \paren{\sigma_{t_T}}^2
    \end{bmatrix}.$
\end{enumerate}
The first case describes a working variance model in which the marginal variances can differ across embedded AI, whereas the second describes a marginal variance model that is homogeneous with respect to embedded AI. Table \ref{tab:var_est:mar_var} below presents estimators for these quantities.

\begin{table}[ht]
    \centering
    \begin{tabular}{ccc}
    \hline
    Marginal Variance Structure & $\V{Y_{i,j,t}\d\mid \bX}$ & Estimator \\
    \hline
    Heterogeneous with respect to embedded cAI  &   
    $\paren{\sigma_t^{d}}^2$
    & $\paren{\sigmahat^{d}_t}^2  = 
    \ddfrac{\Sum{i}{N}I_i(d)W_i\Sum{j}{n_i}\paren{\hat{\varepsilon}^{d,s}_{i,j,t}}^2}{\Sum{i}{N}W_iI_i(d)n_i}$\\
    Homogeneous with respect to embedded cAI & 
    $\sigma_t^2$ &  $\paren{\sigmahat_t}^2  = 
    \ddfrac{\Sum{i}{N}\Sumd I_i(d)W_i\Sum{j}{n_i}\paren{\hat{\varepsilon}^{d,s}_{i,j,t}}^2}{\Sum{i}{N}\Sumd I_i(d)W_in_i}$\\
    \hline
    \end{tabular}
    \phantomsection
    \caption{Marginal Variance Estimators}
    \label{tab:var_est:mar_var}
\end{table}

In addition to pooling over embedded AIs, a researcher could also pool over time, taking either
$$
    \left(\sigmahat^d\right)^2  = \frac{1}{T}\sum_{k=0}^{T} \left(\sigmahat^d_{t_k}\right)^2\text{ or }
    \left(\sigmahat\right)^2 = \frac{1}{T}\sum_{k=0}^{T} \left(\sigmahat_{t_k}\right)^2,
$$
depending on whether they wanted to pool over embedded cAI as well (where $\sigmahat_t$ and $\sigmahat_t^d$ are as obtained above). 

\subsection{Within-Person Correlation}
After obtaining marginal variance estimates, we must estimate within-person correlation components. This corresponds to estimating the $(T+1)\times (T+1)$ diagonal blocks of $\boldsymbol\Rho_i$.

We will consider four different structures for these blocks. Here, we present estimators that are heterogeneous with respect to embedded AI; however, one could homogenize over adaptive interventions just as in Table \ref{tab:var_est:mar_var}.\footnote{I.e., by summing over $d\in\mathcal{D}$ in both the numerator and denominator of the estimator.} 
We note that, while an analyst may wish to pool across time and/or embedded cAI to estimate marginal variance components, the $\sigmahat^d_{t_l}$ terms in Tables \ref{tab:var_est:wp_corr} and \ref{tab:var_est:bp_corr} should not be not be pooled over time or embedded cAI to ensure stable correlation estimates.

\begin{table}[H]
    \centering
    \begin{tabular}{ccc}
    \hline
    \makecell{Within-Person Correlation\\ Structure} & $\Corr{Y_{i,j,t_l}\d, Y_{i,j,t_m}\d} $ & Estimator\\
    \hline
    AR(I)  &  $\begin{cases}
        1 & \text{ if $t_l=t_m$}\\
        \paren{\rho_w^d}^{|l-m|} & \text{ otherwise}
    \end{cases}$ 
    & $\rhohat^d_{w, AR(I)} = 
    \ddfrac{\Sum{i}{N}I_i(d)W_i\Sum{j}{n_i}\Sumz{l}{T-1}
\paren{\frac{\hat{\varepsilon}^{d,s}_{i,j,t_l}}{\sigmahat^d_{t_l}}}
\paren{\frac{\hat{\varepsilon}^{d,s}_{i,j,t_{l+1}}}{\sigmahat^d_{t_{l+1}}}}}
{\Sum{i}{N}I_i(d)W_in_iT}$\\
    Exchangeable & 
    $\begin{cases}
        1 & \text{ if } t_l=t_m \\
        \rho_w^d & \text{ otherwise}
    \end{cases}$ &  
    $\rhohat^d_{w, Ex} = \ddfrac{\Sum{i}{N}I_i(d)W_i
    \Sum{j}{n_i}\Sumz{l}{T}\displaystyle\sum_{\substack{m=0\\m\neq l}}^T
    \paren{\frac{\hat{\varepsilon}^{d,s}_{i,j,t_{l}}}{\sigmahat^d_{t_{l}}}}
    \paren{\frac{\hat{\varepsilon}^{d,s}_{i,j,t_m}}{\sigmahat^d_{t_m}}}}
    {\Sum{i}{N}I_i(d)W_in_i(T+1)T}$\\
    Unstructured & 
    $\begin{cases}
        1 & \text{ if } t_l=t_m \\
        \rho^d_{w, t_l, t_m} & \text{ otherwise}
    \end{cases}$ &  
    $\rhohat^d_{w, Un, t_l, t_m} = \ddfrac{\Sum{i}{N}I_i(d)W_i\Sum{j}{n_i}
    \paren{\frac{\hat{\varepsilon}^{d,s}_{i,j,t_{l}}}{\sigmahat^d_{t_{l}}}}
    \paren{\frac{\hat{\varepsilon}^{d,s}_{i,j,t_m}}{\sigmahat^d_{t_m}}}}
    {\Sum{i}{N}I_i(d)W_in_i}$\\
    Independent & 
    $\begin{cases}
        1 & \text{ if } t_l=t_m \\
        0 & \text{ otherwise}
    \end{cases}$ &  
    $\rhohat^d_{w, In} = 0$\\
    \hline
    \end{tabular}
    \phantomsection
    \caption{Within-Person Correlation Estimators}
    \label{tab:var_est:wp_corr}
\end{table}

\subsection{Between-Person Correlation}
Finally, we must estimate within-cluster between person correlation components. This corresponds to estimating the $(T+1)\times (T+1)$ off-diagonal blocks of $\boldsymbol\Rho_i$.

We consider three different structures for these blocks. As before, we present estimators that are heterogeneous with respect to embedded AI; however, one could homogenize over adaptive interventions just as in Table \ref{tab:var_est:mar_var}.

\begin{table}[H]
    \centering
    \begin{tabular}{ccc}
    \hline
    \makecell{Between-Person Correlation\\ Structure} & 
    \makecell{$\Corr{Y_{i,j,t_l}\d, Y_{i,k,t_m}\d} $ \\ $\paren{j\neq k}$} &
     Estimator\\
    \hline
    Exchangeable & $\rho_b^d$  &  
    $\rhohat^d_{b, Ex} = \ddfrac{\Sum{i}{N}I_i(d)W_i
    \Sum{j}{n_i}\displaystyle\sum_{\substack{k=1\\k\neq j}}^{n_i}
    \Sumz{l}{T}\Sumz{m}{T}
    \paren{\frac{\hat{\varepsilon}^{d,s}_{i,j,t_{l}}}{\sigmahat^d_{t_{l}}}}
    \paren{\frac{\hat{\varepsilon}^{d,s}_{i,k,t_m}}{\sigmahat^d_{t_m}}}}
    {\Sum{i}{N}I_i(d)W_in_i\paren{n_i-1}(T+1)^2}$\\
    Unstructured & $\rho^d_{b, t_l, t_m}$ &  
    $\rhohat^d_{b, Un, t_l, t_m} = \ddfrac{\Sum{i}{N}I_i(d)W_i
    \Sum{j}{n_i}\displaystyle\sum_{\substack{k=1\\k\neq j}}^{n_i}
    \paren{\frac{\hat{\varepsilon}^{d,s}_{i,j,t_{l}}}{\sigmahat^d_{t_{l}}}}
    \paren{\frac{\hat{\varepsilon}^{d,s}_{i,k,t_m}}{\sigmahat^d_{t_m}}}}
    {\Sum{i}{N}I_i(d)W_in_i\paren{n_i-1}}$\\
    Independent & 
    0 &  
    $\rhohat^d_{b, In} = 0$\\
    \hline
    \end{tabular}
    \phantomsection
    \caption{Between-Person Correlation Estimators}
    \label{tab:var_est:bp_corr}
\end{table}

\refstepcounter{suppsection}
\phantomsection
\section{Weight Estimation}\label{sec:wgt_est}
Analysts often use the known randomization probabilities to construct weights $W_i$, as discussed in Section \ref{sec:est} of the main body. The randomized nature of SMARTs ensure that these known-probability weights produce consistent estimates. However, an analyst may choose to estimate the randomization probabilities to gain statistical efficiency. This appendix concerns details regarding weight estimation for the analyst seeking additional efficiency gains.

For an analyst wishing to construct weights to adjust for imbalances in select baseline information, $\bX^w$, and outcome/covariate information collected between $t_0$ and $t^*$, $\bS^{w}_{1}$, weights for prototypical SMARTs take the form
$$W_i = W(\bx^w_i, a_{1,i}, \bs^{w}_{1,i}, r_i, a_{2NR,i})
=\dfrac{1}{\Prb{A_{1}=a_{1,i}\mid \bX^w=\bx^w_{i}}\Prb{A_{2}=a_{2,i} \mid \bX^w=\bx^w_{i}, A_{1,i}={a_{1,i}}, \bS^{w}_{1}=\bs^{w}_{1,i}, R_i=r_i}}.$$
This approach requires estimating 
$p_1\paren{a_1, \bx^w;\pihat} = \hat{\mathbb{P}}\left[A_{1}=a_1\mid \bX^w=\bx^{w}\right]$ 
and 
$p_2\paren{a_1, r, a_2, \bx^w, \bs^w_1;\pihat} = \hat{\mathbb{P}}\left[A_{2}=a_{2}\mid \bX^w=\bx^{w},  A_1=a_1,\bS^w_1=\bs^w_{1}, R=r\right]$ 
for each $a_1$, $r$, and $a_{2}$. The estimated weights then have the form
$$\hat{W}_i = W(\bX^w_{i}, a_{1,i}, \bs^w_{1,i}, r_i, a_{2,i}; \pihat)
=\dfrac{1}{p_1\paren{a_{1,i}, \bx^w_{i};\pihat}p_2\paren{a_{1,i}, r_i, a_{2,i}, \bx^w_{i}, \bs^w_{1,i};\pihat}}.$$

As we show in Appendix \ref{sec:asy_thry}, using estimated weights induces the following asymptotic behavior of the estimator:

$$\sqrt{N}\paren{\thetahat-\theta_0}\xrightarrow[N\to\infty]{\mathcal{L}}\N{0}{J\inv Q J\inv},$$
where 
\allowdisplaybreaks
\begin{align*}
    J & = \E{\Sumd I(d) W(\pi_0)D(d,\bX;\theta)^T \Sigma^d(\alpha_+)\inv D(d,\bX;\theta)},\\
    Q & = \E{UU^T} - \E{US_{\pi_0}^T}\E{S_{\pi_0}S_{\pi_0}^T}\inv\E{S_{\pi_0}U^T}.
\end{align*}

\refstepcounter{suppsection}
\phantomsection
\section{Causal Identifiability Assumptions}\label{sec:id_ass}
This section discusses standard assumptions needed to identify the causal effects discussed in the main body of this report. Given the focus of the report, we provide the formal definitions for the assumptions specifically in the prototypical SMART setting and discuss these assumptions more generally in Appendix \ref{sec:gen_smart:id_ass}.

\begin{enumerate}
    \item \label{ass:id_1} \textbf{Positivity}
    \subitem $\Prb{A_1=1}\in(0,1)$ and $\Prb{A_2=1\mid A_1, R=0}\in(0,1)$. I.e., all valid treatment pathways in the SMART have non-zero probability.
    \item \label{ass:id_2} \textbf{Consistency}
    \subitem We consider consistency with respect to response and outcome:
    \begin{enumerate}
        \item $R_i=\Ind{A_{1,i}=1}R_i^{(A_1=1)} + \Ind{A_{1,i}=-1}R_i^{(A_1=-1)}$.
        \item $\vct{Y}_{i,.,t}^{(a_1, a_{2NR})} = \begin{cases}
        \vct{Y}_{i,.,t} & \text{ if $t=t_0$}\\
        \vct{Y}_{i,.,t} & \text{ if $a_{1,i}=a_1$ and $t\in(t_0,t^*]$}\\
        \vct{Y}_{i,.,t} & \text{ if $a_{1,i}=a_1$, $t\in(t^*,T]$ and either $a_{2NR,i}=a_{2NR}$ or $R_i=1$.}\\
        \end{cases}$
    \end{enumerate}
    \subitem We note that the latter consistency equality simply asserts that if, at time $t$, Cluster $i$'s observed treatment history is consistent with a DTR $d$, then $Y_{i,.,t}=Y_{i,.,t}\d$. As in \citet{Chakraborty2014}, we note that this assumption subsumes Rubin's traditional SUTVA assumption \citep{Rubin1980}.
    \item \label{ass:id_3} \textbf{Sequential (Conditional) Exchangeability}
    \subitem Given \textit{any} set of baseline covariates $\bX$ (where $\bX$ could be empty), and any $(a_1,a_{2NR})\in\mathcal{D}$, we have
    \begin{enumerate}
        \item $\set{\vct{\bY}_i^{(a_1, a_{2NR})}, R_i^{(a_1)}}\indep A_{1}\big|_{\bX}$.
        \item $\vct{\bY}_i^{(a_1, a_{2NR})}\indep A_{2}\big|_{A_{1,i}=a_1, R_i^{(a_1)}, \bX}$.
    \end{enumerate}
    \subitem I.e., we have both marginal sequential exchangeability and sequential exchangeability conditional on any set of baseline covariates.
\end{enumerate}

\refstepcounter{suppsection}
\phantomsection
\section{Simulation Study Design}\label{sec:sim_des}
In designing the simulation study for this report, we sought to simulate outcomes sequentially, believing this to be more realistic than other data-generative approaches. Simulating outcomes conditional on past outcomes (and response status) necessitated specifying\footnote{Implicitly so, via the choice of several conditional parameters, as discussed below.} the conditional distribution of the outcomes. Given that our proposed approach fundamentally concerns marginal parameter estimation, we sought to construct a data-generative mechanism consistent with the marginal structural model discussed in Section \ref{sec:mm} of the main body. In an attempt to limit the complexity of the simulation model, we limited our simulations to the case of three time points ($t=0,1,2$), where response is determined at time $t^*=1$. 

Unlike the main body, in this appendix we use $\YY{0}$, $\YY{1}$, and $\YY{2}$ to denote the vector of outcomes for a given cluster at times $t=0,1,2$, respectively. E.g., $Y_{i,t}=\begin{bmatrix}
    Y_{i,1,t}, & \dots & Y_{i,n_i,t}
\end{bmatrix}^T$. We use this notation for improved clarity, given the conditional data-generative model described in this appendix.

Implementation code for this data-generative model can be found at \url{https://github.com/GabrielDurham/three_lvl_cSMART_sim_study}.

\subsection{Target Structural Mean Model}\label{sec:sim_des:tgt_smm}
As discussed above, we sought a conditional data-generative framework; however, as discussed Section \ref{sec:mm} of the main body, we propose a marginal modeling framework. Consequently, studying the properties of our estimator requires proper control over the marginal distribution of the generated data. 

In the section below, we seek to generate data that mimics the marginal mean model for a prototypical SMART proposed in Section \ref{sec:mm} of the main body:
\begin{align*}
    \E{\vct{Y}_{i,t}^{(a_1, a_{2NR})}\mid \bX_i} & = \eta \bX_i + \gamma^M_0\oni 
    + \Ind{t\leq 1}\paren{\gamma^M_1t + \gamma^M_2a_1t}\oni \\
    & + \Ind{t> 1}\paren{\gamma^M_1 + \gamma^M_2a_1 + \gamma^M_3\paren{t-1} + \gamma^M_4\paren{t-1}a_1 + \gamma^M_5\paren{t-1}a_{2NR} + \gamma^M_6\paren{t-1}a_1a_{2NR}
    }\oni,
\end{align*}
where we employ the superscript $M$ to denote marginal parameters. We note that the model above is fully saturated in that it allows the specification of all potential expectations: $\E{\vct{Y}^{(.,.)}_{i,0}}$, $\E{\vct{Y}^{(1,.)}_{i,1}}$, $\E{\vct{Y}^{(-1,.)}_{i,1}}$, $\E{\vct{Y}^{(1,1)}_{i,2}}$, $\E{\vct{Y}^{(1,-1)}_{i,2}}$, $\E{\vct{Y}^{(-1,1)}_{i,2}}$, and $\E{\vct{Y}^{(-1,-1)}_{i,2}}$. For brevity, we use $\mu_{t,{d}}$ to denote $\E{\paren{\vct{Y}^{(d)}_{i,2}}_j}$. I.e., choice of the aforementioned values induces $\gamma^M$ parameters via:
$$
\begin{bmatrix}
    \gamma_0^M \\ \gamma_1^M \\ \gamma_2^M \\ \gamma_3^M \\ \gamma_4^M \\ \gamma_5^M \\ \gamma_6^M
\end{bmatrix} = 
\begin{bmatrix}
    1 & 0 & 0 & 0 & 0 & 0 & 0 \\
    1 & 1 & 1 & 0 & 0 & 0 & 0 \\
    1 & 1 & -1 & 0 & 0 & 0 & 0 \\
    1 & 1 & 1 & 1 & 1 & 1 & 1 \\
    1 & 1 & 1 & 1 & 1 & -1 & -1 \\
    1 & 1 & -1 & 1 & -1 & 1 & -1 \\
    1 & 1 & -1 & 1 & -1 & -1 & 1 \\
\end{bmatrix}\inv
\begin{bmatrix}
    \mu_{0, {(.,.)}} \\ 
    \mu_{1, {(1,.)}} \\ 
    \mu_{1, {(-1,.)}} \\ 
    \mu_{2, {(1,1)}} \\ 
    \mu_{2, {(1,-1)}} \\ 
    \mu_{2, {(-1,1)}} \\ 
    \mu_{2, {(-1,-1)}} \\ 
\end{bmatrix}.
$$

Furthermore, we wish to control the corresponding variance $\V{\vct{Y}_{i,t}^{(a_1, a_{2NR})}\mid \bX_i}$. We wish to generate a variance structure that is exchangeable in individuals and unstructured in time. I.e.,
\begin{itemize}
    \item $\V{\YY{i,t}\d\mid \bX=\bX_i} = \paren{\sigma^2_{t, d} - \rho_{t, d}} I_{n_i} + \rho_{t, d}\oni\oni^T$ and
    \item $\Cov{\YY{i,t_1}\d, \YY{i,t_2}\d\mid \bX=\bX_i} = \paren{\phi_{t_1,t_2, d} - \rho_{t_1,t_2, d}} I_{n_i} + \rho_{t_1,t_2, d}\oni\oni^T$ (for $t_1\neq t_2$).
\end{itemize}

Specifying such a variance structure requires selecting the following values for distributions of $Y^{(d)}$ (conditioned on $\bX$). 
\begin{align*}
    \sigma^2_{0, d} = \sigma_0^2 & - \text{The variance of the outcome at baseline for an individual.}\\
    \sigma^2_{1, d} = \sigma^2_{1,a_1} & - \text{The variance of the outcome at time $t=1$ under $A_1=a_1$ for an individual.}\\
    \sigma^2_{2, d} = \sigma^2_{2,a_1,a_{2NR}} & - \text{The variance of the outcome at time $t=2$ under $A_1=a_1$ and $A_{2NR}=a_{2NR}$ for an individual.}\\
    \rho_{0, d} = \rho_0 & - \text{The intra-cluster covariance of the outcome at baseline.}\\
    \rho_{1, d} = \rho_{1,a_1} & - \text{The intra-cluster covariance of the outcome at time $t=1$ under $A_1=a_1$.}\\
    \rho_{2, d} = \rho_{2,a_1,a_{2NR}} & - \text{The intra-cluster covariance of the outcome at time $t=2$ under $A_1=a_1$ and $A_{2NR}=a_{2NR}$.}\\
    \phi_{0, 1, d} = \phi_{0,1,a_1} & - \text{The intra-person covariance of the outcome, under $A_1=a_1$, between times $t_1=0$ and $t_2=1$.}\\
    \phi_{0, 2, d} = \phi_{t_1,2,a_1,a_{2NR}} & - \text{The intra-person covariance of the outcome, under $A_1=a_1$ and $A_{2NR}=a_{2NR}$,}\\
    & \hspace{1cm} \text{between times $t_1=0$ and $t_2=2$.}\\
    \phi_{1, 2, d} = \phi_{t_1,2,a_1,a_{2NR}} & - \text{The intra-person covariance of the outcome, under $A_1=a_1$ and $A_{2NR}=a_{2NR}$,}\\
    & \hspace{1cm} \text{between times $t_1=1$ and $t_2=2$.}\\
    \rho_{0, 1, d} = \rho_{0,1,a_1} & - \text{The inter-person covariance of the outcome, under $A_1=a_1$, between times $t_1=0$ and $t_2=1$.}\\
    \rho_{0, 2, d} = \rho_{t_1,2,a_1,a_{2NR}} & - \text{The inter-person covariance of the outcome, under $A_1=a_1$ and $A_{2NR}=a_{2NR}$,}\\
    & \hspace{1cm} \text{between times $t_1=0$ and $t_2=2$.}\\
    \rho_{1, 2, d} = \rho_{t_1,2,a_1,a_{2NR}} & - \text{The inter-person covariance of the outcome, under $A_1=a_1$ and $A_{2NR}=a_{2NR}$,}\\
    & \hspace{1cm} \text{between times $t_1=1$ and $t_2=2$.}
\end{align*}

Note that we require pooling across DTR for certain variance parameters, as implied by the notation above. E.g., $\sigma^2_{0, d}=\sigma_0^2$ implies a shared baseline variance of the outcome across all embedded DTRs (e.g., $\sigma^2_{0, d}=\sigma^2_{0, d'}$ for all pairs $d,d'$). This pooling is merely a consequence of the causal identifiability assumptions discussed in Appendix \ref{sec:id_ass}.

Furthermore, for a distribution parameter $\alpha$, we use the notation $\alpha\sr$ to denote the corresponding variance, conditioned on $R=r$. E.g., ${\sigma\sRt_{0, d}}$ represents the variance of the outcome $Y$ for an individual (conditioned on $\bX$) at time $2$, given the unit was in a responding cluster. Similarly, ${\sigma\sNRt_{0, d}}$ represents this variance conditioned on the unit belonging to a non-responding cluster. We also use $\mu_{t,d}\sr$ to denote conditional mean parameters.

\subsection{Data-Generative Model}\label{sec:sim_des:dgm}
In the subsections below, we describe a data-generative model designed to produce data under a specified marginal distribution. Here, we focus simulating data for a prototypical SMART; however, this model can be slightly modified to produce data under alternate SMART structures.

\subsubsection{Arguments to Select}
In designing a data-generative mechanism for a prototypical SMART consistent with the marginal structural model discussed above, we wanted to retain control over the following:
\begin{enumerate}
    \item\label{ArgSel_MSM} A target marginal mean and variance structure for the outcome $Y\d$ for all $d\in\mathcal{D}$, as well as corresponding mean/variance structures conditioned on response status (as discussed above). This requires specifying:
    \begin{enumerate}
        \item Pre-response marginal mean and variance structures. I.e., $\mu_{0}$, $\mu_{1,a_1}$, $\sigma^2_0$, $\sigma^2_{1,a_1}$, $\rho_0$, $\rho_{1,a_1}$, $\phi_{0,1,a_1}$, and $\rho_{0,1,a_1}$ for $a_1=\pm 1$. As discussed below, these choices will induce corresponding parameters conditional on response (which we derive via Monte-Carlo, as in \cite{Seewald2019}).
        \item Post-response marginal mean and variance structures. I.e., $\mu_{2,d}$, $\sigma^2_{2,d}$, $\rho_{2,d}$, $\phi_{0,2,d}$, $\phi_{1,2,d}$, $\rho_{0,2,d}$, and $\rho_{1,2,d}$ for $d\in\mathcal{D}$.
        \item Post-response conditional mean and variance structures. I.e., $\mu\sr_{2,d}$, $\sigma\srt_{2,d}$, $\rho\sr_{2,d}$, $\phi\sr_{0,2,d}$, $\phi\sr_{1,2,d}$, $\rho\sr_{0,2,d}$, and $\rho\sr_{1,2,d}$ for $d\in\mathcal{D}$ and $r\in\set{0,1}$. As discussed below, these choices must be consistent with the choice of marginal post-response variance.
    \end{enumerate}
    \item\label{ArgSel_p_R} The probability of response under each first-stage treatment assignment. I.e., $\Prb{R(a_1)=1\mid \bX}=:p_{r,a_1}$ for $a_1=\pm1$.
    \item\label{ArgSel_p_RA} The treatment assignment probabilities (e.g., $\Prb{A_1=1}$ and $\Prb{A_{2NR}=1\big|A_1}$).\footnote{These are usually taken to be 0.5; however, an analyst may want to simulate a SMART under imbalanced randomization probabilities.}
    \item\label{ArgSel_X_Dist} The distribution of covariates (provided the distribution has mean zero) and their influence on the outcome ($\eta$). 
    \item\label{ArgSel_K} The number of clusters, $K$.
    \item\label{ArgSel_NiDist} The distribution of $N_i$, local sample size (provided the distribution is over $\mathbb{N}$ and independent of outcomes).
\end{enumerate}

Therefore, the analyst must specify the above quantities, ideally to approximate a real-world scenario relevant to the analysis in question. As discussed below, conditional variance components must be chosen to be consistent with the selected marginal variance components. Additionally, we require that variance parameters are chosen such that all resulting covariance matrices are positive definite.

\subsubsection{Marginal and Conditional Specification Consistency}\label{sec:sim_des:dgm:MC_consistent}
As discussed above, we wish to simulate data conditionally (i.e., conditioned on response and past outcomes). Doing so more closely resembles real-world processes, and allows the user to specify stark differences between responders and non-responders in order to test estimator performance. 

In Step \ref{ArgSel_MSM} above, the user must select the marginal and conditional distribution of $\vct{Y}\d$ for all $d$, as well as the probability of response under $A_1=1$ and $A_1=-1$. However, for a given DTR $d$ (with probability of response $p_r$), we note that, for any $j=1\ddd N_i$ and $t_1,t_2\in\set{0,1,2}$
\begin{align*}
    \E{\paren{\vct{Y}_{i,t}\d}_j\mid \bX_i} & = p_r\E{\paren{\vct{Y}_{i,t}\d}_j\mid \bX_i, R_i=1} + (1-p_r)\E{\paren{\vct{Y}_{i,t}\d}_j\mid \bX_i, R_i=0}\\
    \shortintertext{and}
    \Cov{\YY[j]{i,t_1}\d,\YY[k]{i,t_2}\d\mid \bX_i} & = \E{\Cov{\paren{\vct{Y}_{i,t_1}\d}_j,\paren{\vct{Y}_{i,t_2}\d}_k\mid \bX_i, R_i}} \\
    & \hspace{1cm}
    + \Cov{\E{\paren{\vct{Y}_{i,t_1}\d}_j\big|R_i=r},\E{\paren{\vct{Y}_{i,t_2}\d}_k\mid \bX_i, R_i}}\\
    & = \E{\Ind{R_i=1}\Cov{\paren{\vct{Y}_{i,t_1}\d}_j,\paren{\vct{Y}_{i,t_2}\d}_k\mid \bX_i, R_i=1}} \\
    & \hspace{1cm}
    + \E{\Ind{R_i=0}\Cov{\paren{\vct{Y}_{i,t_1}\d}_j,\paren{\vct{Y}_{i,t_2}\d}_k\mid \bX_i, R_i=0}}\\
    & \hspace{1cm}
    + \Cov{\Ind{R=0}\mu_{t_1, d}\sNR + \paren{1-\Ind{R=0}}\mu_{t_1, d}\sR,\s \Ind{R=0}\mu_{t_2,d}\sNR + \paren{1-\Ind{R=0}}\mu_{t_2,d}\sR}\\
    & = p_r\Cov{\paren{\vct{Y}_{i,t_1}\d}_j,
    \paren{\vct{Y}_{i,t_2}\d}_k\mid \bX_i, R_i=1} \\
    & \hspace{1cm}
    + \paren{1-p_r}\Cov{\paren{\vct{Y}_{i,t_1}\d}_j,
    \paren{\vct{Y}_{i,t_2}\d}_k \mid \bX_i, R_i=0}\\
    & \hspace{1cm}
    + \paren{\mu_{t_1, d}\sNR-\mu_{t_1,d}\sR}\paren{\mu_{t_2, d}\sNR-\mu_{t_2,d}\sR}\paren{p_r\paren{1-p_r}}.
\end{align*}

Subsequently, the distributions of $\vct{Y}\d\big|_{\bX}$, $\vct{Y}\d\big|_{\bX, R=1}$, and $\vct{Y}\d\big|_{\bX, R=0}$ cannot vary freely, as choosing two implies the third. We specified the marginal distribution as well as the distribution conditional on positive response.

Lastly, as discussed below, specifying a marginal pre-response distribution (i.e., the distribution of $\ttvec{Y_0\d}{Y_1\d}\big|_\bX$) and a response-generation mechanism induces the conditional distribution of $\ttvec{Y_0\d}{Y_1\d}\big|_{\bX, R}$. Analytically deriving this distribution for non-trivial models of response proved intractable, and subsequently we used Monte-Carlo methods to estimate necessary conditional mean and variance parameters, as in \cite{Seewald2019}.\footnote{We did so by simulating $500,000$ simulations for each pre-response variance structure/cluster size pair and taking the empirical response-conditional distribution among the simulations.}

\subsubsection{Data-Generative Process}\label{sec:sim_des:dgm:dgp}
Given the chosen parameters discussed above, we propose the algorithm to produce data which targets the desired marginal structural model. We note the choices above require specification of all $\mu_.$, $\mu_.\sr$, $\sigma_.$, $\sigma_.\sr$, $\rho_.$, $\rho_.\sr$, $\phi_.,$ and $ \phi_.\sr$ terms, in addition to all response probabilities and treatment assignment probabilities. Choice of these terms define the simulation parameters used below (as we make explicit later in this section). In the model described below, we use $[r]$ notation to denote simulation parameters that depend on response status (and note that these only apply to post-response parameters). Recall the use of $<r>$ notation to denote parameters regarding the distribution of $\vct{Y}\d\big|_{R=r}$.

We carry out the algorithm below for $i=1\ddd K$ (taking independent draws for each cluster): 

\begin{algorithm}[H]
\phantomsection
\caption{\texttt{Data-Generative Algorithm}}
\begin{algorithmic}[1]
\State Simulate $d_i=\paren{A_{1,i},A_{2NR,i}}$ using the treatment assignment probabilities specified in Argument \ref{ArgSel_p_RA}. We use $\paren{a_{1,i},a_{2NR,i}}$ to denote simulated treatment assignments. 
\State Simulate $N_i=n_i$, drawing it from the distribution specified in Argument \ref{ArgSel_NiDist}. 
\State Simulate $\bX_i$, drawing it from the distribution specified in Argument \ref{ArgSel_X_Dist}.
\State Simulate $\ttvec{\vct{\varepsilon}_{i,0}}{\vct{\varepsilon}^{(a_{1,i})}_{i,1}}  \sim
\N{\ttvec{\vct{0}_{n_i}}{\vct{0}_{n_i}}}{
\left[\begin{array}{@{}c|c@{}}
  \Sigma_{0}
  & \paren{\phi_{0,1} - P_{1,a_{1,i}}\sigma_0^2}I_{n_i} \\
\hline
  \paren{\phi_{0,1} - P_{1,a_{1,i}}\sigma_0^2}I_{n_i} &
  \Sigma_{1,a_{1,i}, n_i}
\end{array}\right]
}$.
\State Take $\YY{i,0}\di\big|_{\bX_i} = \eta \bX_i + \gamma_0\oni + \vct{\varepsilon}_{i,0}$.
\State Take $\YY{i,1}\di\big|_{\bX_i, \YY{i,0}\di} = \paren{1-P_{1,a_{1,i}}}\paren{\eta \bX_i + \gamma_0\oni} + \gamma_1\oni + P_{1,a_{1,i}}\YY{i,0}\di + \gamma_2a_{1,i}\oni + \vct{\varepsilon}^{(a_{1,i})}_{i,1}$.
\State Simulate $R_i^{(a_{1,i})}$ as a draw from a $B\paren{g\paren{\ttvec{\YY{i,0}\di}{\YY{i,1}\di}}}$ distribution, taking $g$ to be the response propensity mechanism discussed below. We denote this realized value of response as $r_i$.
\State Simulate $\vct{\varepsilon}_{i,2}^{(d_i,r_i)} \sim \N{\vct{0}_{n_i}}{\Sigma^{[r_i]}_{2,d_i,n_i}}$, with $\vct{\varepsilon}_{i,2}^{(d_i, r_i)}\indep\ttvec{\ee{i,0}}{\ee{i,1}^{(a_{1,i})}}$.
\State Take 
\begin{align*}
\YY{i,2}\di\big|_{\bX_i,\YY{i,0}\di,\YY{i,1}\di,R_i^{(a_{1,i})}=r_i} & = \paren{1-\px[r_i]{2,d_i,n_i}-\px[r_i]{4,d_i,n_i}}\paren{\eta \bX_i + \gamma_0\oni}\\
    & \s\s\s\s\s + \paren{1-\px[r_i]{4,d_i,n_i}}\paren{\gamma_1 + \gamma_2 a_{1,i}}\oni + \paren{\gamma_3 + \gamma_4a_{1,i}}\oni\\
    &  \s\s\s\s\s + \px[r_i]{2,d_i,n_i}\YY{i,0}  + \px[r_i]{4,d_i,n_i}\YY{i,1} \\
    &  \s\s\s\s\s + \paren{\px[r_i]{3,d_i,n_i}\frac{1}{n_i}\Sum{j}{n_i}\ee[j]{i,0}}\oni + \paren{\px[r_i]{5,d_i,n_i}\frac{1}{n_i}\Sum{j}{n_i}\ee[j]{i,1}}\oni\\
    &  \s\s\s\s\s + \paren{\paren{\gamma_5 + \gamma_6a_{1,i}}a_{2NR, i}}\paren{\dfrac{1-r_i}{1-p_{r_{a_{1,i}}}}}\oni\\
    &  \s\s\s\s\s + \paren{\lambda_1 + \lambda_2a_{1,i}}\paren{\paren{r_i-p_{r_{a_{1,i}}}}\oni}\\
    &  \s\s\s\s\s + \vct{\varepsilon}_{i,2}^{(d_i, r_i)}.
\end{align*}
\end{algorithmic}
\end{algorithm}

We consider the following model of response:
\begin{align*}
    g\paren{\tttvec{\bX_i}{\YY{i,0}}{\YY{i,1}}} & = F_{p_{r_{a_1}}}^{\leftarrow}\paren{\mathbf{\Phi}\paren{\dfrac{\paren{\frac{1}{n}\Sum{j}{n}\YY[j]{i,1}-
    \paren{\eta \bX_i + \gamma_0 + \gamma_1 + \gamma_2 a_1}}}{\sqrt{\frac{1}{n}\paren{\sigma^2_1 + \paren{n-1}\rho_{1,a_1}}}}}}
\end{align*}
where $F_{p_{r_{a_1}}}^{\leftarrow}$ is the inverse CDF of the Beta$\paren{\dfrac{p_{r_{a_1}}}{1-p_{r_{a_1}}},1}$ distribution. Note that we force response to be uncorrelated with baseline covariates. 

We note that, by construction of $g$, response propensity is solely a function of $\ttvec{\vct{\varepsilon}_{0}}{\vct{\varepsilon}^{(a_{1})}_{1}}$. Therefore, the distribution of $\ttvec{\vct{Y}_0\d}{\vct{Y}_1\d}\big|_R$ can be directly derived via the distribution of $\ttvec{\vct{\varepsilon}_{0}}{\vct{\varepsilon}^{(a_{1})}_{1}}\big|_R$. As discussed above, we used Monte-Carlo techniques to derive this distribution under various settings. Consequently, we use the following notation:
\begin{align*}
    \epsilon\sr_{0,n} & := \E{\paren{\vct{\varepsilon}_{0}}_j\mid R=r, N=n} \\
    \epsilon\sr_{1,n} & := \E{\paren{\vct{\varepsilon}_{1}}_j\mid R=r, N=n} \\
    s\srt_{0,n} & := \V{\paren{\vct{\varepsilon}_{0}}_j\mid R=r, N=n} \\
    s\srt_{1,n} & := \V{\paren{\vct{\varepsilon}_{1}}_j\mid R=r, N=n} \\
    c\sr_{0,n} & := \Cov{\paren{\vct{\varepsilon}_{0}}_j, \paren{\vct{\varepsilon}_{0}}_k \mid R=r, N=n} \\
    c\sr_{1,n} & := \Cov{\paren{\vct{\varepsilon}_{1}}_j, \paren{\vct{\varepsilon}_{1}}_k \mid R=r, N=n} \\
    s\sr_{0,1,n} & := \Cov{\paren{\vct{\varepsilon}_{0}}_j, \paren{\vct{\varepsilon}_{1}}_j \mid R=r, N=n} \\
    c\sr_{0,1,n} & := \Cov{\paren{\vct{\varepsilon}_{0}}_j, \paren{\vct{\varepsilon}_{1}}_k \mid R=r, N=n},
\end{align*}
where $j,k=1\ddd n$ and $j\neq k$. We note that the exchangeability of the distributions of $\vct{\varepsilon}_0$ and $\vct{\varepsilon}_1^{(a_{1})}$ ensure well-posedness of the variables described above (i.e., invariance with respect to choice of $j,k$).

Here, we take 
\begin{align*}
    P_{1,a_1} & := \begin{cases}
        \dfrac{\rho_{0,1,a_1}}{\rho_0} & \text { if } \rho_0\neq 0\\
        0 & \text { if } \rho_0 = 0,
            \end{cases}\\
    \Sigma_{0} &:= \paren{\sigma_0^2 - \rho_0} I_{n} + \rho_0\oni\oni^T,\\
    \Sigma_{1,a_1, n} & := \paren{\paren{\sigma^2_{1,a_1} - \rho_{1,a_1}} I_{n} + \rho_{1,a_1}\oni\oni^T} - P_{1,a_1}^2\Sigma_0 - 2P_{1,a_1}\paren{\phi_{0,1,a_1} - P_{1,a_1}\sigma_0^2}I_{n},\\
    \Sigma_{2,d,n}^{[r]} &:= \paren{\paren{\sigma\srt_{2,d}-\zeta_{d,n}} - 
    \paren{\rho\sr_{2, d}-\zeta'_{d,n}}}I_{n} + 
    \paren{\rho\sr_{2, d}-\zeta'_{d,n}}\oni\oni^T,
    \shortintertext{and}
    \ttttvec{\px[r]{2,d,n}}{\px[r]{3,d,n}}{\px[r]{4,d,n}}{\px[r]{5,d,n}} & := \Upsilon\inv_{a_1,n}\ttttvec{\phi\sr_{0,2,d}}{\phi\sr_{1,2,d}}{\rho\sr_{0,2,d}}{\rho\sr_{1,2, d}},
\end{align*}
where $\zeta_{d,n},\zeta'_{d,n},$ and $\Upsilon_{a_1,n}$ are defined in Section \ref{sec:sim_des:supp_defs} below. Note that the construction of $\Sigma_0$ ensures that all the distribution of $\YY{0}\d$ is identical for all $d\in\mathcal{D}$.

We also note that the $\gamma$ terms above are related, but not identical, to the marginal $\gamma^M$ terms in the target structural mean model. We let

$$
\vct{\boldsymbol\gamma} :=
\begin{bmatrix}
    \gamma_0 \\ \gamma_1 \\ \gamma_2 \\ \gamma_3 \\ \gamma_4 \\ \gamma_5 \\ \gamma_6 \\ \lambda_1 \\ \lambda_2
\end{bmatrix} = 
\begin{bmatrix}
    1 & 0 & 0 & 0 & 0 & 0 & 0 & 0 & 0 \\
    1 & 1 & 1 & 0 & 0 & 0 & 0 & 0 & 0  \\
    1 & 1 & -1 & 0 & 0 & 0 & 0 & 0 & 0  \\
    1 & 1 & 1 & 1 & 1 & 0 & 0 & (1-p_{r,1}) & (1-p_{r,1}) \\
    1 & 1 & 1 & 1 & 1 & \frac{1}{1-p_{r,1}} & \frac{1}{1-p_{r,1}} & -p_{r,1} & -p_{r,1}  \\
    1 & 1 & 1 & 1 & 1 & -\frac{1}{1-p_{r,1}} & -\frac{1}{1-p_{r,1}} & -p_{r,1} & -p_{r,1}  \\
    1 & 1 & -1 & 1 & -1 & 0 & 0 & (1-p_{r,-1}) & -(1-p_{r,-1}) \\
    1 & 1 & -1 & 1 & -1 & \frac{1}{1-p_{r,-1}} & -\frac{1}{1-p_{r,-1}} & -p_{r,-1} & p_{r,-1}  \\
    1 & 1 & -1 & 1 & -1 & -\frac{1}{1-p_{r,1}} & \frac{1}{1-p_{r,1}} & -p_{r,-1} & p_{r,-1}  \\
\end{bmatrix}\inv
\begin{bmatrix}
    \mu_{0, {(.,.)}} \\ 
    \mu_{1, {(1,.)}} \\ 
    \mu_{1, {(-1,.)}} \\ 
    \mu^{<1>}_{2, {(1,1)}} - v_{(1,1), 1, n}\\ 
    \mu^{<0>}_{2, {(1,1)}} - v_{(1,1), 0, n}\\ 
    \mu^{<0>}_{2, {(1,-1)}} - v_{(1,-1), 0, n}\\ 
    \mu^{<1>}_{2, {(-1,1)}} - v_{(-1,1), 1, n}\\ 
    \mu^{<0>}_{2, {(-1,1)}} - v_{(-1,1), 0, n}\\ 
    \mu^{<0>}_{2, {(-1,-1)}} - v_{(-1,-1), 0, n}\\ 
\end{bmatrix},
$$
where $v_{d, r, n}:=\px[r]{2,d,n}\epsilonx{0,n} + \px[r]{3,d,n}\epsilonx{0,n} + \px[r]{4,d,n}\paren{P_{1,a_1}\epsilonx{0,n}+\epsilonx{1,n}} + \px[r]{5,d,n}\epsilonx{1,n}.$ As shown below, the $\gamma$ parameters are constructed to ensure that responders and non-responders have their specified conditional means, with the $v$ terms representing corrections to adjust for the fact that responders and non-responders will have different pre-response means.

We briefly note our introduction of cross-temporal cluster spillover into $\YY{i,2}\di$ via the inclusion of the $\px[r_i]{3,d,n}\frac{1}{n}\Sum{j}{n}\ee[j]{i,0}$ and $\px[r_i]{5,d,n}\frac{1}{n}\Sum{j}{n}\ee[j]{i,1}$ terms. We do so in order to control cross-temporal intra-cluster covariances. Simulations under which $\px[r_i]{3,d,n}=0$ and/or $\px[r_i]{5,d,n}=0$ correspond to situations in which there is no direct cross-temporal intra-cluster spillover effects.

\subsection{Proof of Marginal Consistency}\label{sec:sim_des:proof}
We will consider the simulation of the outcome trajectory of a cluster of size $N=\sss$, with covariate data $\bX$, under a given DTR $d=\paren{A_1=a_1,A_{2NR}=a_{2NR}}$, where $N$, $\bX$, and $\paren{A_1,A_{2NR}}$ are simulated according to the data-generative algorithm laid out above. In this subsection, we prove that the distribution of such an outcome trajectory has the distribution specified above. This subsection will be laid out as follows:
\begin{itemize}
    \item Remarks on notational conventions henceforth adopted to promote conciseness.
    \item Proof that the pre-response marginal distribution of the data-generative model matches that of the target distribution.
    \item Proof that the response probability of the data-generative model matches that of the target model.
    \item Proof that the post-response conditional distributions of the data-generative model match those of the target distribution.
\end{itemize}
We note that the post-response marginal distribution of the data-generative model is implied by the response probability and post-response conditional distributions of the data-generative model, as discussed above. Therefore, proving the three characteristics listed above will guarantee that the marginal distribution of the data-generative model matches the specified target distribution.

\subsubsection{Notational Remarks}
The following arguments employ heavy use of sub/superscripts. Given that we are restricting consideration to one sample size ($n$) and one embedded DTR $d=(a_1, a_{2NR})$, in order to promote readability in this section we will henceforth omit subscripts identifying parameters that denote dependence on sample size and DTR assignment. 

I.e., we are removing subscripts signifying the dependence of various parameters on treatment assignment and local sample size. This is purely for the purpose of conciseness, and the reader should note that these parameters are still indexed by treatment assignment and local sample size. We also note that the independence of $N$ from other random values in our simulation allows us to discuss the results below without explicitly conditioning on $N$. For example, we will use $\phi_{0,2}$ rather than $\phi_{0,2,d}$ to denote the user-specified value for $\Cov{\paren{\vct{Y}_0\d}_j,\paren{\vct{Y}_2\d}_j\mid \bX}$. Similarly, we use $\px[r]{5}$ as a shorthand for $\px[r]{5,d,n}$. We will also note that we are simulating the potential outcomes under the DTR $d$, but will use $\YY{.}$ and $R$ in lieu of $\YY{.}\d,\s R_i^{(a_{1,i})}$.

\subsubsection{Marginal Distribution of Chosen Data-Generative Model - Pre-Response Outcomes}
We recall that we began by simulating $\YY{0}$ and $\YY{1}|_{\YY{0}}$ as follows:
\begin{align*}
    \YY{0}\big|_\bX & = \eta \bX + \gamma_0\on + \ee{0}\\
    \YY{1}\big|_{\bX, \YY{0}} & = \paren{1-P_1}\paren{\eta \bX + \gamma_0\on} + \gamma_1\on + P_1\YY{0} + \gamma_2a_1\on + \ee{1},
    \shortintertext{where}
    P_1 & = \begin{cases}
        \dfrac{\rho_{0,1}}{\rho_0} & \text { if } \rho_0\neq 0\\
        0 & \text { if } \rho_0 = 0.
    \end{cases}
\end{align*}
We generated $\ee{0}$ and $\ee{1}$ jointly, with:
\begin{align*}
    \ttvec{\ee{0}}{\ee{1}} & \sim
\N{\ttvec{\vct{0}_{n}}{\vct{0}_{n}}}{
\left[\begin{array}{@{}c|c@{}}
  \Sigma_{0}
  & \paren{\phi_{0,1} - P_1\sigma_0^2}I_{n} \\
\hline
  \paren{\phi_{0,1} - P_1\sigma_0^2}I_{n} &
  \Sigma_{1}
\end{array}\right]
},\\
\shortintertext{where}
    \Sigma_{0} & = \paren{\sigma_0^2 - \rho_0} I_{n} + \rho_0\on\on^T \text{ and }\\
    \Sigma_{1} & = \paren{\paren{\sigma_1^2 - \rho_1} I_{n} + \rho_1\on\on^T} - P_1^2\Sigma_0 - 2P_1\paren{\phi_{0,1} - P_1\sigma_0^2}I_{n}.
\end{align*}

We immediately observe that $\E{\YY{0}\mid \bX} = \eta \bX + \gamma_0\on$ (and thus $\E{\YY{0}}=\gamma_0\on=\mu_{0,(.,.)}\on$) and $\V{\YY{0} \mid \bX} = \paren{\sigma_0^2 - \rho_0} I_{n} + \rho_0\on\on^T$, satisfying the requirements for the marginal distribution of $\YY{0}$. Additionally, 
\begin{align*}
    \E{\YY{1}\mid \bX} & = \paren{1-P_1}\paren{\eta \bX + \gamma_0\on} + \gamma_1\on + P_1\E{\YY{0}\mid \bX} + \gamma_2a_1\on\\
    & = \eta \bX + \paren{\gamma_0 + \gamma_1 + \gamma_2a_1}\on,
    \shortintertext{so}
    \E{\YY{1}} & = \paren{\gamma_0 + \gamma_1 + \gamma_2a_1}\on\\
    & = \mu_{1, (a_1,.)}\on \text{ (by construction of $\vct{\boldsymbol\gamma}$).}
    \shortintertext{Furthermore,}
    \V{\YY{1}\mid \bX} & = \V{P_1\ee{0} + \ee{1}\mid \bX}\\
    & = P_1^2\V{\YY{0}\mid \bX} + 2P_1\Cov{\ee{0},\ee{1}} + \V{\ee{1}\mid \bX}\\
    & = P_1^2\Sigma_0 + 2P_1\paren{\phi_{0,1} - P_1\sigma_0^2}I_{n} + \Sigma_1\\
    & = P_1^2\Sigma_0 + 2P_1\paren{\phi_{0,1} - P_1\sigma_0^2}I_{n} + \paren{\paren{\paren{\sigma_1^2 - \rho_1} I_{n} + \rho_1\on\on^T} - P_1^2\Sigma_0 - 2P_1\paren{\phi_{0,1} - P_1\sigma_0^2}I_{n}}\\
    & = \paren{\sigma_1^2 - \rho_1} I_{n} + \rho_1\on\on^T,
\end{align*}
satisfying the requirements for the marginal distribution of $\YY{1}$.
Finally, 
\begin{align*}
    \Cov{\YY{0}, \YY{1}\mid \bX} & = \Cov{\ee{0}, P_1\ee{0} + \ee{1}}\\
    & = P_1\Sigma_0 + \paren{\phi_{0,1} - P_1\sigma_0^2}I_{n}\\
    & = P_1\paren{\paren{\sigma_0^2 - \rho_0} I_{n} + \rho_0\on\on^T} + \paren{\phi_{0,1} - P_1\sigma_0^2}I_{n}\\
    & = \paren{\phi_{0,1} - P_1\rho_0}I_{n} + P_1\rho_0\on\on^T.
\end{align*}
Therefore, by construction of $P_1$, $\Cov{\YY{0}, \YY{1}\mid \bX} = \paren{\phi_{0,1} - \rho_{0,1}}I_{n} + \rho_{0,1}\on\on^T$, satisfying our requirements for the temporal covariance between $\YY{0}$ and $\YY{1}$.

\subsubsection{Marginal Distribution of Chosen Data-Generative Model - Response}

In response model $g_1$ above, we centered and standardized $\Bar{\vct{Y}_{1}}$ with respect to the (known) true mean and standard deviation. We then applied a inverse-CDF transform to simulate response likelihood via a Beta$\paren{\dfrac{p_{r}}{1-p_{r}},1}$ distribution, guaranteeing a marginal likelihood of response of $p_r$.\footnote{And ensuring that $g_1\paren{\ttvec{\YY{0}}{\YY{1}}}\in(0,1)$.} I.e., we have satisfied $\E{R}=p_r$.

\subsubsection{Conditional Distributions of Chosen Data-Generative Model - Post-Response Outcome Simulation}

Recall that we generated $\YY{2}$ according to the following data-generative procedure:
\begin{align*}
    \YY{2}\big|_{\bX, \YY{0},\YY{1},R=r} & = \paren{1-\px{2}-\px{4}}\paren{\eta \bX + \gamma_0\on} + \paren{1-\px{4}}\paren{\gamma_1 + \gamma_2 a_1}\on + \paren{\gamma_3 + \gamma_4a_1}\on\\
    &  \s\s\s\s\s + \px{2}\YY{0} + \px{3}\frac{1}{n}\Sum{j}{n}\ee[j]{0}\on + \px{4}\YY{1} + \px{5}\frac{1}{n}\Sum{j}{n}\ee[j]{1}\on\\
    &  \hspace{1cm} + \paren{\paren{\gamma_5 + \gamma_6a_1}a_{2NR}}\paren{\dfrac{1-r}{1-p_r}}\on\\
    &  \hspace{1cm} + \paren{\lambda_1 + \lambda_2a_1}\paren{\paren{r-p_r}\on}\\
    &  \hspace{1cm} + \eex{2},
    \shortintertext{where $\eex{2}\indep \ttvec{\ee{0}}{\ee{1}}\big|_{R=r}$ and }
    \eex{2} & \sim \N{\vct{0}_{n}}{\Sigma_{2}^{[r]}}.
\end{align*}

This setup ensures
\begin{align*}
    \E{\YY{2}\mid \bX, R=r} & = \paren{1-\px{2}-\px{4}}\paren{\eta \bX + \gamma_0\on} + \paren{1-\px{4}}\paren{\gamma_1 + \gamma_2 a_1}\on + \paren{\gamma_3 + \gamma_4a_1}\on\\
    &  \hspace{1cm} + \px{2}\E{\YY{0}\mid \bX, R=r} + \px{3}\E{\ee{0} \mid \bX, R=r}\\
    &  \hspace{1cm} + \px{4}\E{\YY{1}\mid \bX, R=r} + \px{5}\E{\ee{1}\mid \bX, R=r}\\
    &  \hspace{1cm} + \paren{\paren{\gamma_5 + \gamma_6a_1}a_{2NR}}\paren{\dfrac{1-r}{1-p_r}}\on\\
    &  \hspace{1cm} + \paren{\lambda_1 + \lambda_2a_1}\paren{\paren{r-p_r}\on}\\
    & = \eta \bX + \paren{\gamma_0 + \gamma_1 + \gamma_2 a_1 + \gamma_3 + \gamma_4a_1}\on\\
    &  \hspace{1cm} + \px{2}\E{\ee{0}\mid R=r} + \px{3}\E{\ee{0}\mid R=r}\\
    &  \hspace{1cm} + \px{4}\E{P_1\ee{0} + \ee{1}\mid R=r} + \px{5}\E{\ee{1}\mid R=r}\\
    &  \hspace{1cm} + \paren{\paren{\gamma_5 + \gamma_6a_1}a_{2NR}}\paren{\dfrac{1-r}{1-p_r}}\on\\
    &  \hspace{1cm} + \paren{\lambda_1 + \lambda_2a_1}\paren{\paren{r-p_r}\on}\\
    &  = \eta \bX + \paren{\gamma_0 + \gamma_1 + \gamma_2 a_1 + \gamma_3 + \gamma_4a_1}\on\\
    &  \hspace{1cm} + \paren{\paren{\gamma_5 + \gamma_6a_1}a_{2NR}\paren{\dfrac{1-r}{1-p_r}} +  \paren{\lambda_1 + \lambda_2a_1}\paren{r-p_r}}\on\\
    &  \hspace{1cm} + \paren{\px[r]{2}\epsilonx{0} + \px[r]{3}\epsilonx{0} + \px[r]{4}\paren{P_{1}\epsilonx{0}+\epsilonx{1}} + \px[r]{5}\epsilonx{1}}.
\end{align*}
By construction of $\vct{\boldsymbol\gamma}$, the above resolves to
\begin{align*}
    \E{\YY{2}\mid \bX, R=r} & = \eta \bX + \paren{\mu\sr_{2,d} - v_{r}}\on\\
    &  \hspace{1cm} + \underbrace{\paren{\px[r]{2}\epsilonx{0} + \px[r]{3}\epsilonx{0} + \px[r]{4}\paren{P_{1}\epsilonx{0}+\epsilonx{1}} + \px[r]{5}\epsilonx{1}}}_{=:v_{r}}\on.
    \shortintertext{Therefore,}
    \E{\YY{2}\mid \bX, R=r} & = \eta \bX + \mu\sr_{2,d}\on\\
    \shortintertext{and}
    \E{\YY{2}\mid R=r} & = \mu\sr_{2,d}\on.\\
\end{align*}
The above result guarantees that the response-conditional means of the generated $\vct{Y}_2$ match those specified in the target distribution. This result, combined with the consistency of the data-generative and target response probabilities, guarantees that the data-generative model produces trajectories with the same marginal mean as the specified target distribution.

We now turn to the response-conditional variance components. We first examine the cross-response conditional variance components (i.e., $\rho\sr_{0,2}$, $\rho\sr_{1,2}$, $\phi\sr_{0,2}$, and $\phi\sr_{1,2}$). Recall that we obtained the following parameters via Monte-Carlo simulation:
$$\V{\ttvec{\ee{0}}{\ee{1}}\mid \bX, R=r} = 
\left[\begin{array}{@{}c|c@{}}
              \paren{{s\srt_{0}} - c\sr_{0}}I_{n}
              + c\sr_{0}\on\on^T
              & \paren{{s\sr_{0,1}} - c\sr_{0,1}}I_{n}
              + c\sr_{0,1}\on\on^T  \\
            \hline
              \paren{{s\sr_{0,1}} - c\sr_{0,1}}I_{n}
              + c\sr_{0,1}\on\on^T &
              \paren{{s\srt_{1}} - c\sr_{1}}I_{n}
              + c\sr_{1}\on\on^T
            \end{array}\right],$$

We observe that for $\Cov{\YY[j]{t},\YY[k]{2}\mid \bX, R=r}$ ($j\neq k$) in the data-generative model to match the user specified value $\rho_{t,2}\sr$, we must have
\begin{align*}
    \rho_{0,2}\sr & = \Cov{\ee[j]{0},\px{2}\ee[k]{0} + \px{3}\frac{1}{n}\Sum{m}{n}\ee[m]{0} + \px{4}\paren{P_1\ee[k]{0} + \ee[k]{1}} + \px{5}\frac{1}{n}\Sum{m'}{n}\ee[m']{1}\mid R=r}\\
    & = \px{2}\paren{c\sr_{0}}
    + \px{3}\paren{\frac{{s\srt_{0}}}{n} + \frac{n-1}{n}c\sr_{0}}
    + \px{4}\paren{P_1 c\sr_{0} + c\sr_{0,1}}
    + \px{5}\paren{\frac{{s\sr_{0,1}}}{n} + \frac{n-1}{n}c\sr_{0,1}},
    \shortintertext{and}
    \rho_{1,2}\sr & = \Cov{P_1\ee[j]{0} + \ee[j]{1},\px{2}\ee[k]{0} + \px{3}\frac{1}{n}\Sum{m}{n}\ee[m]{0} + \px{4}\paren{P_1\ee[k]{0} + \ee[k]{1}} + \px{5}\frac{1}{n}\Sum{m'}{n}\ee[m']{1}\mid R=r}\\
    & = P_1\paren{\px{2}\paren{c\sr_{0}}
    + \px{3}\paren{\paren{\frac{{s\srt_{0}}}{n} + \frac{n-1}{n}c\sr_{0}}} 
    + \px{4}\paren{P_1 c\sr_{0} + c\sr_{0,1}}
    + \px{5}\paren{\frac{{s\sr_{0,1}}}{n} + \frac{n-1}{n}c\sr_{0,1}}}\\
    & \s\s\s\s\s\s\s\s\s + \px{2}\paren{c\sr_{0,1}}
    + \px{3}\paren{\frac{{s\sr_{0,1}}}{n} + \frac{n-1}{n}c\sr_{0,1}} 
    + \px{4}\paren{P_1 c\sr_{0,1} + c\sr_{1}}
    + \px{5}\paren{\paren{\frac{{s\srt_{1}}}{n} + \frac{n-1}{n}c\sr_{1}}}\\
    & = \px{2}\paren{P_1c\sr_{0} + c\sr_{0,1}}
    + \px{3}\paren{\frac{P_1{s\srt_{0}}+{s\sr_{0,1}}}{n} + \frac{n-1}{n}\paren{P_1c\sr_{0}+c\sr_{0,1}}}\\
    & \s\s\s\s\s\s\s\s\s 
    + \px{4}\paren{P_1^2 c\sr_{0} + 2P_1c\sr_{0,1} + c\sr_{1}}
    + \px{5}\paren{\frac{P_1{s\sr_{0,1}}+{s\srt_{1}}}{n} + \frac{n-1}{n}\paren{P_1c\sr_{0,1}+c\sr_{1}}}.
\end{align*}
Similarly, for $\Cov{\YY[j]{t},\YY[j]{2}\mid \bX, R=r}$ in the data-generative model to match the target value $\phi_{t,2}\sr$, we must have
\begin{align*}
    \phi_{0,2}\sr & = \px{2}\paren{s\srt_{0}}
    + \px{3}\paren{\frac{{s\srt_{0}}}{n} + \frac{n-1}{n}c\sr_{0}}
    + \px{4}\paren{P_1 s\srt_{0} + s\sr_{0,1}}
    + \px{5}\paren{\frac{{s\sr_{0,1}}}{n} + \frac{n-1}{n}c\sr_{0,1}}
    \shortintertext{and}
    \phi_{1,2}\sr & = \px{2}\paren{P_1s\srt_{0} + s\sr_{0,1}}
    + \px{3}\paren{\frac{P_1{s\srt_{0}}+{s\sr_{0,1}}}{n} + \frac{n-1}{n}\paren{P_1c\sr_{0}+c\sr_{0,1}}}\\
    & \s\s\s\s\s\s\s\s\s 
    + \px{4}\paren{P_1^2 s\srt_{0} + 2P_1s\sr_{0,1} + s\srt_{1}}
    + \px{5}\paren{\frac{P_1{s\sr_{0,1}}+{s\srt_{1}}}{n} + \frac{n-1}{n}\paren{P_1c\sr_{0,1}+c\sr_{1}}}.
\end{align*}

Therefore, using $\Upsilon$ (as defined in Section \ref{sec:sim_des:supp_defs}),
we see the following is a necessary and sufficient condition for our pre-response/post-response conditional variance requirements to hold:

$$\ttttvec{\phi_{0,2}\sr}{\phi_{1,2}\sr}{\rho_{0,2}\sr}{\rho_{1,2}\sr} = \Upsilon\ttttvec{\px{2}}{\px{3}}{\px{4}}{\px{5}}.$$
Recalling that $\ttttvec{\px{2}}{\px{3}}{\px{4}}{\px{5}}=\Upsilon\inv\ttttvec{\phi_{0,2}\sr}{\phi_{1,2}\sr}{\rho_{0,2}\sr}{\rho_{1,2}\sr},$ we observe that our data-generative model satisfies our cross-response conditional covariance requirements.

Finally, we check our requirements on $\V{\YY{2}\mid \bX, R=r}$. Recalling that we constructed $\eex{2}\indep \ttvec{\ee{0}}{\ee{1}}\big|_{R=r}$, we see that the following are necessary and sufficient conditions for $\Cov{\YY[j]{2},\YY[k]{2}\mid \bX, R=r}$ and $\V{\YY[j]{2}\mid \bX, R=r}$ to match their respective target values, $\rho\sr_2$ and $\sigma\srt_2$ (for any $j,k=1\ddd n,\s j\neq k$):
\begin{align*}
    \rho\sr_2 & = \text{Cov}\Bigg(\px{2}\ee[j]{0} + \px{3}\frac{1}{n}\Sum{m}{n}\ee[m]{0} + \px{4}\paren{P_1\ee[j]{0} + \ee[j]{1}} + \px{5}\frac{1}{n}\Sum{m'}{n}\ee[m']{1} + \eex[j]{2},\\
    & \px{2}\ee[k]{0} + \px{3}\frac{1}{n}\Sum{m}{n}\ee[m]{0} + \px{4}\paren{P_1\ee[k]{0} + \ee[k]{1}} + \px{5}\frac{1}{n}\Sum{m'}{n}\ee[m']{1} + \eex[k]{2}\mid R=r\Bigg)\\
    & = \text{Cov}\Bigg(\paren{\px{2} + \frac{\px{3}}{n} + \px{4}P_1}\ee[j]{0,i} + \frac{\px{3}}{n}\ee[k]{0,i} + \frac{\px{3}}{n}\displaystyle\sum_{m\neq j,k}\ee[m]{0,i} + \\
    & \s\s\s\s\s\s\s\s\s \s\s\s\s\s\s\s\s\s \paren{\px{4} + \frac{\px{5}}{n}}\ee[j]{1,i} + \frac{\px{5}}{n}\ee[k]{1,i} + \frac{\px{5}}{n}\displaystyle\sum_{m\neq j,k}\ee[m]{1,i} + \eex[j]{2},\\
    & \s\s\s\s\s\s\s\s\s \paren{\px{2} + \frac{\px{3}}{n} + \px{4}P_1}\ee[k]{0,i} + \frac{\px{3}}{n}\ee[j]{0,i} + \frac{\px{3}}{n}\displaystyle\sum_{m\neq j,k}\ee[m]{0,i} + \\
    & \s\s\s\s\s\s\s\s\s \s\s\s\s\s\s\s\s\s \paren{\px{4} + \frac{\px{5}}{n}}\ee[k]{1,i} + \frac{\px{5}}{n}\ee[j]{1,i} + \frac{\px{5}}{n}\displaystyle\sum_{m\neq j,k}\ee[m]{1,i} + \eex[k]{2}\mid R=r\Bigg)\\
    & = \paren{\px{2} + \frac{\px{3}}{n} + \px{4}P_1}\paren{\paren{\px{2} + \frac{\px{3}}{n}\paren{n-1} + \px{4}P_1}\cx{0} + \frac{\px{3}}{n}\ssx{0}}\\
    & \s\s\s\s\s\s\s\s\s\s\s\s\s\s\s\s\s\s\s\s\s
    + \paren{\px{2} + \frac{\px{3}}{n} + \px{4}P_1}\paren{\paren{\px{4} + \frac{\px{5}}{n}\paren{n-1}}\cx{0,1} + \frac{\px{5}}{n}\sx{0,1}}\\
    & + \frac{\px{3}}{n}\paren{\px{2} + \frac{\px{3}}{n} + \px{4}P_1}\ssx{0} + \paren{\frac{\px{3}}{n}}^2\paren{n-1}\cx{0}\\
    & \s\s\s\s\s\s\s\s\s\s\s\s\s\s\s\s\s\s\s\s\s
    + \frac{\px{3}}{n}\paren{\px{4} + \frac{\px{5}}{n}}\sx{0,1} + \frac{\px{3}\px{5}}{n^2}\paren{n-1}\cx{0,1}\\
    & + \frac{\px{3}}{n}\paren{n-2}\paren{\px{2} + 2\frac{\px{3}}{n} + \px{4}P_1}\cx{0} + \paren{\frac{\px{3}}{n}}^2\paren{n-2}\left[\ssx{0}+\paren{\paren{n-2}-1}\cx{0}\right]\\
    & \s\s\s\s\s\s\s\s\s\s\s\s\s\s\s\s\s\s\s\s\s
    + \frac{\px{3}}{n}\paren{n-2}\paren{\px{4} + 2\frac{\px{5}}{n}}\cx{0,1} + \frac{\px{3}\px{5}}{n^2}\paren{n-2}\left[\sx{0,1}+\paren{\paren{n-2}-1}\cx{0,1}\right]\\
    & + \paren{\px{4} + \frac{\px{5}}{n}}\paren{\paren{\px{2} + \frac{\px{3}}{n}\paren{n-1} + \px{4}P_1}\cx{0,1} + \frac{\px{3}}{n}\sx{0,1}}\\
    & \s\s\s\s\s\s\s\s\s\s\s\s\s\s\s\s\s\s\s\s\s
    + \paren{\px{4} + \frac{\px{5}}{n}}\paren{\paren{\px{4} + \frac{\px{5}}{n}\paren{n-1}}\cx{1} + \frac{\px{5}}{n}\ssx{1}}\\
    & + \frac{\px{5}}{n}\paren{\px{2} + \frac{\px{3}}{n} + \px{4}P_1}\sx{0,1} + \frac{\px{3}\px{5}}{n^2}\paren{n-1}\cx{0,1}\\
    & \s\s\s\s\s\s\s\s\s\s\s\s\s\s\s\s\s\s\s\s\s
    + \frac{\px{5}}{n}\paren{\px{4} + \frac{\px{5}}{n}}\ssx{1} + \paren{\frac{\px{5}}{n}}^2\paren{n-1}\cx{1}\\
    & + \frac{\px{5}}{n}\paren{n-2}\paren{\px{2} + 2\frac{\px{3}}{n} + \px{4}P_1}\cx{0,1} + \paren{\frac{\px{3}\px{5}}{n^2}}\paren{n-2}\left[\sx{0,1}+\paren{\paren{n-2}-1}\cx{0,1}\right]\\
    & \s\s\s\s\s\s\s\s\s\s\s\s\s\s\s\s\s\s\s\s\s
    + \frac{\px{5}}{n}\paren{n-2}\paren{\px{4} + 2\frac{\px{5}}{n}}\cx{1} + \paren{\frac{\px{5}}{n}}^2\paren{n-2}\left[\ssx{1}+\paren{\paren{n-2}-1}\cx{1}\right]\\
    & + \paren{\Sigma_{2}^{[r]}}_{j,k}\\
    & =: \zeta' + \paren{\Sigma_{2}^{[r]}}_{j,k},
\end{align*}
and
\begin{align*}
    \sigma\srt_2 & = \V{\paren{\px{2}+\frac{\px{3}}{n}+\px{4}P_1}\ee[j]{0} + \frac{\px{3}}{n}\displaystyle\sum_{k\neq j}\ee[k]{0,i} + \paren{\px{4} + \frac{\px{5}}{n}}\ee[j]{1,i} + \frac{\px{5}}{n}\displaystyle\sum_{k\neq j}\ee[k]{1,i} + \eex[j]{2}\mid R=r}\\
    & = \paren{\px{2}+\frac{\px{3}}{n}+\px{4}P_1}^2\ssx{0}
    + \paren{\frac{\px{3}}{n}}^2\paren{n-1}\left[\ssx{0}+\paren{\paren{n-1}-1}\cx{0}\right]\\
    & \s\s\s\s\s\s\s\s\s\s\s\s\s\s\s\s\s\s\s\s\s
    + \paren{\px{4} + \frac{\px{5}}{n}}^2\ssx{1} + 
    \paren{\frac{\px{5}}{n}}^2\paren{n-1}\left[\ssx{1}+\paren{\paren{n-1}-1}\cx{1}\right] + \paren{\Sigma_{2}^{[r]}}_{j,j}\\
    & + 2\paren{\px{2}+\frac{\px{3}}{n}+\px{4}P_1}\paren{\frac{\px{3}}{n}\paren{n-1}\cx{0} + \paren{\px{4} + \frac{\px{5}}{n}}\sx{0,1} + \frac{\px{5}}{n}\paren{n-1}\cx{0,1}}\\
    & \s\s\s\s\s\s\s\s\s\s\s\s\s\s\s\s\s\s\s\s\s
    + 2\frac{\px{3}}{n}\paren{n-1}\paren{\paren{\px{4} + \frac{\px{5}}{n}}\cx{0,1} + \frac{\px{5}}{n}\left[\sx{0,1}+\paren{\paren{n-1}-1}\cx{0,1}\right]}\\
    & + 2\paren{\px{4} + \frac{\px{5}}{n}}\paren{\frac{\px{5}}{n}}\paren{n-1}\cx{1}\\
    & =: \zeta + \paren{\Sigma_{2}^{[r]}}_{j,j}.
\end{align*}

Given that we generated $\ee{2}\sim\N{\vct{0}_{n}}{\Sigma_{2}^{[r]}}$, where $\Sigma_{2}^{[r]} = \paren{\paren{\sigma\srt_2-\zeta} - \paren{\rho\sr_{2}-\zeta'}}I_{n} + \paren{\rho\sr_{2}-\zeta'}\on\on^T$, our construction of $\Sigma_{2}^{[r]}$ guarantees that the overall data-generating model satisfies the requirements on $\V{\YY{2}\mid \bX, R=r}$.

\subsection{Parameter Selection}\label{sec:sim_des:parm_sel}
Section \ref{sec:sim_des:dgm} discussed the arguments to select in order to fully specify the data-generative model. The core parameters defining the simulation settings in Section \ref{sec:sim_study} of the main body were derived via analysis of ASIC data. To obtain marginal mean/variance data-generative parameters under a given variance setting, we used the method described in the main body (along with the example marginal mean model in Equation \ref{eq:mmm_ex} of the main body), assuming the working variance structure in question, to model weekly delivery of CBT in the ASIC data.\footnote{Note that, for this purpose, we restricted the ASIC data to times $t_0$, $t^*$, and $t_T$ in order to provide parameters to specify a data-generative model with three time points.} 

As noted above, we also needed to specify the conditional mean/variance parameters for responding clusters - we did so by analyzing the non-responding clusters in ASIC (separately for each embedded DTR). We then averaged non-responding parameters across first-stage treatment arms and used these, along with the marginal distribution parameters, to derive responder parameters.\footnote{We did so because ASIC had a low response rate, which may lead to unstable conditional distribution parameters for responding clusters.}\footnote{Occasionally, this approach led to distributions which were incompatible with the data-generative model. In these cases, we made slight adjustments to the parameter inputs.} As recommended in \cite{Pan_U}, we assumed all covariance terms were non-negative when fitting the marginal model. While this assumption is reasonable in ASIC-type scenarios and facilitates estimator stability, it is not necessarily justifiable when considering conditional covariance terms.\footnote{For example, if a non-responding cluster has two individuals and one of them was well-off at time $t^*$, then it was likely the case that the other individual sufficiently not well-off as to offset their counterpart and trigger cluster-wide non-response.} Therefore, we fit the conditional models using unrestricted correlation estimation.

To promote numerical stability across simulations, we chose to use a response rate of $50\%$. Additionally, we simulated two centered covariates, one individual-level with normal distribution and one cluster-level with uniform distribution, and included these covariates in the model fitting.

Lastly, as discussed in Section \ref{sec:sim_des:dgm:dgp}, we simulated DTR assignment prior to any outcome/covariates.\footnote{Note that this simulation was independent of all other data, and therefore is equivalent to simulating $A_1, R, A_{2}$ sequentially.} We assumed equal probability of assignment across all four possible embedded DTRs, and used complete randomization for assignment.

We used the following model-fitting settings to obtain the data-generative parameters:
\subsubsection*{Section \ref{sec:sim_study:validity}}
\begin{itemize}
    \item Outcome: Weekly CBT delivery.
    \item Variance Model: Heteroscedastic and heterogeneous across embedded DTR.
    \item Correlation Model: AR(1) within-person; Exchangeable between-person; Heterogeneous across embedded DTR.
\end{itemize}
\subsubsection*{Section \ref{sec:sim_study:power}}
\begin{itemize}
    \item Outcome: 
        \subitem High-Correlation: Aggregate CBT delivery.
        \subitem Low-Correlation: Weekly CBT delivery.
    \item Variance Model: Heteroscedastic and homogeneous across embedded DTR.
    \item Correlation Model: AR(1) within-person; Exchangeable between-person; Homogeneous across embedded DTR.
\end{itemize}
\subsubsection*{Section \ref{sec:sim_study:wvm}}
\begin{itemize}
    \item Outcome: Aggregate CBT delivery.
    \item Variance Model: 
        \subitem Complex Data-Generative Structure: Heteroscedastic and homogeneous across embedded DTR.
        \subitem Simple Data-Generative Structure: Homoscedastic and homogeneous across embedded DTR.
    \item Correlation Model: 
        \subitem Complex Data-Generative Structure: AR(1) within-person; Exchangeable between-person; Homogeneous across embedded DTR.
        \subitem Simple Data-Generative Structure: Independent.
\end{itemize}

For all simulation environments other than those in (main body) Section \ref{sec:sim_study:power}, we randomized cluster size to be two with probability 0.67 and three with probability 0.33 - matching the empirical distribution of ASIC cluster sizes (among schools with more than one SP). For the simulations in (main body) Section \ref{sec:sim_study:power}, we used a uniform cluster size of two (as the sample size formula in \citet{NeCamp2017} requires a uniform sample size). All results in Section \ref{sec:sim_study} of the main body are based on $20,000$ Monte Carlo simulations per data-generative environment.

\subsection{Supplemental Definitions}\label{sec:sim_des:supp_defs}
After using Monte-Carlo simulation to obtain the following $s$ and $c$ parameters:
$$\V{\ttvec{\ee{0}}{\ee{1}}\mid \bX, R=r} = 
\left[\begin{array}{@{}c|c@{}}
              \paren{{s\srt_{0}} - c\sr_{0}}I_{n}
              + c\sr_{0}\on\on^T
              & \paren{{s\sr_{0,1}} - c\sr_{0,1}}I_{n}
              + c\sr_{0,1}\on\on^T  \\
            \hline
              \paren{{s\sr_{0,1}} - c\sr_{0,1}}I_{n}
              + c\sr_{0,1}\on\on^T &
              \paren{{s\srt_{1}} - c\sr_{1}}I_{n}
              + c\sr_{1}\on\on^T
            \end{array}\right],$$
we employ the following notation:

$$\hspace*{-2cm}\Upsilon:=\begin{bmatrix}
s\srt_{0} & \frac{{s\srt_{0}}}{n} + \frac{n-1}{n}c\sr_{0} & P_1 s\srt_{0} + s\sr_{0,1} & \frac{{s\sr_{0,1}}}{n} + \frac{n-1}{n}c\sr_{0,1} \\
P_1s\srt_{0} + s\sr_{0,1} & \frac{P_1{s\srt_{0}}+{s\sr_{0,1}}}{n} + \frac{n-1}{n}\paren{P_1c\sr_{0}+c\sr_{0,1}} & P_1^2 s\srt_{0} + 2P_1s\sr_{0,1} + s\srt_{1} & \frac{P_1{s\sr_{0,1}}+{s\srt_{1}}}{n} + \frac{n-1}{n}\paren{P_1c\sr_{0,1}+c\sr_{1}} \\
c\sr_{0} & \frac{{s\srt_{0}}}{n} + \frac{n-1}{n}c\sr_{0} & P_1 c\sr_{0} + c\sr_{0,1} & \frac{{s\sr_{0,1}}}{n} + \frac{n-1}{n}c\sr_{0,1} \\ 
P_1c\sr_{0} + c\sr_{0,1} & \frac{P_1{s\srt_{0}}+{s\sr_{0,1}}}{n} + \frac{n-1}{n}\paren{P_1c\sr_{0}+c\sr_{0,1}} & P_1^2 c\sr_{0} + 2P_1c\sr_{0,1} + c\sr_{1} & \frac{P_1{s\sr_{0,1}}+{s\srt_{1}}}{n} + \frac{n-1}{n}\paren{P_1c\sr_{0,1}+c\sr_{1}}
\end{bmatrix},\hspace*{-2cm}$$
\begin{align*}
    \zeta &:= \paren{\px{2}+\frac{\px{3}}{n}+\px{4}P_1}^2\ssx{0}
    + \paren{\frac{\px{3}}{n}}^2\paren{n-1}\left[\ssx{0}+\paren{\paren{n-1}-1}\cx{0}\right]\\
    & \s\s\s\s\s\s\s\s\s\s\s\s\s\s\s\s\s\s\s\s\s
    + \paren{\px{4} + \frac{\px{5}}{n}}^2\ssx{1} + 
    \paren{\frac{\px{5}}{n}}^2\paren{n-1}\left[\ssx{1}+\paren{\paren{n-1}-1}\cx{1}\right] \\
    & + 2\paren{\px{2}+\frac{\px{3}}{n}+\px{4}P_1}\paren{\frac{\px{3}}{n}\paren{n-1}\cx{0} + \paren{\px{4} + \frac{\px{5}}{n}}\sx{0,1} + \frac{\px{5}}{n}\paren{n-1}\cx{0,1}}\\
    & \s\s\s\s\s\s\s\s\s\s\s\s\s\s\s\s\s\s\s\s\s
    + 2\frac{\px{3}}{n}\paren{n-1}\paren{\paren{\px{4} + \frac{\px{5}}{n}}\cx{0,1} + \frac{\px{5}}{n}\left[\sx{0,1}+\paren{\paren{n-1}-1}\cx{0,1}\right]}\\
    & + 2\paren{\px{4} + \frac{\px{5}}{n}}\paren{\frac{\px{5}}{n}}\paren{n-1}\cx{1},\\
    \shortintertext{and}
    \zeta' &:=\paren{\px{2} + \frac{\px{3}}{n} + \px{4}P_1}\paren{\paren{\px{2} + \frac{\px{3}}{n}\paren{n-1} + \px{4}P_1}\cx{0} + \frac{\px{3}}{n}\ssx{0}}\\
    & \s\s\s\s\s\s\s\s\s\s\s\s\s\s\s\s\s\s\s\s\s
    + \paren{\px{2} + \frac{\px{3}}{n} + \px{4}P_1}\paren{\paren{\px{4} + \frac{\px{5}}{n}\paren{n-1}}\cx{0,1} + \frac{\px{5}}{n}\sx{0,1}}\\
    & + \frac{\px{3}}{n}\paren{\px{2} + \frac{\px{3}}{n} + \px{4}P_1}\ssx{0} + \paren{\frac{\px{3}}{n}}^2\paren{n-1}\cx{0}\\
    & \s\s\s\s\s\s\s\s\s\s\s\s\s\s\s\s\s\s\s\s\s
    + \frac{\px{3}}{n}\paren{\px{4} + \frac{\px{5}}{n}}\sx{0,1} + \frac{\px{3}\px{5}}{n^2}\paren{n-1}\cx{0,1}\\
    & + \frac{\px{3}}{n}\paren{n-2}\paren{\px{2} + 2\frac{\px{3}}{n} + \px{4}P_1}\cx{0} + \paren{\frac{\px{3}}{n}}^2\paren{n-2}\left[\ssx{0}+\paren{\paren{n-2}-1}\cx{0}\right]\\
    & \s\s\s\s\s\s\s\s\s\s\s\s\s\s\s\s\s\s\s\s\s
    + \frac{\px{3}}{n}\paren{n-2}\paren{\px{4} + 2\frac{\px{5}}{n}}\cx{0,1} + \frac{\px{3}\px{5}}{n^2}\paren{n-2}\left[\sx{0,1}+\paren{\paren{n-2}-1}\cx{0,1}\right]\\
    & + \paren{\px{4} + \frac{\px{5}}{n}}\paren{\paren{\px{2} + \frac{\px{3}}{n}\paren{n-1} + \px{4}P_1}\cx{0,1} + \frac{\px{3}}{n}\sx{0,1}}\\
    & \s\s\s\s\s\s\s\s\s\s\s\s\s\s\s\s\s\s\s\s\s
    + \paren{\px{4} + \frac{\px{5}}{n}}\paren{\paren{\px{4} + \frac{\px{5}}{n}\paren{n-1}}\cx{1} + \frac{\px{5}}{n}\ssx{1}}\\
    & + \frac{\px{5}}{n}\paren{\px{2} + \frac{\px{3}}{n} + \px{4}P_1}\sx{0,1} + \frac{\px{3}\px{5}}{n^2}\paren{n-1}\cx{0,1}\\
    & \s\s\s\s\s\s\s\s\s\s\s\s\s\s\s\s\s\s\s\s\s
    + \frac{\px{5}}{n}\paren{\px{4} + \frac{\px{5}}{n}}\ssx{1} + \paren{\frac{\px{5}}{n}}^2\paren{n-1}\cx{1}\\
    & + \frac{\px{5}}{n}\paren{n-2}\paren{\px{2} + 2\frac{\px{3}}{n} + \px{4}P_1}\cx{0,1} + \paren{\frac{\px{3}\px{5}}{n^2}}\paren{n-2}\left[\sx{0,1}+\paren{\paren{n-2}-1}\cx{0,1}\right]\\
    & \s\s\s\s\s\s\s\s\s\s\s\s\s\s\s\s\s\s\s\s\s
    + \frac{\px{5}}{n}\paren{n-2}\paren{\px{4} + 2\frac{\px{5}}{n}}\cx{1} + \paren{\frac{\px{5}}{n}}^2\paren{n-2}\left[\ssx{1}+\paren{\paren{n-2}-1}\cx{1}\right].
\end{align*}

\refstepcounter{suppsection}
\phantomsection
\section{General SMART Structures}\label{sec:gen_smart}
This appendix discusses generalizations of the notation and methodology presented in the main report to general SMART structures. We restrict consideration to discrete treatment and embedded tailoring variables.

\subsection{Embedded Adaptive Interventions}
As discussed in Section \ref{sec:smart:embcai} of the main body, there are adaptive interventions embedded in the design of SMARTs. The number and and nature of these embedded adaptive interventions depend on the SMART structure at hand, with prototypical SMARTs containing four embedded adaptive interventions indexed by the first stage treatment and second stage treatment decision for non-responding clusters. Similarly, SMART design IV also contains four embedded adaptive interventions. While this design does not contain a notion of embedded tailoring (i.e., there is no determination of \quotes{response}), the two choices of first-stage intervention and two choices for second-stage intervention induce four embedded decision protocols. Such embedded adaptive interventions can be indexed by these choices $(a_1, a_2)$. 

SMART design I contains eight embedded adaptive interventions. As in prototypical SMARTs, the presence of a binary embedded tailoring variable (\quotes{response}) induces three choices - choice of first-stage treatment, choice of second-stage treatment for responding clusters, and choice of second-stage treatment for non-responding clusters. Therefore, embedded adaptive interventions for SMART design I can be denoted as $(a_1, a_{2R}, a_{2NR})$. However, unlike prototypical SMARTs, there is more than one option for the choice of second-stage intervention for responding clusters. This additional choice induces eight embedded AIs, rather than the four in prototypical SMARTs.

\subsection{Notation}
We consider a SMART with $K$ stages of treatment. We consider \quotes{valid} levels of treatment and response to be those identified in the SMART design. For example, in ASIC, the valid first-stage treatments are \quotes{Add Coaching} and \quotes{Do Not Add Coaching.} The valid levels of response for either first-stage treatment are \quotes{Responder} and \quotes{Non-Responder.} By design, the only valid second-stage intervention option for all responders is \quotes{Continue,} whereas the valid second-stage intervention options for all non-responders are \quotes{Add Facilitation} and \quotes{Do Not Add Facilitation.}

Given this notion of \quotes{valid} response and intervention options, we adopt the formal notation laid out below. For any $k\in\set{1\ddd K}$, let $\Bar{a}_k:=\paren{a_1\ddd a_k}$ denote past treatments and let $\Bar{r}_k:=\paren{r_1\ddd r_k}$ denote response history, where $r_k$ is response status used to tailor the $k+1$-stage intervention. For any $k>1$ let $\mathcal{A}_{k|\Bar{A}_{k-1},\Bar{R}_{k-1}}$ denote the valid treatment options for a cluster at time $k$ given past interventions $\Bar{A}_{k-1}$ and response history $\Bar{R}_{k-1}$ and let $\mathcal{R}_{k|\Bar{A}_{k},\Bar{R}_{k-1}}$ denote the valid levels of response given intervention/response history $\paren{\Bar{A}_{k},\Bar{R}_{k-1}}$. Finally, we let $\paren{\Bar{\mathcal{A}},\Bar{\mathcal{R}}}_k$ denote the set of valid treatment-response pathways up to stage $k$; i.e., $\paren{\Bar{a}_{k},\Bar{r}_{k}}\in\paren{\Bar{\mathcal{A}},\Bar{\mathcal{R}}}_k$ if and only if $a_1\in\mathcal{A}_1$, $r_1\in\mathcal{R}_{1\mid a_1}$, and, for all $j=2\ddd k$, $a_{j}\in\mathcal{A}_{j\mid \Bar{a}_{j-1}, \Bar{r}_{j-1}}$ and 
$r_{j}\in\mathcal{R}_{j\mid \Bar{a}_{j}, \Bar{r}_{j-1}}$.

We also note that embedded adaptive interventions take the form \\$\set{f_1(R_{0,i}),\s f_2(\Tilde{f}_1(R_{0,i}),R_{1,i}),\s\dots, \s f_K(\Tilde{f}_{K-1}\paren{\Bar{R}_{K-2}}, R_{K-1,i})}$, where $f_k:\Bar{\mathcal{A}}_{k-1}\times \mathcal{R}_{k-1}\to\mathcal{A}_k$ is a deterministic function mapping past treatments and response to subsequent intervention for each $k=1\ddd K$. Here, $\Tilde{f}_1\paren{R_0}:=f_1\paren{R_0}$ and $\Tilde{f}_k\paren{\Bar{R}_{k-1}}:=\paren{f_1\paren{R_0}, f_2\paren{\Tilde{f}_1\paren{R_0}, R_1}\ddd f_k\paren{\Tilde{f}_{k-1}\paren{\Bar{R}_{k-2}}, R_{k-1}}}\in\Bar{\mathcal{A}}_k$.

For notational brevity, consider $A_0$ to be a degenerate treatment assignment (i.e., constant for all clusters). Similarly, consider a pre-trial baseline embedded tailoring variable $R_0$. In all four SMART designs considered in Figure \ref{fig:SMART_strs} of the main body, $R_0$ is constant/trivial. However, it would be possible to conduct a trial in which first-stage treatment options depended on baseline information.

\subsection{Marginal Mean Model}
The exact form of marginal mean model should reflect the randomization structure of the SMART at hand. For SMART design II, Equation \ref{eq:mmm_ex} in the main body gave a piecewise linear model with a knot at the second decision point, allowing for separate first and second stage treatment slopes. The equations below represent analogues of this model for SMART designs I, III, and IV, respectively.
\begin{equation} \label{eq:mmm_I}
\begin{split}
    \mu^{I}_t(a_1, a_{2R} a_{2NR}; \vct{\theta}) & = 
    \eta X_{ij} + \gamma_0 + \Ind{t\leq t^*}\paren{\gamma_1t + \gamma_2a_1t} \\
     & \hspace{.5cm} + \Ind{t > t^*}\big(\gamma_1t^* + \gamma_2a_1t^* + \gamma_3\paren{t-t^*} + \gamma_4\paren{t-t^*}a_1 \\
     & \hspace{.5cm} + \gamma_5\paren{t-t^*}a_{2R} + \gamma_6\paren{t-t^*}a_{2NR}\\
     & \hspace{.5cm} + \gamma_7\paren{t-t^*}a_1a_{2R} + \gamma_8\paren{t-t^*}a_1a_{2NR}
     \big)
\end{split}
\end{equation}
\begin{equation} \label{eq:mmm_III}
\begin{split}
    \mu^{III}_t(a_1, a_{2NR}; \vct{\theta}) & = 
    \eta X_{ij} + \gamma_0 + \Ind{t\leq t^*}\paren{\gamma_1t + \gamma_2a_1t} \\
     & \hspace{.5cm} + \Ind{t > t^*}\big(\gamma_1t^* + \gamma_2a_1t^* + \gamma_3\paren{t-t^*} + \gamma_4\paren{t-t^*}a_1 \\
     & \hspace{.5cm} + \gamma_5\paren{t-t^*}a_{2NR}\Ind{a_1=1}\big)
\end{split}
\end{equation}\begin{equation} \label{eq:mmm_IV}
\begin{split}
    \mu^{IV}_t(a_1, a_{2}; \vct{\theta}) & = 
    \eta X_{ij} + \gamma_0 + \Ind{t\leq t^*}\paren{\gamma_1t + \gamma_2a_1t} \\
     & \hspace{.5cm} + \Ind{t > t^*}\big(\gamma_1t^* + \gamma_2a_1t^* + \gamma_3\paren{t-t^*} + \gamma_4\paren{t-t^*}a_1 \\
     & \hspace{.5cm} + \gamma_5\paren{t-t^*}a_{2} + \gamma_6\paren{t-t^*}a_1a_{2}\big)
\end{split}
\end{equation}

While the SMART structure in question may heavily influence the construction of the marginal mean model, this is not necessarily the case for the working variance model, which may be more heavily influenced by the outcome modeled and nature of clustering.

\subsection{Estimation}\label{sec:gen_smart:est}
As discussed in Section \ref{sec:est} of the main body, the analyst must employ weights to avoid over-representation of certain clusters in their estimating equation. The exact form of the weights depends on the SMART structure at hand. Generally, the weights take the form:
$$W_i=W\paren{\Bar{A}_{K,i},\Bar{R}_{K,i}} = 
\dfrac{1}{
\displaystyle\prod_{k=1}^K
\Prb{A_{k}=A_{k,i}\mid \Bar{A}_{k-1,i}, \Bar{R}_{k-1,i}}}.$$
As in the prototypical setting, the analyst can either use the known randomization probabilities to calculate these weights or estimate the probabilities themselves to improve precision. For weight estimation in the general setting, the weights take the form

$$W_i =
W(\Bar{a}_{K, i}, \Bar{r}_{K, i}, \bh_{K,i})
=\dfrac{1}{\prod_{k=1}^K p_{k;\pihat}\paren{a_{k,i}; r_{k-1,i}, \bh_{k,i}}},$$
where $\bh_{k,i}$ represents information on Cluster $i$ available up to the $k^{th}$ decision point, to be used for weight estimation. Note that $\Bar{a}_{k-1,i}\in\bh_{k,i}$ and $\bh_{1,i}\subset \bX_i$.

In this setting, we let 
$$p_{\pi}\paren{\Bar{a}_K; \Bar{r}_{K}, \bh_{K}}:= 
\prod_{k=1}^K p_{k;\pi}\paren{a_{k}; r_{k-1}, \bh_{k}}.$$

Furthermore, we consider the indicator function
\begin{equation*}
    I_i(d) = 
    \begin{cases}
    1 & \text{ if }
    \Bar{A}_i = \set{f_1(R_{0,i}),\s f_2(\Tilde{f}_1(R_{0,i}),R_{1,i}),\s\dots, \s f_K(\Tilde{f}_{K-1}\paren{\Bar{R}_{K-2}}, R_{K-1,i})}\\
    0 & \text{ otherwise}
    \end{cases},
\end{equation*}
where $d\in\mathcal{D}$ denotes the embedded adaptive intervention $\set{f_1(R_{0,i}),\s f_2(\Tilde{f}_1(R_{0,i}),R_{1,i}),\s\dots, \s f_K(\Tilde{f}_{K-1}\paren{\Bar{R}_{K-2}}, R_{K-1,i})}$.

After constructing a marginal mean and working variance model, as well as calculating weights, the analyst can follow the same estimation procedure laid out in Section \ref{sec:est} of the main body for estimation/inference of marginal parameters in the general SMART setting.

\subsection{Causal Identifiability Assumptions}\label{sec:gen_smart:id_ass}
Appendix \ref{sec:id_ass} presented the assumptions necessary to identify causal effects in the prototypical SMART setting. Here, we present these assumptions in the general SMART setting.
\begin{enumerate}
    \item \label{ass_gen:id_1} \textbf{Positivity}
    \subitem 
    For any $k=1\ddd K$, $\Prb{A_k=a_k\mid \Bar{A}_{k-1}=\Bar{a}_{k-1}, \Bar{R}_{k-1}=\Bar{r}_{k-1}}\in(0,1)$ for any $a_k\in\mathcal{A}_{k\mid \Bar{a}_{k-1}, \Bar{r}_{k-1}}$ and $\paren{\Bar{a}_{k-1}, \Bar{r}_{k-1}}\in\paren{\Bar{\mathcal{A}}, \Bar{\mathcal{R}}}_{k-1}.$
    \item \label{ass_gen:id_2} \textbf{Consistency}
    \subitem We consider consistency with respect to response and outcome:
    \begin{enumerate}
        \item For any $k=1\ddd K$
        $$R_{k,i}  =
        \displaystyle\sum_{\paren{\Bar{a}_{k-1}, \Bar{r}_{k-1}}\in\paren{\Bar{\mathcal{A}}, \Bar{\mathcal{R}}}_{k-1}}
        \Ind{\Bar{A}_{k-1,i}=\Bar{a}_{k-1}, \Bar{R}_{k-1,i}=\Bar{r}_{k-1}}\paren{
        \displaystyle\sum_{a_k\in\mathcal{A}_{k\mid \Bar{a}_{k-1}, \Bar{r}_{k-1}}} \Ind{A_{k,i}=a_k}R_{k,i}^{\paren{\Bar{A}_{k,i}=\Bar{a}_{k}, \Bar{R}_{k-1,i}=\Bar{r}_{k-1}}}}.$$
        \item Let $d$ denote the embedded adaptive intervention $\set{f_1(R_{0,i}),\s f_2(\Tilde{f}_1(R_{0,i}),R_{1,i}),\s\dots, \s f_K(\Tilde{f}_{K-1}\paren{\Bar{R}_{K-2}}, R_{K-1,i})}$. Suppose $t\in(\Tilde{t}_{k}, \Tilde{t}_{k+1}]$ and $\Bar{A}_{i,k}=\Tilde{f}_k\paren{\Bar{R}_{k-1, i}}$, then $\vct{Y}_{i,.,t}\d=\vct{Y}_{i,.,t}$.
    \end{enumerate}
    \item \label{ass_gen:id_3} \textbf{Sequential (Conditional) Exchangeability}
    \subitem Given \textit{any} set of baseline covariates $\bX$ (where $\bX$ could be empty), we have, for any $k=0\ddd K-1$,
    \begin{enumerate}
        \item $\set{\vct{\bY}_i^{(d)}, 
        R_{k, i}^{\paren{\Bar{A}_k, \Bar{R}_{k-1}}} \ddd 
        R_{K-1, i}^{\paren{\Bar{A}_{K-1}, \Bar{R}_{K-2}}}}
        \indep 
        A_{k}\big|_{\Bar{A}_{k-1}=\Tilde{f}_{K-1}\paren{\Bar{R}_{K-1}}, \Bar{R}_{k-1}, \bX}$,
        \item $\vct{\bY}_i^{(d)}\indep A_{K}\big|_{\Bar{A}_{K-1}=\Tilde{f}_{K-1}\paren{\Bar{R}_{K-1}}, \Bar{R}_{K-1}, \bX}$.
    \end{enumerate}
\end{enumerate}

\refstepcounter{suppsection}
\phantomsection
\section{Asymptotic Theory}\label{sec:asy_thry}

In this appendix/supplement, we establish the asymptotic normality of the estimators discussed in the main body. As discussed previously, previous works on longitudinal and clustered SMART analyses outline an approach which we adapt for analyses of higher levels of clustering \citep{Lu2015, NeCamp2017, Seewald2019}. In these works, authors present their core estimating equation as a natural choice for obtaining their causal estimates of interest. While this approach is \quotes{natural} to the savvy reader with a background in clustered/longitudinal data analysis, it may appear arbitrary to those without this advanced intuition. To the reader belonging in the former group, we refer you to the previously cited works, which more succinctly establish the results obtained below. For those seeking a more ground-up derivation of our estimators, we present the following. Here, we consider comparison of embedded cAIs with respect to the marginal mean of a continuous (repeatedly measured) outcome, $Y$, in a general $K$-staged clustered SMART with arbitrary clustering structure (defined by covariance $\Sigma$).

\subsection{Derivation of Estimating Equation}
Solving Equation \ref{eq:est_eq} in the main body provides a framework for parameter estimation and inference. This approach is similar to the generalized estimating equation of Liang and Zeger \citep{LIANG1986}. A statistician can arrive at this estimating equation from many schools of thought, one of which we discuss below.

Given data $\set{\paren{\bY_i, \bX_i}}_{i=1\ddd N}$ from a clustered SMART, a statistician attempting to model $\E{\vct{\bY}_i\d\mid \bX_i}$ by $\vbmu(d,\bX_i;\theta)$, with $\vbY_i-\vbmu(d,\bX_i;\theta)$ having variance $\Sigma_i^d$ (i.e., $\V{\vbY_i-\vbmu(d,\bX_i;\theta)\mid \bX_i}=:\Sigma_i^d$), may wish to do so by taking
\begin{equation}\label{eq:loss_full}
    \thetahat = \argmin_{\theta\in\mathbb{R}^p}\SumiN\sum_{d\in\mathcal{D}}
\ltnorm{{\Sigma_i^d}^{-\frac{1}{2}}\paren{\vbY_i\d-\vbmu\paren{d, \bX_i;\theta}}}^2.
\end{equation}

However, the statistician cannot apply Equation \ref{eq:loss_full}, as they will not be able to observe all potential outcomes for any given cluster; i.e., the statistician cannot observe $\vbY_i\d$ for all $d\in\mathcal{D}$. To correct for this phenomenon, the statistician can consider the weighted loss function, depending only on observed data:
\begin{equation}\label{eq:loss}
    \thetahat = \argmin_{\theta\in\mathbb{R}^p}\SumiN\sum_{d\in\mathcal{D}}
I_i(d) W_i \ltnorm{{\Sigma_i^d}^{-\frac{1}{2}}\paren{\vbY_i-\vbmu\paren{d, \bX_i;\theta}}}^2,
\end{equation}
with $I_i(d)$ a consistency indicator function and $W_i$ being an estimation weight, both defined in Section \ref{sec:est} of the main body. We observe the following result:
\begin{proposition}\label{prop:unbiased}
    Given data from a clustered SMART satisfying all causal identifiability assumptions discussed in Appendix \ref{sec:gen_smart:id_ass}, it holds that
    $$\E{\Sumd I(d) W \ltnorm{{\Sigma^d}^{-\frac{1}{2}}\paren{\vbY-\vbmu\paren{d, \bX;\theta}}}^2\mid \bX} = 
    \E{\Sumd \ltnorm{{\Sigma^d}^{-\frac{1}{2}}\paren{\vbY\d-\vbmu\paren{d, \bX;\theta}}}^2\mid \bX},$$
    with all parameters are defined as above.
\end{proposition}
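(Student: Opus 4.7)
The plan is to reduce the identity to a single $d$ by linearity of expectation, use consistency to replace the observed $\vbY$ with $\vbY\d$ inside the indicator $I(d)$, and then exploit the sequential randomization structure of the cSMART to show that the inverse-probability factor $I(d)\,W$ has conditional mean one given baseline covariates and the potential-outcome/tailoring path under $d$.

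Concretely, it suffices to establish, for each fixed $d \in \mathcal{D}$, that
$$\E{I(d)\, W\, \ltnorm{{\Sigma^d}^{-\frac{1}{2}}\paren{\vbY - \vbmu(d,\bX;\theta)}}^2 \mid \bX} = \E{\ltnorm{{\Sigma^d}^{-\frac{1}{2}}\paren{\vbY\d - \vbmu(d,\bX;\theta)}}^2 \mid \bX}.$$
Consistency (Assumption \ref{ass_gen:id_2}) implies that on the event $\{I(d)=1\}$ the observed trajectory coincides with the potential trajectory, so $I(d)\paren{\vbY - \vbmu(d,\bX;\theta)} = I(d)\paren{\vbY\d - \vbmu(d,\bX;\theta)}$ identically. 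Writing $L\d := \ltnorm{{\Sigma^d}^{-\frac{1}{2}}\paren{\vbY\d - \vbmu(d,\bX;\theta)}}^2$, the left-hand side reduces to $\E{I(d)\, W\, L\d \mid \bX}$. Since $L\d$ is a function only of $\bX$ and the potentials along $d$, iterating expectations further reduces the problem to verifying the pointwise identity $\E{I(d)\,W \mid \bX, \vbY\d, \{R_k\d\}_{k}} = 1$, from which tower and summation over $d \in \mathcal{D}$ immediately yield the claim.

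That pointwise verification is the heart of the argument and the step I expect to take the most care. The indicator $I(d)$ is a product of $K$ stage-wise indicators that $A_k$ equals the treatment prescribed by $d$ along its realized potential tailoring path; by sequential exchangeability (Assumption \ref{ass_gen:id_3}), each $A_k$ is independent of future potentials given past observed history, and, by consistency, that history agrees with the history under $d$ whenever the stage-$(k-1)$ factors of $I(d)$ are all one. Applying this stage by stage shows that the conditional expectation of $I(d)$ (given $\bX$ and the potentials along $d$'s path) equals the product of the stage-wise randomization probabilities along $d$. By construction $W$ is exactly the reciprocal of that same product on $\{I(d)=1\}$, so $I(d)\,W$ has conditional mean one. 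The main obstacle is this coupling: carefully tracking how $W$ depends on the realized history, and invoking Assumption \ref{ass_gen:id_3} sequentially so that the weight inverts the \emph{joint} randomization probability along the prescribed path under $d$.
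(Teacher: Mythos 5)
Your argument is correct, but it takes a more elementary route than the paper. The paper's proof is essentially a one-line citation: it invokes the result (attributed to Lu et al.\ 2015) that, under the identifiability assumptions, $I(d)/W$ is the Radon--Nikodym derivative relating the observed-data distribution $P_{obs}$ to the distribution $P_d$ in which all clusters follow $d$, so the weighted conditional expectation of any function of $(\bX,\vbY)$ consistent with $d$ equals its counterfactual counterpart; summing over $d\in\mathcal{D}$ finishes the proof. What you do instead is prove that change-of-measure identity from scratch: fix $d$, use consistency to write $I(d)\paren{\vbY-\vbmu\paren{d,\bX;\theta}}=I(d)\paren{\vbY\d-\vbmu\paren{d,\bX;\theta}}$, condition on $\bX$, $\vbY\d$, and the potential tailoring variables, and then verify stage by stage, via sequential exchangeability, that $\E{I(d)\,W\mid \bX, \vbY\d, \{R_k\d\}_k}=1$ because $W$ is, on $\{I(d)=1\}$, exactly the reciprocal of the product of stage-wise randomization probabilities that the conditional expectation of $I(d)$ produces. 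Your version is self-contained and makes explicit where each of positivity, consistency, and sequential exchangeability enters (positivity is needed so that $W$ is well defined and finite along every valid path --- worth stating); the paper's version is shorter but delegates precisely the step you identify as ``the heart of the argument'' to the cited reference. One point to keep honest in a full write-up: the paper's estimating-equation weights are the known design weights, so on $\{I(d)=1\}$ the weight $W$ is a deterministic function of the realized $(A,R)$ path, which under consistency is itself determined by $d$ and the potential responses $\{R_k\d\}_k$ --- this is what licenses pulling $W$ outside the conditional expectation of $I(d)$, and your sketch correctly flags it as the delicate coupling.
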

\begin{proof}

As noted in \citet{Lu2015}, given the causal identifiability assumptions discussed in Appendix \ref{sec:gen_smart:id_ass}, $\frac{I(d)}{W}$ is the Radon-Nikodym derivative between $P_{obs}$ and $P_{d}$, where $P_{obs}$ is the distribution of the observed data and $P_d$ is the distribution of the data in the population where all clusters follow the embedded cAI $d$. Thus, for any $d\in\mathcal{D}$,
$$
     \E{ I(d) W \ltnorm{{\Sigma^d}^{-\frac{1}{2}}\paren{\vbY-\vbmu\paren{d, \bX;\theta}}}^2\mid \bX}  =
     \E{\ltnorm{{\Sigma^d}^{-\frac{1}{2}}\paren{\vbY\d-\vbmu\paren{d, \bX;\theta}}}^2\mid \bX}.
$$
The result follows.
\end{proof}

With the above result in mind, the statistician wishing to estimate $\theta$ using Equation \ref{eq:loss_full} can then turn to Equation \ref{eq:loss}, which relies only on observed data. This loss function induces the estimating equation presented in the main body (Equation \ref{eq:est_eq}), as shown below.

\begin{proposition}\label{prop:min_to_ee}
    Let $\vct{\mu}$ be linear in $\theta$. Then 
    \begin{align*}
    \thetahat & = \argmin_{\theta\in\mathbb{R}^p}\SumiN\sum_{d\in\mathcal{D}}
I_i(d) W_i \ltnorm{{\Sigma_i^d}^{-\frac{1}{2}}\paren{\vbY_i-\vbmu\paren{d, \bX_i;\theta}}}^2,\\
    \shortintertext{if and only if}
    0  & = \SumiN\Sumd I_i(d) W_i D(d,\bX_i;\thetahat)^T{\Sigma_i^d}\inv(\vbY_i - \vbmu(d,\bX_i;\thetahat)),
    \end{align*}
    where all parameters are as defined in Section \ref{sec:est} of the main body and Appendix \ref{sec:gen_smart}.
\end{proposition}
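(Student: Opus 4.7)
The plan is to establish the equivalence as a direct first-order optimality argument. Because $\vbmu(d, \bX_i; \theta)$ is linear in $\theta$ and each $\Sigma_i^d$ is assumed positive definite, the objective function
\[
L(\theta) := \SumiN\sum_{d\in\mathcal{D}}
I_i(d) W_i \paren{\vbY_i-\vbmu\paren{d, \bX_i;\theta}}^T {\Sigma_i^d}\inv \paren{\vbY_i-\vbmu\paren{d, \bX_i;\theta}}
\]
is a convex quadratic function of $\theta$ (with $I_i(d), W_i \geq 0$). Consequently, $\thetahat$ is a minimizer of $L$ if and only if it satisfies the stationarity condition $\nabla_\theta L(\thetahat) = 0$. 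The equivalence claim thus reduces to showing that this stationarity condition coincides with the stated estimating equation.

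Next I would compute the gradient term-by-term. Because $\vbmu$ is linear in $\theta$, the Jacobian $D(d, \bX_i; \theta) := \partial \vbmu / \partial \theta$ is free of $\theta$, so applying the chain rule to a single summand gives
\[
\nabla_\theta \paren{\vbY_i-\vbmu\paren{d, \bX_i;\theta}}^T {\Sigma_i^d}\inv \paren{\vbY_i-\vbmu\paren{d, \bX_i;\theta}} = -2 D(d,\bX_i;\theta) {\Sigma_i^d}\inv \paren{\vbY_i-\vbmu\paren{d, \bX_i;\theta}},
\]
where I am using the paper's convention that $D$ is arranged so this product returns a $p$-vector. Summing over $i$ and $d$, multiplying by $-\tfrac{1}{2}$, and setting the result to zero yields exactly the estimating equation in the statement.

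For the converse direction, convexity of $L$ guarantees that any stationary point is a global minimizer, so a solution $\thetahat$ of the estimating equation automatically attains the minimum of $L$. Together, these two implications give the stated \emph{if and only if}.

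The only potentially delicate point, rather than an obstacle, is that $L$ is not strictly convex unless the weighted design $\Sumd\SumiN I_i(d)W_i D(d,\bX_i)^T {\Sigma_i^d}\inv D(d,\bX_i)$ is positive definite; otherwise, both the minimizer and the solution to the estimating equation are only unique up to the null space of this matrix. This does not affect the set equality of minimizers and zeros of the gradient, so the proposition holds as stated. I would close by noting that this equivalence is what justifies fitting $\theta$ via the closed-form linear solve used in Steps \ref{algstep:start}--\ref{algstep:end} of Algorithm \ref{alg:est}.
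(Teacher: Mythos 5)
Your proof is correct and follows essentially the same route as the paper's: both reduce the claim to the first-order optimality condition of a convex quadratic, using linearity of $\vbmu$ in $\theta$ to get convexity (so stationary points are exactly the global minimizers) and a chain-rule gradient computation to identify the stationarity condition with the estimating equation. The paper carries out the differentiation via stacked vectors and block-diagonal weight matrices rather than term-by-term, but that is a purely cosmetic difference; your added remark on non-strict convexity and the orientation of $D$ is a fair clarification rather than a deviation.
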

\begin{proof}

We wish to minimize
    
$$\SumiN\sum_{d\in\mathcal{D}}
I_i(d) W_i \ltnorm{{\Sigma_i^d}^{-\frac{1}{2}}\paren{\vbY_i-\vbmu\paren{d, \bX_i;\theta}}}^2 = \SumiN\sum_{d\in\mathcal{D}}
I_i(d) W_i \paren{\vbY_i-\vbmu\paren{d, \bX_i;\theta}}^T{\Sigma_i^d}\inv \paren{\vbY_i-\vbmu\paren{d, \bX_i;\theta}}.$$

Let 
\begin{align*}
    \Tilde{\vbY} & := \tttvec{\vbY_1}{\vdots}{\vbY_N},\\
    \Tilde{\vbmu}\paren{d; \theta} & := \tttvec{\vbmu_1\paren{d, \bX_1; \theta}}{\vdots}{\vbmu_N\paren{d, \bX_N; \theta}},\\
    \Tilde{\boldsymbol\Sigma}^d & := 
    \begin{bmatrix}
        I_1(d)W_1{\Sigma_1^d}\inv & \dots & 0\\
        \vdots & \ddots & \vdots \\
        0 & \dots & I_N(d)W_N{\Sigma_N^d}\inv
    \end{bmatrix}.
\end{align*}
Then, we observe 
\begin{align*}
    \SumiN\sum_{d\in\mathcal{D}}
I_i(d) W_i \ltnorm{{\Sigma_i^d}^{-\frac{1}{2}}\paren{\vbY_i-\vbmu\paren{d, \bX_i;\theta}}}^2 & = 
\SumiN\sum_{d\in\mathcal{D}}
I_i(d) W_i \paren{\vbY_i - \vbmu\paren{d, \bX_i; \theta}}^T {\Sigma_i^d}\inv \paren{\vbY_i-\vbmu\paren{d, \bX_i;\theta}}\\
    & = \Sumd \paren{\Tilde{\vbY} - \Tilde{\vbmu}\paren{d; \theta}}^T 
    \Tilde{\boldsymbol\Sigma}^d
    \paren{\Tilde{\vbY} - \Tilde{\vbmu}\paren{d; \theta}}.
    \shortintertext{Subsequently,}
    \partiald{\theta}
    \SumiN\sum_{d\in\mathcal{D}}
    I_i(d) W_i \ltnorm{{\Sigma_i^d}^{-\frac{1}{2}} \paren{\vbY_i-\vbmu\paren{d, \bX_i;\theta}}}^2 
    & = \partiald{\theta} \Sumd 
    \Tilde{\vbmu}\paren{d; \theta}^T 
    \Tilde{\boldsymbol\Sigma}^d
    \Tilde{\vbmu}\paren{d; \theta}
    - 2 \Tilde{\vbmu}\paren{d; \theta}^T 
    \Tilde{\boldsymbol\Sigma}^d
    \Tilde{\vbY}\\
    & = \Sumd
    2\partialD{\Tilde{\vbmu}\paren{d; \theta}}{\theta}^T 
    \Tilde{\boldsymbol\Sigma}^d
    \Tilde{\vbmu}\paren{d; \theta}
    - 2
    \partialD{\Tilde{\vbmu}\paren{d; \theta}}{\theta}^T
    \Tilde{\boldsymbol\Sigma}^d
    \Tilde{\vbY}.
    \shortintertext{Therefore, }
    \partiald{\theta}
    \SumiN\sum_{d\in\mathcal{D}}
    I_i(d) W_i \ltnorm{{\Sigma_i^d}^{-\frac{1}{2}} \paren{\vbY_i-\vbmu\paren{d, \bX_i;\theta}}}^2 
    & = 0
    \shortintertext{if and only if}
    \Sumd
    \partialD{\Tilde{\vbmu}\paren{d; \theta}}{\theta}
    \Tilde{\boldsymbol\Sigma}^d
    \paren{\Tilde{\vbY} - 
    \Tilde{\vbmu}\paren{d; \theta}} & = 0.
\end{align*}
Recalling the construction of $\Tilde{\vbY}$, $\Tilde{\vbmu}\paren{d; \theta}$, and $\Tilde{\boldsymbol\Sigma}^d$, the above implies that
\begin{align*}
    \partiald{\theta}
    \SumiN\sum_{d\in\mathcal{D}}
    I_i(d) W_i \ltnorm{{\Sigma_i^d}^{-\frac{1}{2}} \paren{\vbY_i-\vbmu\paren{d, \bX_i;\theta}}}^2 
    & = 0
    \shortintertext{if and only if}
    \SumiN\sum_{d\in\mathcal{D}}
    I_i(d) W_i 
    \partialD{\vbmu \paren{d, \bX_i; \theta}}{\theta}^T{\Sigma_i^d}\inv 
    \paren{\vbY_i-\vbmu\paren{d, \bX_i;\theta}} 
    & = 0.
\end{align*}

Since $\vct{\mu}$ is linear in $\theta$, $\SumiN\sum_{d\in\mathcal{D}}
    I_i(d) W_i \ltnorm{{\Sigma_i^d}^{-\frac{1}{2}} \paren{\vbY_i-\vbmu\paren{d, \bX_i;\theta}}}^2$ 
is convex and hence $\thetahat$ minimizes \\
$\SumiN\sum_{d\in\mathcal{D}}
    I_i(d) W_i \ltnorm{{\Sigma_i^d}^{-\frac{1}{2}} \paren{\vbY_i-\vbmu\paren{d, \bX_i;\theta}}}^2$ if and only if
$0  = \SumiN\Sumd I_i(d) W_i D(d,\bX_i;\thetahat)^T{\Sigma_i^d}\inv(\vbY_i - \vbmu(d,\bX_i;\thetahat))$, as required.

\end{proof}

\begin{corollary}\label{cor:ee_centered}
    Given data from a clustered SMART satisfying all causal identifiability assumptions discussed in Appendix \ref{sec:gen_smart:id_ass}, if $\vbmu$ is linear and correctly specified (as formalized below in Assumption \ref{ass_stat:mm_spec}), it holds that
    $$\E{\Sumd I(d) W D(d,\bX;\thetahat)^T{\Sigma^d}\inv(\vbY - \vbmu(d,\bX;\theta_0))\mid \bX} = 0,$$
    with all parameters are defined as above.
\end{corollary}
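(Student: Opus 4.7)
The plan is to leverage the same Radon--Nikodym device that powered Proposition \ref{prop:unbiased}, but applied to a vector-valued integrand rather than a scalar loss. Since the marginal mean model $\vbmu$ is assumed linear in $\theta$, the derivative $D(d,\bX;\theta) = \partial\vbmu/\partial\theta$ does not depend on $\theta$ at all; it is a deterministic function of $\bX$. In particular, $D(d,\bX;\thetahat) = D(d,\bX;\theta_0)$, and $D(d,\bX;\cdot)$ together with $\Sigma^d$ can be pulled outside any conditional expectation given $\bX$.

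First I would fix an arbitrary $d \in \mathcal{D}$ and consider the single summand $\E{I(d) W D(d,\bX;\theta_0)(\Sigma^d)\inv(\vbY - \vbmu(d,\bX;\theta_0)) \mid \bX}$. After factoring out $D(d,\bX;\theta_0)(\Sigma^d)\inv$, the remaining conditional expectation is $\E{I(d) W (\vbY - \vbmu(d,\bX;\theta_0)) \mid \bX}$. Applying the identity $\E{I(d) W h(\vbY,\bX) \mid \bX} = \E{h(\vbY\d,\bX) \mid \bX}$ componentwise (which is justified in the proof of Proposition \ref{prop:unbiased} via the Radon--Nikodym interpretation of $I(d)/W$ under the causal identifiability assumptions of Appendix \ref{sec:gen_smart:id_ass}), this equals $\E{\vbY\d - \vbmu(d,\bX;\theta_0) \mid \bX}$.

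Next I would invoke the correct specification of the marginal mean model, which asserts precisely that $\E{\vbY\d \mid \bX} = \vbmu(d,\bX;\theta_0)$. This makes the residual vanish in conditional expectation, so the summand for $d$ is zero. Summing over $d \in \mathcal{D}$ yields the desired identity, and the interchange of sum and expectation is trivial since $\mathcal{D}$ is finite.

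The main obstacle I anticipate is purely presentational rather than technical: making clear that the componentwise application of the RN-derivative identity is legitimate (since Proposition \ref{prop:unbiased} was stated only for the scalar loss), and reconciling the apparent mismatch between $\thetahat$ in $D$ and $\theta_0$ in $\vbmu$ in the statement, which is resolved by the linearity of $\vbmu$ in $\theta$ assumed throughout Section \ref{sec:mm}. Once these two points are flagged, the proof reduces to the three short manipulations above and carries no further difficulty.
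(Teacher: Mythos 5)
Your proof is correct and follows essentially the same route as the paper's: apply the Radon--Nikodym identity from the proof of Proposition \ref{prop:unbiased} to replace $I(d)W(\vbY-\vbmu)$ by the potential-outcome residual, pull $D(d,\bX;\cdot)$ and $(\Sigma^d)^{-1}$ outside the conditional expectation given $\bX$, and invoke correct specification of the marginal mean model. Your explicit observation that linearity of $\vbmu$ in $\theta$ makes $D$ independent of $\theta$ (so the $\thetahat$ appearing in $D$ is harmless) is a point the paper leaves implicit, but it does not change the argument.
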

\begin{proof}
    Well, by argument analogous to the proof of Proposition \ref{prop:unbiased}, it holds that, for any $d\in\mathcal{D}$
    \begin{align*}
        \E{I(d) W D(d,\bX;\thetahat)^T {\Sigma^d}\inv(\vbY - \vbmu(d,\bX;\theta_0))\mid \bX} & = \E{ D(d,\bX;\thetahat)^T{\Sigma^d}\inv(\vbY\d - \vbmu(d,\bX;\theta_0))\mid \bX}\\
        \shortintertext{(assuming $\vbmu$ linear in $\theta$)}
        & = \E{ D(d,\bX)^T{\Sigma^d}\inv(\vbY\d - \vbmu(d,\bX;\theta_0))\mid \bX}\\
        & = D(d,\bX)^T{\Sigma^d}\inv\E{\vbY\d - \vbmu(d,\bX;\theta_0)\mid \bX}\\
        \shortintertext{(by Assumption \ref{ass_stat:mm_spec})}
        & = 0.
    \end{align*}
\end{proof}

\subsection{Asymptotic Distribution}

As discussed in the main body of the report, we consider 
$$U_{i\mid \theta, \alpha, \pi}:=
U\paren{Z_i, \bX_i, \vbY_i; \theta, \alpha, \pi} =
\Sumd I_i(d) W_{i}(\pi) D(d,\bX_i;\theta)^TV_i^d(\bX_i;\alpha)\inv(\vbY_i - \vbmu(d,\bX_i;\theta)),$$ 
where $Z_i:=\paren{\Bar{A}_{K,i}, \Bar{R}_{K,i}}$, and take $\thetahat = \thetahat_n\paren{\set{\bX_i, \vbY_i, Z_i}; \alpha, \pi}$ to be the solution to the estimating equation $0 = \SumiN U_{i\mid \theta, \alpha, \pi}$. Refer to Section \ref{sec:est} of the main body for definitions of the component functions.

In the subsequent sections, we will establish several results regarding the asymptotic properties of $\thetahat$. In doing so, we rely on the assumptions discussed below.
\subsubsection{Statistical Assumptions}\label{sec:asy_theory:ass}
\begin{enumerate}[label=(\roman*)]
    \item\label{ass_stat:alpha_conv} There exists $\alpha_+$ such that $\sqrt{N}\paren{\alphahat-\alpha_+}=O_p(1)$.
    \item\label{ass_stat:mm_spec} The marginal structural model for $Y$ is correctly specified; i.e., there exists a $\theta_0$ such that 
    $\E{\vbY\d\mid\bX}=\vbmu(d,\bX;\theta_0)$.
    \item\label{ass_stat:unique_soln} If $\theta^*$ solves $\E{ \U{\theta, \alphahat, \pi_0}\mid \bX}=0$, then $\theta^*=\theta_0$.
    \item\label{ass_stat:finite_u} $\E{\lnorm{U_f}{\infty}}<\infty$, where $U_f(\theta) := \U{\theta, \alphahat, \pi_0}$. 
    \item\label{ass_stat:sigma_wb} For all $i=1\ddd N$, $V_i^d$ is continuously differentiable in $\alpha$, with $\partialD{V_i^d}{\alpha}$ bounded. Furthermore, there exists a $c>0$ and a neighborhood of $\alpha_+$, $\mathcal{U}_{\alpha_+, c}$, such that, for all $\alpha\in\mathcal{U}_{\alpha_+, c}$,  $\lambda>c$ for every eigenvalue $\lambda$ of $V_i^d(\alpha)$. Further assume that $\partialD{\alphahat}{\theta}$ is bounded. We note that this is analogous to assuming invertible working variance matrix estimates.
    \item\label{ass_stat:score} $\pihat$ is estimated using maximum likelihood, with continuously differentiable (in $\pi$) score function $S_\pi$. Further assume that there exists a $c'>0$ and a neighborhood $\mathcal{U}'_{c'}$ of $(\theta_0, \alpha_0, \pi_0)$  such that $\lambda>c'$ for any eigenvalue, $\lambda$, of $\E{U_{\theta, \alpha, \pi}S^T_\pi}$ or $\E{S\pi S\pi^T}$ for any $(\theta, \alpha, \pi)\in\mathcal{U}'$. As before, we note that this is analogous to assuming $\E{US^T_\pi}$ and $\E{S\pi S\pi^T}$ (as well as their plug-in counterparts) are all invertible.
    \item\label{ass_stat:iid} $\paren{\Ui{\theta_0,\alpha_+,\pi_0},S_{\pi_0}}^N_{i=1}$ are independent with identical first and second moments.
    \item\label{ass_stat:lin}
    $\vbmu$ is linear in $\theta$. I.e., $D(d, \bX; \theta) = D(d,\bX)$.
\end{enumerate}

As with $\theta$, we let $\pi_0$ denote the true parameter value for $\pi$. Note that, due to the randomized structure of a SMART, $\pi_0$ is known.\footnote{While the analyst may estimate weights to improve efficiency, the true probabilities are known, and a consistent weight estimation approach will have $\pihat\xrightarrow[N\to\infty]{\mathbb{P}}\pi_0$.}
For brevity, let $U_{\theta, \alpha, \pi}:=U\paren{Z, \bX, \vbY; \theta, \alpha, \pi}$ for a random cluster (with $U:=U_{\theta_0, \alpha_+, \pi_0}$).

\subsubsection{Lemmas and Proofs}
\begin{lemma}
\label{lemm:theta_con} 
    Let $\thetahat = \thetahat_N$ be a solution to 
    $$0 = \SumiN U_{i\mid \theta, \alpha_+, \pi_0},$$ 
    then $\theta_0$ is a zero of $\mathbf{\psi}(\theta) := \E{\frac{1}{N}\SumiN U_{i\mid \theta, \alpha_+, \pi_0}}$. 
    Moreover, if $\theta_0$ is the \textit{unique} root of $\mathbf{\psi}$, then $\thetahat$ is a consistent estimator for $\theta_0$.
\end{lemma}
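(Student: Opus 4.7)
The proof will split naturally into two parts, corresponding to the two claims in the lemma.

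For the first claim (that $\theta_0$ zeroes $\psi$), the plan is to apply Corollary \ref{cor:ee_centered} at the level of a single cluster and then take an unconditional expectation. Specifically, under Assumption \ref{ass_stat:mm_spec} (correct specification of the marginal mean) together with the identification assumptions of Appendix \ref{sec:gen_smart:id_ass}, Corollary \ref{cor:ee_centered} yields
\[
\E{U_{i \mid \theta_0, \alpha_+, \pi_0} \mid \bX_i} = \E{\Sumd I_i(d) W_i D(d,\bX_i;\theta_0){V_i^d(\alpha_+)}\inv (\vbY_i - \vbmu(d,\bX_i;\theta_0)) \mid \bX_i} = 0.
\]
Taking an outer expectation and invoking the iid assumption \ref{ass_stat:iid} gives $\psi(\theta_0) = \E{U_{\theta_0, \alpha_+, \pi_0}} = 0$. (Note that since the working variance is evaluated at the deterministic limit $\alpha_+$ rather than at $\alphahat$, no further smoothness is needed here.)

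For the second claim, the plan is to invoke a standard Z-estimator consistency result (for instance, Theorem 5.9 of van der Vaart or the analogous statement in \citet{Keener2010}). The three ingredients required are: (a) uniqueness of $\theta_0$ as a zero of $\psi$, which is given by hypothesis; (b) continuity of $\psi$ in $\theta$, which follows because $\vbmu$ is linear in $\theta$, the weights $W$ do not depend on $\theta$, and $V^d(\alpha_+)$ is fixed; and (c) a uniform law of large numbers, i.e.\ $\sup_{\theta \in \Theta} \big\| \frac{1}{N}\SumiN U_{i\mid \theta, \alpha_+, \pi_0} - \psi(\theta) \big\| \to_p 0$. Since $U$ is affine in $\theta$ (with slope $-\Sumd I(d) W D(d,\bX;\theta)V^d(\alpha_+)\inv D(d,\bX;\theta)$ constant in $\theta$ and intercept that is iid with finite first moment by Assumption \ref{ass_stat:finite_u}), the uniform LLN reduces to the ordinary LLN applied to the finitely many affine coefficients, combined with the bound on the eigenvalues of $V^d$ from Assumption \ref{ass_stat:sigma_wb}. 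Once uniform convergence is in hand, a standard argmin/argzero continuity argument delivers $\thetahat \xrightarrow{p} \theta_0$.

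The main obstacle is the uniform convergence step in part (c); although the linearity of $\vbmu$ in $\theta$ trivializes it in the present setting, one must be careful that the Jacobian $D(d,\bX;\theta)$ and the inverse working variance $V^d(\alpha_+)\inv$ have second moments bounded enough to apply the LLN termwise. This is covered by Assumptions \ref{ass_stat:finite_u} and \ref{ass_stat:sigma_wb}, so the step is routine but is the only place where the regularity conditions are genuinely used. The rest of the argument — moving from $\psi(\theta_0) = 0$ and uniqueness, through uniform convergence, to consistency — is the standard Z-estimator template.
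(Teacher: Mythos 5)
Your proposal is correct and follows essentially the same route as the paper's proof: the first claim via Corollary \ref{cor:ee_centered} plus an outer expectation, and the second via a uniform law of large numbers combined with a standard M-/Z-estimator consistency theorem under the uniqueness hypothesis. Your additional observation that the uniform LLN reduces to an ordinary LLN on the affine coefficients (since $\vbmu$ is linear in $\theta$) is a slightly more explicit justification of the step the paper handles by citing the weak LLN for random functions under Assumption \ref{ass_stat:finite_u}, but it is the same argument in substance.
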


\begin{proof}
Let $\mathbf{\psi}(\theta) := \E{ U_{\theta, \alpha_+, \pi_0}}$ and $\mathbf{\psi}_n(\theta) := \frac{1}{N}\SumiN U_{i\mid \theta, \alpha_+, \pi_0}.$ 

Corollary \ref{cor:ee_centered} guarantees 
$\E{U_{\theta, \alpha_+, \pi_0}} = 0$.
Therefore, $\theta_0$ is indeed a zero of $\mathbf{\psi}(\theta).$
    
Via Statistical Assumption \ref{ass_stat:finite_u} and the Weak Law of Large Numbers for Random Functions,\footnote{See Theorem 9.2 in \citet{Keener2010}.} we observe 
$\lnorm{\mathbf{\psi}_n(\theta) - \mathbf{\psi}(\theta)}{\infty}\xrightarrow[N\to\infty]{\mathbb{P}} 0.$

Employing classical regularity assumptions and results of $M$-estimators,\footnote{See, for example, Theorem 9.4 in \citet{Keener2010}.} we conclude
$$\thetahat_N \xrightarrow[N\to\infty]{\mathbb{P}} \theta_0.$$

\end{proof}

\begin{lemma}
\label{ScoreDerivs}
Suppose we use inverse probability weights of the form discussed in Appendix \ref{sec:gen_smart:est}, $W_i(\pihat) = \dfrac{1}{p_{\pihat}\paren{\Bar{a}_{K,i}; \Bar{r}_{K,i},\bh_{K,i}}}$, where $\pihat$ is obtained via maximum likelihood estimation (with score function $\Spi:=\SumiN S_{\pi, i}$) and $p_{\pihat}\paren{\Bar{a}_{K,i}; \Bar{r}_{K,i},\bh_{K,i}}$ is differentiable with respect to $\pi$. Then
$$\frac{1}{N}\SumiN \partialD{\Ui{\theta_0, \alpha_+, \pi_0}}{\pi} \sqrt{N}\paren{\pihat-\pi_0} = -\frac{1}{\sqrt{N}}\SumiN\E{U\Spi^T}\E{\Spi\Spi^T}\inv S_{\pi, i} + o_p(1).$$
\end{lemma}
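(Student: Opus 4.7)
The plan is to combine a standard M-estimator Taylor expansion of the score equation with an IPW identity that links $\E[\partial U/\partial\pi]$ to $-\E[U S_\pi^T]$. First I would apply a mean-value expansion to the MLE score identity $0 = \sum_i S_{\pi,i}(\pihat)$ about $\pi_0$ to obtain
\[
\sqrt{N}(\pihat - \pi_0) = \paren{-\frac{1}{N}\SumiN \partialD{S_{\pi,i}}{\pi}\bigg|_{\pi^*}}^{-1} \frac{1}{\sqrt{N}}\SumiN S_{\pi_0,i},
\]
for some $\pi^*$ between $\pihat$ and $\pi_0$. By Statistical Assumption~\ref{ass_stat:score} (differentiability, iid, and the invertibility neighborhood) together with consistency of $\pihat$ and the WLLN, the bracketed Hessian term converges in probability to $-\E{\partial S_{\pi_0}/\partial\pi}$, which by the Fisher-information identity for an MLE equals $\E{S_{\pi_0}S_{\pi_0}^T}$. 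This yields the usual asymptotic linearization
\[
\sqrt{N}(\pihat - \pi_0) = \E{S_{\pi_0}S_{\pi_0}^T}^{-1} \frac{1}{\sqrt{N}}\SumiN S_{\pi_0,i} + o_p(1).
\]

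Next I would establish the key identity $\E{\partial U_{\theta_0,\alpha_+,\pi_0}/\partial\pi} = -\E{U\,S_{\pi_0}^T}$. Because $W_i(\pi) = 1/p_\pi(\Bar{a}_{K,i};\Bar{r}_{K,i},\bh_{K,i})$ is the only place $\pi$ enters $U$, and the data distribution itself factors as $p(y\mid z,x)\,p_{\pi_0}(z\mid x,h)\,p(x)$, the expectation can be written as
\[
\E{U_{\theta_0,\alpha_+,\pi}} = \int U_0(z,x,y)\,\frac{p_{\pi_0}(z\mid x,h)}{p_\pi(z\mid x,h)}\,p(y\mid z,x)\,p(x)\,dz\,dx\,dy,
\]
where $U_0$ denotes the unweighted integrand. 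Differentiating under the integral at $\pi = \pi_0$ (justified by the boundedness conditions in Assumptions~\ref{ass_stat:finite_u}--\ref{ass_stat:sigma_wb}) and using $\partial[1/p_\pi]/\partial\pi = -(\partial\log p_\pi/\partial\pi)/p_\pi$ produces exactly $-\E{U\,S_{\pi_0}^T}$. A second application of the WLLN, combined with interchanging expectation and differentiation, gives $\frac{1}{N}\sum_i \partial U_i/\partial\pi \xrightarrow{\mathbb{P}} -\E{U\,S_{\pi_0}^T}$.

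Finally, I would combine the two pieces. Multiplying the Taylor expansion of $\sqrt{N}(\pihat-\pi_0)$ by $\frac{1}{N}\sum_i \partial U_i/\partial\pi$, using Slutsky's theorem together with the limits established above, gives
\[
\frac{1}{N}\SumiN \partialD{\Ui{\theta_0,\alpha_+,\pi_0}}{\pi}\sqrt{N}(\pihat-\pi_0)
= -\E{U\,S_{\pi_0}^T}\,\E{S_{\pi_0}S_{\pi_0}^T}^{-1}\frac{1}{\sqrt{N}}\SumiN S_{\pi_0,i} + o_p(1),
\]
which rearranges into the stated form by pulling the constant matrix inside the sum.

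The main obstacle is the identity $\E{\partial U/\partial\pi} = -\E{U S_{\pi_0}^T}$, since it is not simply a property of MLEs but rather a consequence of how the propensity-score likelihood appears in the IPW weights. Care is required both to justify differentiating under the integral (relying on the boundedness in Assumption~\ref{ass_stat:sigma_wb} and the integrability in Assumption~\ref{ass_stat:finite_u}) and to track how the ratio $p_{\pi_0}/p_\pi$ produces the score as its logarithmic derivative; the remaining steps are standard $M$-estimator Taylor-expansion arguments.
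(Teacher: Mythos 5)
Your proposal is correct and shares the overall skeleton of the paper's argument (mean-value expansion of the MLE score equation, the information identity $-\E{\partial S_{\pi_0}/\partial\pi}=\E{S_{\pi_0}S_{\pi_0}^T}$, then Slutsky), but it handles the key identity differently. The paper never passes through expectations to obtain $\E{\partialD{U}{\pi}}=-\E{US_{\pi_0}^T}$: it observes that the identity holds \emph{pointwise}, i.e.\ $\partialD{\Ui{\theta_0,\alpha_+,\pi_0}}{\pi}=-\Ui{\theta_0,\alpha_+,\pi_0}\,S_{\pi_0,i}^T$ for each cluster $i$, because $\pi$ enters $U_i$ only through the weight $W_i(\pi)=1/p_{\pi,i}$ and $\partialD{}{\pi}\bigl(1/p_{\pi,i}\bigr)=-(1/p_{\pi,i})\,\partialD{\log p_{\pi,i}}{\pi}$, which is exactly $-(1/p_{\pi,i})S_{\pi,i}^T$ at the observed treatment path. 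A single application of the WLLN to the products $-U_iS_{\pi_0,i}^T$ then gives the probability limit $-\E{US_{\pi_0}^T}$ directly. Your route instead writes $\E{U_{\theta_0,\alpha_+,\pi}}$ as an integral against the ratio $p_{\pi_0}/p_\pi$ and differentiates under the integral sign; this reaches the same identity but requires an additional interchange-of-limits justification (which you correctly flag as the delicate point) that the pointwise computation sidesteps entirely, and it still needs a separate WLLN step to connect the empirical average of $\partialD{U_i}{\pi}$ to its expectation. Both arguments are valid; the paper's is the more elementary of the two, while yours makes explicit the change-of-measure structure underlying why the score appears.
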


\begin{proof}
Let $p_{\pihat,i}\paren{\Bar{a}_K}:=p_{\pihat}\paren{\Bar{a}_{K}; \Bar{r}_{K,i},\bh_{K,i}}$. Further let $\mathcal{Z}_K:=\paren{\Bar{\mathcal{A}},\Bar{\mathcal{R}}}_K$ denote the set of possible treatment-response pathways in the SMART.

We observe the following likelihood for the observed treatment assignments: 

$$L\paren{\pi; \set{\paren{\Bar{z}_{K,i}, \bh_{K,i}}}_{i=1}^N} = \ProdiN\displaystyle\prod_{z_K\in\mathcal{Z}_K} \paren{p_{\pi,i}\paren{\Bar{a}_K}}^{\Ind{Z_{K,i}=z_K}}.$$

This induces the following score function:

$$S_\pi := \frac{1}{N}\paren{\partialD{l(\pi)}{\pi}}^T = \frac{1}{N}\SumiN\underbrace{\sum_{z_K\in\mathcal{Z}_K} \frac{\Ind{Z_{K,i}=z_K}}{p_{\pi,i}\paren{\Bar{a}_K}}\paren{\partialD{p_{\pi,i}\paren{\Bar{a}_K}}{\pi}}^T}_{=:S_{\pi, i}},$$

where $\pihat={\pihat}_{MLE}$ is chosen such that $S_{\pihat}=0$.

We recall that 
\begin{align*}
    \Ui{\theta_0, \alpha_+, \pi_0} & := \Sumd I_i(d) W_{i}(\pi_0) D(d,\bX_i;\theta_0){V_i^d}(\alpha_+)\inv(\vbY_i - \vbmu(d,\bX_i;\theta_0))\\
    & = \Sumd I_i(d) {D(d,\bX_i;\theta_0){V_i^d}(\alpha_+)\inv(\vbY_i - \vbmu(d,\bX_i;\theta_0))}\dfrac{1}{p_{\pi_0,i}\paren{\Bar{a}_{K,i}}}.\\
    \shortintertext{Thus,}
    \partialD{\Ui{\theta_0, \alpha_+, \pi_0}}{\pi}  & = \partialD{\Ui{\theta_0, \alpha_+, \pi_0}}{W_i} \partialD{W_i(\pi_0)}{\pi}\\
    & = \paren{\Sumd I_i(d)D(d,\bX_i;\theta_0){V_i^d}(\alpha_+)\inv(\vbY_i - \vbmu(d,\bX_i;\theta_0))}\\
    & \hspace{2cm}\times\paren{-\dfrac{1}{p_{\pi_0,i}\paren{\Bar{a}_{K,i}}^2}}\paren{\partialD{p_{\pi_0,i}\paren{\Bar{a}_{K,i}}}{\pi}}\\
    & = -\Ui{\theta_0, \alpha_+, \pi_0}\paren{\dfrac{1}{p_{\pi_0,i}\paren{\Bar{a}_{K,i}}}\partialD{p_{\pi_0,i}\paren{\Bar{a}_{K,i}}}{\pi}}\\
    & = -\Ui{\theta_0, \alpha_+, \pi_0}
    \displaystyle\sum_{z_{K}\in\mathcal{Z}_K}\dfrac{\Ind{Z_{K,i}=z_{K}}}{p_{\pi_0,i}\paren{\Bar{a}_{K,i}}}\partialD{p_{\pi_0,i}\paren{\Bar{a}_{K,i}}}{\pi}\\
    & = - \Ui{\theta_0, \alpha_+, \pi_0}S_{\pi_0, i}^T.
\end{align*}
Thus,
\begin{align*}
    \frac{1}{N}\SumiN \partialD{\Ui{\theta_0, \alpha_+, \pi_0}}{\pi} \sqrt{N}\paren{\pihat-\pi_0} & = \paren{\frac{1}{N}\SumiN - \Ui{\theta_0, \alpha_+, \pi_0}S_{\pi_0, i}^T}\sqrt{N}\paren{\pihat-\pi_0}\\
    & = \paren{\E{- US_{\pi_0}^T}+ o_p(1)}\sqrt{N}\paren{\pihat-\pi_0} \\
    & = -\E{US_{\pi_0}^T}\sqrt{N}\paren{\pihat-\pi_0} + o_p(1).
\end{align*}

We proceed via Taylor expansion of the score function:
\begin{align*}
    \frac{1}{N}\underbrace{\SumiN S_{\pihat, i}}_{=0} & =  \frac{1}{N}\SumiN S_{\pi_0, i} + \frac{1}{N}\SumiN\paren{\partiald{\pi}S_{\Tilde{\pi}, i}}(\pihat-\pi_0).\\
    \shortintertext{Thus (by Assumption \ref{ass_stat:score}),}
    - (\pihat-\pi_0) & = 
     \frac{1}{N}\SumiN \paren{\frac{1}{N}\SumiN\partiald{\pi}S_{\Tilde{\pi}, i}}\inv S_{\pi_0, i}\\
      & = 
     \frac{1}{N}\SumiN \paren{\E{\partiald{\pi}S_{\Tilde{\pi}, i}}\inv + \Opsn}S_{\pi_0, i}\\
     & = \frac{1}{N}\SumiN \E{\partiald{\pi}S_{\Tilde{\pi}, i}}\inv S_{\pi_0, i} + \opsn.
     \shortintertext{With the last equality holding since $\E{S_{\pi_0}}=0$. Noting that $S_{\pi_0}$ represents the derivative of log-likelihood of treatment assignment, we substitute the outer product as below:}
    & = -\frac{1}{N}\SumiN \E{S_{\pi_0}S_{\pi_0}^T}\inv S_{\pi_0, i} + \opsn.\\
    \shortintertext{Consequently,}
    \sqrt{N}(\pihat-\pi_0) & = \paren{\frac{1}{\sqrt{N}}\SumiN \E{S_{\pi_0}S_{\pi_0}^T}\inv S_{\pi_0, i}} + o_p(1).
\end{align*}
Putting the above together, we observe:
\begin{align*}
     \frac{1}{N}\SumiN \partialD{\Ui{\theta_0, \alpha_+, \pi_0}}{\pi} \sqrt{N}\paren{\pihat-\pi_0} & = -\E{US_{\pi_0}^T}\sqrt{N}\paren{\pihat-\pi_0} + o_p(1)\\
     & = -\frac{1}{\sqrt{N}}\SumiN \E{US_{\pi_0}^T}\E{S_{\pi_0}S_{\pi_0}^T}\inv S_{\pi_0, i} + o_p(1).
\end{align*}

\end{proof}


\begin{lemma}
\label{TaylorExp} Under the assumptions above,
    $$\sqrt{N}\paren{\thetahat-\theta_0} = -\E{\partialD{U}{\theta}}\inv\paren{\frac{1}{\sqrt{N}}\SumiN \Ui{\theta_0, \alpha_+, \pi_0}-\E{US_{\pi_0}^T}\E{S_{\pi_0}S_{\pi_0}^T}\inv S_{\pi_0,i}} + o_p(1).$$
\end{lemma}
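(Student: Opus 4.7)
The plan is to obtain the stated linearization by a first-order Taylor expansion of the estimating equation around the true parameter triple $(\theta_0, \alpha_+, \pi_0)$, and then show that (i) the $\alpha$-contribution vanishes because the estimating equation is centered for every working variance, and (ii) the $\pi$-contribution reduces to the score-based correction via Lemma \ref{ScoreDerivs}.

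First I would write, using the mean value theorem component-wise,
\begin{align*}
0 &= \frac{1}{\sqrt{N}}\SumiN \Ui{\thetahat,\alphahat,\pihat} \\
&= \frac{1}{\sqrt{N}}\SumiN \Ui{\theta_0,\alpha_+,\pi_0} + \paren{\frac{1}{N}\SumiN \partialD{\Ui{\cdot}}{\theta}}\sqrt{N}(\thetahat-\theta_0) \\
&\quad + \paren{\frac{1}{N}\SumiN \partialD{\Ui{\cdot}}{\alpha}}\sqrt{N}(\alphahat-\alpha_+) + \paren{\frac{1}{N}\SumiN \partialD{\Ui{\cdot}}{\pi}}\sqrt{N}(\pihat-\pi_0),
\end{align*}
where the partial derivatives are evaluated at intermediate points converging in probability to $(\theta_0, \alpha_+, \pi_0)$; these are available by the consistency of $\thetahat$ (Lemma \ref{lemm:theta_con}), the $\sqrt{N}$-rate of $\alphahat$ (Assumption \ref{ass_stat:alpha_conv}), the consistency of the MLE $\pihat$, and the smoothness hypotheses in Assumptions \ref{ass_stat:sigma_wb} and \ref{ass_stat:score}. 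A uniform law of large numbers then lets me replace the sample averages of the partials by their population counterparts up to $o_p(1)$ terms.

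Second, I would argue that the $\alpha$-block is asymptotically negligible. The point is that Corollary \ref{cor:ee_centered} gives $\E{U_{\theta_0,\alpha,\pi_0} \mid \bX} = 0$ as an identity in $\alpha$, since the argument only invokes correct specification of $\vbmu$ and not of $V^d$. Interchanging expectation and differentiation (justified by the boundedness of $\partial V_i^d/\partial\alpha$ on a neighborhood of $\alpha_+$ from Assumption \ref{ass_stat:sigma_wb}) gives $\E{\partial U/\partial\alpha} = 0$, so $\frac{1}{N}\SumiN \partial\Ui{\cdot}/\partial\alpha = o_p(1)$, and combining with $\sqrt{N}(\alphahat-\alpha_+)=O_p(1)$ yields an overall $o_p(1)$ term. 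Lemma \ref{ScoreDerivs} then handles the $\pi$-block directly, replacing $\paren{\frac{1}{N}\SumiN \partial\Ui{\cdot}/\partial\pi}\sqrt{N}(\pihat-\pi_0)$ by $-\frac{1}{\sqrt{N}}\SumiN \E{US_{\pi_0}^T}\E{S_{\pi_0}S_{\pi_0}^T}\inv S_{\pi_0,i} + o_p(1)$.

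Finally, rearranging the expansion and left-multiplying by $-\E{\partial U/\partial\theta}\inv$ (which exists by the combination of the unique-root Assumption \ref{ass_stat:unique_soln} and the nondegeneracy implicit in Assumptions \ref{ass_stat:sigma_wb} and \ref{ass_stat:score}) yields the claimed expression. The main obstacle is the second step: carefully justifying that $\E{\partial U/\partial\alpha}=0$ via differentiation under the expectation, since this is precisely the mechanism by which misspecification of the working variance drops out of the asymptotic distribution. The rest is a standard $Z$-estimator linearization built on the LLN, the consistency of $\thetahat$, and the rate assumptions on $\alphahat$ and $\pihat$.
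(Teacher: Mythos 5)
Your proposal is correct and follows essentially the same route as the paper's proof: a Taylor linearization of the estimating equation, elimination of the $\alpha$-block because its derivative retains the conditionally mean-zero residual factor, substitution of the score correction from Lemma \ref{ScoreDerivs} for the $\pi$-block, and inversion of $-\E{\partialD{U}{\theta}}$. The only differences are cosmetic — you expand jointly in $(\theta,\alpha,\pi)$ where the paper expands sequentially and carries an explicit chain-rule term $\partialD{U}{\alphahat}\partialD{\alphahat}{\theta}$ for the dependence of $\alphahat$ on $\thetahat$ (which your joint expansion renders unnecessary), and you obtain $\E{\partialD{U}{\alpha}}=0$ by differentiating the centered identity under the expectation rather than by inspecting the form of the derivative directly, as the paper does.
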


\begin{proof}
We begin via Taylor expansion. We see that
\begin{align*}
    \frac{1}{N}\underbrace{\SumiN \Ui{\thetahat, \alphahat, \pihat}}_{=0} & =
    \frac{1}{N}\SumiN \Ui{\theta_0, \alphahat, \pihat} + 
    \frac{1}{N}\SumiN \paren{\partialD{\Ui{\Tilde{\theta}, \alphahat, \pihat}}{\theta} + \partialD{\Ui{\Tilde{\theta}, \alphahat, \pihat}}{\alphahat} \partialD{\alphahat}{\theta}
    }\paren{\thetahat-\theta_0},\\
    \frac{1}{N}\SumiN \Ui{\theta_0, \alphahat, \pihat} & = \frac{1}{N}\SumiN \Ui{\theta_0, \alpha_+, \pihat} + 
    \frac{1}{N}\SumiN \partialD{\Ui{\theta_0, \Tilde{\alpha}, \pihat}}{\alpha} \paren{\alphahat-\alpha_+},\\
    \frac{1}{N}\SumiN \Ui{\theta_0, \alpha_+, \pihat} &
    =  \frac{1}{N}\SumiN \Ui{\theta_0, \alpha_+, \pi_0} + \frac{1}{N}\SumiN  \partiald{\pi} \Ui{\theta_0, \alpha_+, \Tilde{\pi}} \paren{\pihat-\pi_0},
\end{align*}
for some $\Tilde{\theta}, \Tilde{\alpha}, \Tilde{\pi}$ in $\set{p\thetahat + (1-p)\theta_0\mid p\in[0,1]}$, $\set{p\alphahat + (1-p)\alpha_+\mid p\in[0,1]}$, $\set{p\pihat + (1-p)\pi_0\mid p\in[0,1]}$ (respectively).

Note that, by Assumption \ref{ass_stat:alpha_conv}, we have
\begin{align*}
    \frac{1}{N}\SumiN \partialD{\Ui{\theta_0, \Tilde{\alpha}, \pihat}}{\alpha} \paren{\alphahat-\alpha_+} & = \frac{1}{N}\SumiN \paren{\partialD{\Ui{\theta_0, \alpha_+, \pihat}}{\alpha} + \Opsn} \paren{\alphahat-\alpha_+}\\
    & = \frac{1}{N}\SumiN \partialD{\Ui{\theta_0, \alpha_+, \pihat}}{\alpha} \paren{\alphahat-\alpha_+} + \opsn.
\end{align*}
Similar logic shows $\dfrac{1}{N}\SumiN  \partialD{\Ui{\theta_0, \alpha_+, \Tilde{\pi}}}{\pi}  \paren{\pihat-\pi_0} = \dfrac{1}{N}\SumiN  \partialD{\Ui{\theta_0, \alpha_+, \pi_0}}{\pi}  \paren{\pihat-\pi_0} + \opsn$.

Therefore,
\begin{align*}
    -\frac{1}{N}\SumiN \paren{\partialD{\Ui{\Tilde{\theta}, \alphahat, \pihat}}{\theta}+
    \partialD{\Ui{\theta_0, \alphahat, \pihat}}{\alphahat}\partialD{\alphahat}{\theta}}\paren{\thetahat-\theta_0}
    & = \frac{1}{N}\SumiN \Ui{\theta_0, \alpha_+, \pi_0} \\
    &  \hspace{.5cm} + \frac{1}{N}\SumiN \partialD{\Ui{\theta_0, \alpha_+, \pihat}}{\alpha}  \paren{\alphahat-\alpha_+}\\
    & \hspace{.5cm} + \frac{1}{N}\SumiN \partialD{\Ui{\theta_0, \alpha_+, \pi_0}}{\pi}  \paren{\pihat-\pi_0} + \opsn.
\end{align*}

We first note that $\frac{1}{N}\SumiN \partialD{\Ui{\theta_0, \alpha_+, \pihat}}{\alpha}\paren{\alphahat-\alpha_+} = \paren{\E{\partialD{U}{\alpha}}+\Opsn}\paren{\alphahat-\alpha_+}$. Given Assumption \ref{ass_stat:sigma_wb}, $\partialD{\Ui{\theta_0, \alpha_+, \pihat}}{\alpha}$ is bounded by a linear combination of $I(d)\paren{\vbY-\vbmu(d,\bX;\theta_0)}$ components which, under Assumption \ref{ass_stat:mm_spec}, all have expectation 0. Therefore, Assumption \ref{ass_stat:alpha_conv} gives us $\frac{1}{N}\SumiN \partialD{\Ui{\theta_0, \alpha_+, \pihat}}{\alpha}\paren{\alphahat-\alpha_+} = \opsn$.

By similar logic, we also note that $\frac{1}{N}\SumiN\partialD{\Ui{\theta_0, \alphahat, \pihat}}{\alphahat}=o_p(1)$. Combining this with Assumption \ref{ass_stat:sigma_wb} gives $\frac{1}{N}\SumiN\partialD{\Ui{\theta_0, \alphahat, \pihat}}{\alphahat}\partialD{\alphahat}{\theta} = o_p(1)$.

Moving forward with our Taylor expansion of $\frac{1}{N}\SumiN \Ui{\thetahat, \alphahat, \pihat},$ we see
\begin{align*}
    -\paren{\frac{1}{N}\SumiN \partialD{\Ui{\theta_0, \alphahat, \pihat}}{\theta}+
    o_p(1)}\paren{\thetahat-\theta_0} & = \frac{1}{N}\SumiN \Ui{\theta_0, \alpha_+, \pi_0}  + \frac{1}{N}\SumiN\partialD{\Ui{\theta_0, \alpha_+, \pi_0}}{\pi} \paren{\pihat-\pi_0} + \opsn.
\end{align*}
Noting the continuity of $\partialD{U}{\theta}$ and the convergence of $\thetahat$, $\alphahat,$ and $\pihat$, we observe
\begin{align*}
    \sqrt{N}\paren{\thetahat-\theta_0} & = -\paren{ \frac{1}{N} \SumiN \partialD{\Ui{\theta_0, \alpha_+, \pi_0}}{\theta} + o_p(1)}\inv\paren{\frac{1}{\sqrt{N}}\SumiN \Ui{\theta_0, \alpha_+, \pi_0} + \frac{1}{N}\SumiN\partialD{\Ui{\theta_0, \alpha_+, \pi_0}}{\pi}\sqrt{N}\paren{\pihat-\pi_0}}  + o_p(1).
\end{align*}

Using Lemma \ref{ScoreDerivs}, we conclude
$$\sqrt{N}\paren{\thetahat-\theta_0} = -\paren{\E{\partialD{U}{\theta}}+o_p(1)}\inv
\paren{\frac{1}{\sqrt{N}}\SumiN \Ui{\theta_0, \alpha_+, \pi_0} -\E{US_{\pi_0}^T}\E{S_{\pi_0}S_{\pi_0}^T}\inv S_{\pi_0,i}} + o_p(1).$$

\end{proof}

\subsubsection{Core Results}

\begin{theorem}\label{thm:asydist_wgt}
Under the assumptions listed in Section \ref{sec:asy_theory:ass} above, then $$\sqrt{N}\paren{\thetahat-\theta_0}\xrightarrow[N\to\infty]{\mathcal{L}}\N{0}{J\inv Q J\inv},$$
where
\begin{itemize}
    \item $J = \E{\Sumd I(d) W(\pi_0)D(d,\bX)^TV^d(\alpha_+)\inv D(d,\bX)}$,
    \item $Q = \E{UU^T} - \E{US_{\pi_0}^T}\E{S_{\pi_0}S_{\pi_0}^T}\inv\E{S_{\pi_0}U^T}.$
\end{itemize}
\end{theorem}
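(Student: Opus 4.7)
The plan is to combine the i.i.d. linearization of Lemma \ref{TaylorExp} with a multivariate CLT and Slutsky's theorem. Lemma \ref{TaylorExp} already reduces the problem to the behavior of $\frac{1}{\sqrt{N}}\SumiN \xi_i$, where $\xi_i := \Ui{\theta_0,\alpha_+,\pi_0} - \E{US_{\pi_0}^T}\E{S_{\pi_0}S_{\pi_0}^T}\inv S_{\pi_0,i}$. By Corollary \ref{cor:ee_centered}, $\E{U}=0$, and since $S_{\pi_0}$ is an MLE score $\E{S_{\pi_0}}=0$, so the $\xi_i$ are centered. Assumption \ref{ass_stat:iid} gives i.i.d.\ structure with finite second moments, so the multivariate CLT yields $\frac{1}{\sqrt{N}}\SumiN \xi_i \xrightarrow[N\to\infty]{\mathcal{L}} \N{0}{\V{\xi}}$; pre-multiplying by the deterministic matrix $-\E{\partialD{U}{\theta}}\inv$ via Slutsky then produces the stated Gaussian limit, provided I can identify $\V{\xi}=Q$ and $-\E{\partialD{U}{\theta}}=J$.

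The first identification is a short variance calculation. Writing $A := \E{US_{\pi_0}^T}\E{S_{\pi_0}S_{\pi_0}^T}\inv$ and using symmetry of $\E{S_{\pi_0}S_{\pi_0}^T}$, the three non-$\E{UU^T}$ terms in $\V{U - AS_{\pi_0}} = \E{UU^T} - A\E{S_{\pi_0}U^T} - \E{US_{\pi_0}^T}A^T + A\E{S_{\pi_0}S_{\pi_0}^T}A^T$ collapse to $-\E{US_{\pi_0}^T}\E{S_{\pi_0}S_{\pi_0}^T}\inv\E{S_{\pi_0}U^T}$, matching $Q$ exactly. For the second identification, differentiating $U_{\theta,\alpha_+,\pi_0}$ at $\theta_0$ produces a cross-term $\Sumd I(d)W(\pi_0)\partialD{D}{\theta}(d,\bX;\theta_0)V^d(\alpha_+)\inv(\vbY-\vbmu(d,\bX;\theta_0))$ and a quadratic-form term $-\Sumd I(d)W(\pi_0)D^T V^d(\alpha_+)\inv D$. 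The quadratic term already equals $-J$ in expectation, so it suffices to show the cross-term has zero mean.

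The chief technical obstacle is this last step: one must invoke the Radon--Nikodym argument underlying Corollary \ref{cor:ee_centered}, which uses the sequential randomization in a SMART together with the identifiability assumptions of Appendix \ref{sec:gen_smart:id_ass} to rewrite each weighted summand as $\E{\partialD{D}{\theta}(d,\bX;\theta_0)V^d(\alpha_+)\inv(\vbY\d-\vbmu(d,\bX;\theta_0))}$; correct specification (Assumption \ref{ass_stat:mm_spec}) then makes the inner conditional expectation vanish after conditioning on $\bX$. With this verified, Slutsky's theorem closes the argument. An analogous but simpler derivation in which the $S_{\pi_0}$ correction drops out gives the known-weight version of the result stated in Section \ref{sec:est:asy_dist}.
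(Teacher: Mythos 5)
Your proposal is correct and follows essentially the same route as the paper's proof: linearize via Lemma \ref{TaylorExp}, verify $\E{\xi}=0$ and $\V{\xi}=Q$ by the same algebraic collapse of the correction terms, apply the CLT and Slutsky, and identify $-\E{\partialD{U}{\theta}}=J$. Your extra care in showing the cross-term involving $\partialD{D}{\theta}$ has zero mean is a harmless elaboration (and is vacuous in the paper's linear-in-$\theta$ setting, where $D$ does not depend on $\theta$); the paper simply asserts the identification under Assumption \ref{ass_stat:mm_spec}.
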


\begin{proof}
We first note that
\begin{align*}
    \E{U - \E{US_{\pi_0}^T}\E{S_{\pi_0}S_{\pi_0}^T}\inv S_{\pi_0}} & = \E{U} - \E{US_{\pi_0}^T}\E{S_{\pi_0}S_{\pi_0}^T}\inv \E{S_{\pi_0}}\\
    & = 0
    \shortintertext{and}
    \V{U - \E{US_{\pi_0}^T}\E{S_{\pi_0}S_{\pi_0}^T}\inv S_{\pi_0}} & = \E{\paren{U - \E{US_{\pi_0}^T}\E{S_{\pi_0}S_{\pi_0}^T}\inv S_{\pi_0}}\paren{U  - \E{US_{\pi_0}^T}\E{S_{\pi_0}S_{\pi_0}^T}\inv S_{\pi_0}}^T} \\
    & = \E{UU^T} - \E{US_{\pi_0}^T}\E{S_{\pi_0}S_{\pi_0}^T}\inv\E{S_{\pi_0}U^T} - 
    \E{US_{\pi_0}^T}\E{S_{\pi_0}S_{\pi_0}^T}\inv \E{S_{\pi_0}U^T}\\
    & \hspace{1cm} + \E{US_{\pi_0}^T}\E{S_{\pi_0}S_{\pi_0}^T}\inv \E{S_{\pi_0}S_{\pi_0}^T}\E{S_{\pi_0}S_{\pi_0}^T}\inv\E{S_{\pi_0}U^T}\\
    & = \E{UU^T} - \E{US_{\pi_0}^T}\E{S_{\pi_0}S_{\pi_0}^T}\inv\E{S_{\pi_0}U^T}\\
    & = Q.
\end{align*}

We employ the Central Limit Theorem to conclude
$$\dfrac{1}{\sqrt{N}} \SumiN \left[\Ui{\theta_0, \alpha_+, \pi_0} 
 - \E{US_{\pi_0}^T}\E{S_{\pi_0}S_{\pi_0}^T}\inv S_{\pi_0, i} \right]
\xrightarrow[N\to\infty]{\mathcal{L}}\N{0}{Q}.$$

We combine the above result, Lemma \ref{TaylorExp}, and Slutsky's Lemma to prove 
$$\sqrt{N}\paren{\thetahat-\theta_0}\xrightarrow[N\to\infty]{\mathcal{L}} \paren{\E{\partialD{U}{\theta}}}\inv C,$$
where $C\sim\N{0}{Q}.$

Under Assumptions \ref{ass_stat:mm_spec} and \ref{ass_stat:lin},
\begin{align*}
    - \E{\partialD{U}{\theta}} & = \E{\Sumd I(d) W(\pi_0)D(d,\bX)^TV^d(\alpha_+)\inv D(d,\bX)}\\
    & =: J.
\end{align*}

Therefore, applying the Delta Method gives us our desired result
$$\sqrt{N}\paren{\thetahat-\theta_0}\xrightarrow[N\to\infty]{\mathcal{L}}\N{0}{J\inv Q J\inv}.$$

\end{proof}

We note that the asymptotic distribution under known weights (presented in Section \ref{sec:est:asy_dist} of the main body) is an immediate corollary of Theorem \ref{thm:asydist_wgt}.

\addtocontents{toc}{\protect\setcounter{tocdepth}{2}}

\end{document}